\newcommand{\cmark}{\text{\ding{51}}}%
\newcommand{\xmark}{\text{\ding{55}}}%
\newcommand{\storm}{\textsc{Storm}}
\newcommand{\RR}{\mathbb{R}}
\newcommand{\RRnn}{\mathbb{R}_{\ge0}}
\newcommand{\NN}{\mathbb{N}}
\newcommand{\Distr}{\mathsf{Distr}}
\newcommand{\supp}{\mathsf{supp}}
\newcommand{\mdp}{\mathcal{M}}
\newcommand{\trsys}{\mathsf{KS}}
\newcommand{\mc}{\mathsf{MC}}
\newcommand{\system}{\mdp}
\newcommand{\env}{\mathcal{E}}
\newcommand{\Act}{\mathsf{Act}}
\newcommand{\Envstates}{S_\env}
\newcommand{\envinit}{\iota_\env}
\newcommand{\Envact}{\mathsf{Act}_\env}
\newcommand{\envtrans}{P_\env}
\newcommand{\sensor}{\mathcal{S}}
\newcommand{\Senstates}{S_\sensor}
\newcommand{\sentrans}{P_\sensor}
\newcommand{\seninit}{\iota_\sensor}
\newcommand{{\sense}}{\mathit{sense}}
\newcommand{\obs}{z}
\newcommand{\Obs}{\mathsf{Z}}
\newcommand{\CH}{\mathsf{CH}}
\newcommand{\obsfun}{\mathsf{obs}}
\newcommand{\avact}{\mathsf{AvAct}}
\newcommand{\belief}{\mathsf{bel}}
\newcommand{\beliefs}{\mathsf{Bel}}
\newcommand{\sensbelief}{\mathsf{est_{\trsys}}}
\newcommand{\estimator}{\mathsf{est_{\mc}}}
\newcommand{\estimatorND}{\mathsf{est_{MDP}}}
\newcommand{\estimatorNDupdate}{\mathsf{est_{MDP}^{up}}}
\newcommand{\Sysstates}{S_\mathcal{J}}
\newcommand{\observationtrace}{\tau}
\newcommand{\tracerisk}[1]{R_{#1}}
\newcommand{\staterisk}{r}
\newcommand{\trace}{\tau}
\newcommand{\obsfuntrace}{\obsfun_\mathsf{tr}}
\renewcommand{\path}{\pi}
\newcommand{\finpaths}[1]{\Pi_{#1}}
\newcommand{\boundedpaths}[2]{\Pi_{#1}^{#2}}
\newcommand{\last}[1]{{#1}_\downarrow}
\newcommand{\sched}{\sigma}
\newcommand{\scheds}[1]{\textit{Sched}(#1)}
\newcommand{\dcscheds}{\Sigma_\text{DC}}
\newcommand{\prpath}[2][]{\mathsf{Pr}_{#1}(#2)}
\newcommand{\joint}{\llbracket \langle \env, \sensor \rangle \rrbracket}
\newcommand{\jointtrans}{P_{\mathcal J}}
\newcommand{\jointinit}{\init_{\mathcal J}}
\newcommand{\init}{\iota}
\newcommand{\act}{\alpha}
\newcommand{\vertices}[1]{\mathcal{V}(#1)}
\newcommand{\benchmark}[1]{\textsc{#1}}
\newcommand{\highlightPASS}[1]{\textbf{#1}}
\newcommand{\highlightFAIL}[1]{\textbf{\color{red}#1\color{black}}}
\newenvironment{problembox}{\begin{mdframed}[backgroundcolor=black!10]\textbf{The Monitoring Problem.}}{\end{mdframed}}
\def\orcidID#1{\smash{\href{http://orcid.org/#1}{\protect\raisebox{-1.25pt}{\protect\includegraphics{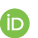}}}}}
\begin{document}

\mainmatter  

\title{
Runtime~Monitors~for~Markov~Decision~Processes}


%
%
\author{Sebastian Junges\orcidID{0000-0003-0978-8466} \and Hazem Torfah\orcidID{0000-0002-9628-1200} \and Sanjit A.~Seshia\orcidID{0000-0001-6190-8707}}

\authorrunning{Sebastian Junges, Hazem Torfah, and Sanjit A. Seshia}
%
\institute{University of California, Berkeley, USA\thanks{This work is partially supported by NSF grants 1545126 (VeHICaL), 1646208 and 1837132, by the DARPA contracts FA8750-18-C-0101 (AA) and FA8750-20-C-0156 (SDCPS), by Berkeley Deep Drive, and by Toyota under the iCyPhy center.}}

%
%

\maketitle
\begin{abstract}
We investigate the problem of monitoring partially observable systems with nondeterministic and probabilistic dynamics. In such systems, every state may be associated with a risk, e.g., the probability of an imminent crash. 
During runtime, we obtain partial information about the system state in form of observations. The monitor uses this information to estimate the risk of the (unobservable) current system state. 
Our results are threefold.
First, we show that extensions of state estimation approaches do not scale due the combination of nondeterminism and probabilities. 
While exploiting a geometric interpretation of the state estimates improves the practical runtime, this cannot prevent an exponential memory blowup. 
Second, we present a tractable algorithm based on model checking conditional reachability probabilities. 
Third, we provide prototypical implementations and manifest the applicability of our algorithms to a range of benchmarks. The results highlight the 
possibilities and boundaries of our novel algorithms. 

\end{abstract}

\section{Introduction}\label{sec:intro}
%
%
%
%
%
%
%
%
%
%
%

Runtime assurance is essential in deployment of safety-critical (cyber-physical) systems~\cite{DBLP:journals/fmsd/SanchezSABBCFFK19,DBLP:series/lncs/BartocciDDFMNS18,DBLP:conf/rv/HavelundR18,DBLP:conf/rv/Seshia19}. 
Monitors observe system behavior and indicate when the system is at risk to violate system specifications. 
A critical aspect in developing reliable monitors is their ability to handle noisy or missing data. In cyber-physical systems, monitors observe the system state via sensors, i.e., sensors are an interface between the system and the monitor. 
A monitor has to base its decision solely on the  obtained sensor output. 
These sensors are not perfect, and not every aspect of a system state can be measured. 


This paper considers a model-based approach to the construction of  monitors for systems with imprecise sensors. 
Consider Fig.~\ref{fig:overview}. 
We assume a model for the environment together with the controller.
Typically, such a model contains both nondeterministic and probabilistic behavior, and thus describes a Markov decision process (MDP): 
In particular, the sensor is a stochastic process~\cite{DBLP:books/daglib/0014221} that translates the environment state into an observation. 
For example, this could be a perception module on a plane that during landing  estimates the movements of an on-ground vehicle, as depicted in Fig.~\ref{fig:landing_plane}.
Due to lack of precise data, the vehicle movements itself may be most accurately described using nondeterminism.   
 
We are interested in the associated \emph{state risk} of the current system state. The state risk may encode, e.g., the probability that the plane will crash with the vehicle within a given number of steps, or the expected time until reaching the other side of the runway.
The challenge is that the monitor cannot directly observe the current system state. Instead, the monitor must infer from a trace of observations the current state risk. This cannot be done perfectly as the system state cannot be inferred precisely. Rather, we want a sound, conservative estimate of the system state. 
More concretely, for a fixed resolution of the nondeterminism, the \emph{trace risk} is the weighted sum over the probability of being in a state having observed the trace, times the risk imposed by this state.
The monitoring problem is to decide whether for any possible scheduler resolving the nondeterminism the trace risk of a given trace exceeds a threshold.

%

Monitoring of systems that contain either only probabilistic or only nondeterministic behavior is typically based on \emph{filtering}.
Intuitively, the monitor then estimates the current system states based on the model.
For purely nondeterministic systems (without probabilities) a set of states needs to be tracked, and purely probabilistic systems (without nondeterminism) require tracking a distribution over states.  This tracking is rather efficient. 
For systems that contain both probabilistic and nondeterministic behavior, filtering is more challenging.
In particular, we show that filtering on MDPs results in an exponential  memory blowup as the monitor must track sets of distributions. 
 We show that a reduction based on the geometric interpretation of these distributions is essential for practical performance, but cannot avoid the worst-case exponential blowup. 
As a tractable alternative to filtering, we rephrase the monitoring problem as the computation of conditional reachability probabilities~\cite{DBLP:conf/tacas/BaierKKM14}.
More precisely, we unroll and transform the given MDP, and then model check this MDP. 
This alternative approach yields a polynomial-time algorithm.
Indeed, our experiments show the feasibility of computing the risk by computing conditional probabilities. 
We also show benchmarks on which filtering is a competitive option.

\subsubsection*{Contribution and outline.}
This paper presents the first runtime monitoring for systems that can be adequately abstracted by a combination of \emph{probabilities and nondeterminism} and where the system state is \emph{partially observable}. 
We describe the use case, show that typical filtering approaches in general fail to deal with this setting, and show that a tractable alternative solution exists.   
In Sec.~\ref{sec:algorithms}, we investigate \emph{forward filtering}, used to estimate the possible system states in partially observable settings. We show that this approach is tractable for systems that have probabilistic \emph{or} nondeterministic uncertainty, but not for systems that have both. 
To alleviate the blowup, Sec.~\ref{sec:pruning} discusses an (often) efficacious pruning strategy and its limitations. 
In Sec.~\ref{sec:unrolling} we consider model checking as a more tractable alternative.
 This result utilises constructions from the analysis of \emph{partially observable MDPs} and model checking MDPs with \emph{conditional properties}. 
In Sec.~\ref{sec:empirical} we present baseline implementations of these algorithms, on top of the open-source model checker \storm, and evaluate their performance. 
The results show that the implementation allows for monitoring of a variety of MDPs, and reveals both strengths and weaknesses of both algorithms.
We start with a motivating example and review related work at the end of the paper. 

\subsubsection*{Motivating example.}
\label{sec:examples}

Consider a scenario where an autonomous airplane is in its final approach, i.e., lined up with a designated runway and descending for landing, see \Cref{fig:landing_plane}. On the ground, close to the runway, maintenance vehicles may cross the runway.
 The airplane tracks the movements of these vehicles and has to decide, depending on the movements of the vehicles, whether to abort the landing. 
 To simplify matters, assume that the airplane (P) is tracking the movement of one vehicle (V) that is about to cross the runway. 
 Let us further assume that P tracks V using a perception module that  can only determine the position of the vehicle with a certain accuracy~\cite{DBLP:conf/nfm/0001HTT19}, i.e., for every position of V, the perception module reports a noisy variant of the position of V. However, it is important to realize that the plane obtains a sequence of these measurements.
\begin{figure}[t!]
\centering
\hspace{0.6cm}
\subfigure[Landing plane scenario]{
\centering
\label{fig:landing_plane}
\quad
	\includegraphics[width=0.25\textwidth]{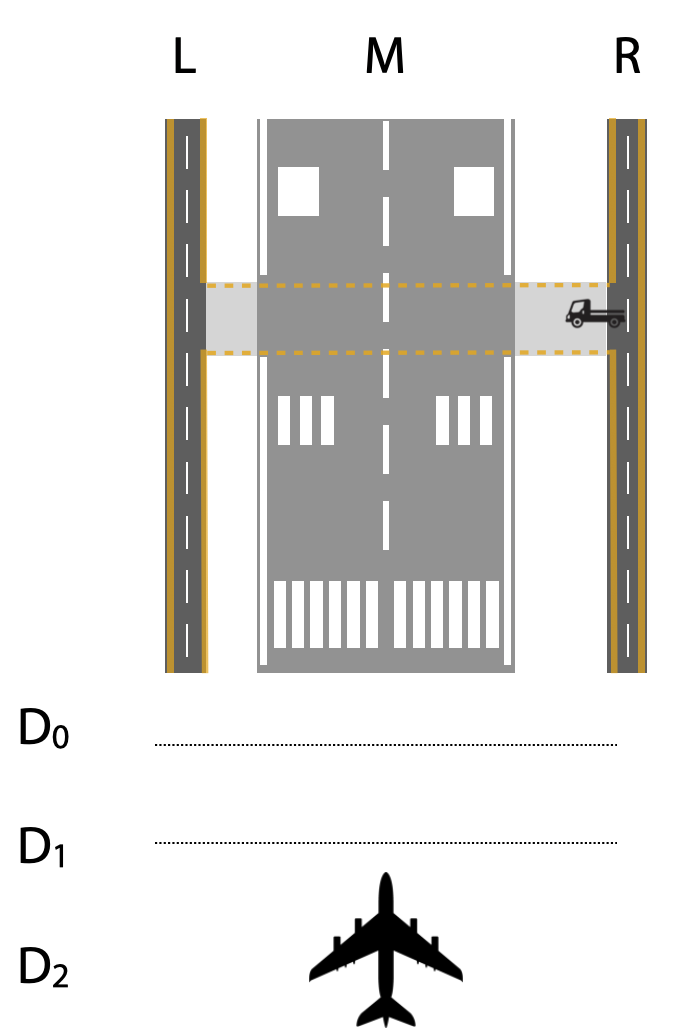}
	\quad
}
\hspace{0.4cm}
\subfigure[Sensor-based Monitors]{
\centering
\label{fig:overview}
	\scalebox{0.8}{
	\begin{tikzpicture}
		
		\node[rectangle, draw=blue, inner sep=5pt] (env) {Environment};
		\node[rectangle, draw=blue, inner sep=5pt,below=1cm of env] (con) {Controller};
		
		\node[rectangle, draw=blue, fill=blue!10!white, inner sep=5pt, below right=0.5cm of env ] (sen) {Sensors};
		
		\node[rectangle, draw=green!50!black, fill=green!10!white, inner sep=10pt,right=1.5cm of sen] (mon) {Monitor};
		\node[below=0.6cm of mon] (monres) {Risk($s$) $\geq \lambda$?};
		
		\draw[->] (env.220) |- node[left] {$s\in \Envstates$} (sen.170);
		\draw[->,dashed] (env.220) -- node[above] {} (con.140);
		
		\draw[->] (con.50) |- node[left] {} (sen.190);
		\draw[->] (con.50) -- node[above] {} (env.310);
		
		\draw[->] (sen.0) -- node[above] {$\obs \in \Obs$} (mon);
		\draw[dashed,->] (sen.0) -- +(0.5,0) |- (con.10);
		\draw[->] (mon) -- (monres);
		
		
		\node[above right = 0.7 and -0.5 of env](veh){\emph{Vehicle}};
		\path[draw,dotted,->](env) to (veh);
		
		\node[below right = 0.9 and -0 of con](plane){\emph{Plane}};
		\path[draw,dotted,->](con.340) to (plane);

		\node[above right = 0.7 and -0.5 of sen](perc){\emph{Perception}};
		\path[draw,dotted,->](sen) to (perc);
		
		\node[fit=(env)(con), draw, dotted, inner sep=5pt,label=below:{World}] {};
		\node[below = 2.5cm of sen]{};
	\end{tikzpicture}}
}
\subfigure[World Model]{
\scalebox{0.65}[0.65]{
\label{fig:worldmodel}
 \begin{tikzpicture}[rounded corners, thick, minimum width=1cm, minimum height=0.5cm]
		\node[draw,fill=red!7](MD2) at (0,0){$\langle M, D_2 \rangle$};
		\node[draw,  right= 1.3cm of MD2](RD2){$\langle R,D_2\rangle$};
		\node[draw, left=1.3cm of MD2](LD2){$\langle L, D_2 \rangle $};
		
		\node[inner sep=0pt,text width=1mm,minimum width =1mm, minimum height=1mm, left=0.7cm of RD2](RD2D2){$\bullet$};
		\node[inner sep=0pt,text width=1mm,minimum width =1mm, minimum height=1mm, left=0.7cm of MD2](MD2D2){$\bullet$};
		\node[inner sep =0pt, minimum width=1mm,minimum height=1mm, below=0.7cm  of MD2](MD2D1){$\bullet$};
		\node[inner sep =0pt, minimum width=1mm,minimum height=1mm, below=0.7cm  of RD2](RD2D1){$\bullet$};

		\node[draw, below=1.5cm of MD2, fill=red!30](MD1){$\langle M, D_1 \rangle$};
		\node[draw,  right= 1.3cm of MD1](RD1){$\langle R,D_1\rangle$};
		\node[draw, left=1.3cm of MD1](LD1){$\langle L, D_1 \rangle $};
		
		\node[inner sep=0pt,text width=1mm,minimum width =1mm, minimum height=1mm, left=0.7cm of RD1](RD1D1){$\bullet$};
		\node[inner sep =0pt, minimum width=1mm,minimum height=1mm, below=0.7cm  of MD1](MD1D0){$\bullet$};
		\node[inner sep =0pt, minimum width=1mm,minimum height=1mm, below=0.7cm  of RD1](RD1D0){$\bullet$};

		\node[draw, below=1.5cm of MD1, fill=red!60](MD0){$\langle M, D_0 \rangle$};
		\node[draw,  right= 1.3cm of MD0](RD0){$\langle R,D_0\rangle$};
		\node[draw, left=1.3cm of MD0](LD0){$\langle L, D_0 \rangle $};
		


		\path[thick,draw,->](RD2) edge [loop right] node[right]{\scriptsize $\{  w \}$} (RD2);
		\path[thick,draw,->](RD2) edge [bend left=10] node[below]{\scriptsize $\emptyset$} (RD2D2);
		\path[dashed,draw,->](RD2D2) edge [bend left=50] node[above]{$\frac{1}{20}$} (RD2.north);
		\path[dashed,draw,->](RD2D2) edge [bend right=70] node[above]{$\frac{19}{20}$} (MD2.80);
		\path[thick,draw,->](RD2) edge [bend left=10] node[left=-0.2cm]{\scriptsize$\{ p \}$} (RD2D1);
		\path[thick,draw,->](RD2.340) edge [bend left=40] node[right=-0.1cm]{\scriptsize$\{ p, w \}$} (RD1.20);
		\path[dashed,draw,->](RD2D1) edge [bend left=50] node[right=-.3cm]{$\frac{1}{2}$} (RD1.40);
		\path[dashed,draw,->](RD2D1) edge [bend right=30] node[above=0cm]{$\frac{1}{2}$} (MD1.100);

		\path[thick,draw,->](MD2) edge [bend left=10] node[below]{\scriptsize$\emptyset$} (MD2D2);
		\path[dashed,draw,->](MD2D2) edge [bend left=50] node[above]{$\frac{1}{2}$} (MD2.110);
		\path[dashed,draw,->](MD2D2) edge [bend right=70] node[above]{$\frac{1}{2}$} (LD2.north);
		\path[thick,draw,->](MD2) edge [bend left=10] node[left=-0.2cm]{\scriptsize$\{ p \}$} (MD2D1);
		\path[dashed,draw,->](MD2D1) edge [bend right=50] node[left=-.3cm]{$\frac{1}{10}$} (MD1.150);
		\path[dashed,draw,->](MD2D1) edge [bend right=30] node[above=0cm]{$\frac{9}{10}$} (LD1.80);

		\path[thick,draw,->](LD2) edge[loop left] node[left=-0.2]{\scriptsize$\emptyset$} (LD2);
		\path[thick,draw,->](LD2) edge[ bend right = 20] node[left=-0.2]{\scriptsize$\{ p \}$} (LD1);
		
		\path[thick,draw,->](RD1) edge [loop right] node[right]{\scriptsize $\{ w \}$} (RD1);
		\path[thick,draw,->](RD1) edge [bend left=10] node[below]{\scriptsize $\emptyset$} (RD1D1);
		\path[dashed,draw,->](RD1D1) edge [bend left=50] node[above]{\scriptsize$\frac{1}{2}$} (RD1.north);
		\path[dashed,draw,->](RD1D1) edge [bend right=70] node[above right=-0.2 and -0.2]{\scriptsize$\frac{1}{2}$} (MD1.40);
		\path[thick,draw,->](RD1) edge [bend left=10] node[left=-0.2cm]{\scriptsize$\{ p \}$} (RD1D0);
		\path[thick,draw,->](RD1.340) edge [bend left=40] node[right=-0.1cm]{\scriptsize$\{ p, w \}$} (RD0.east);
		\path[dashed,draw,->](RD1D0) edge [bend left=50] node[right=-.3cm]{$\frac{99}{100}$} (RD0.40);
		\path[dashed,draw,->](RD1D0) edge [bend right=30] node[above=0cm]{$\frac{1}{100}$} (MD0.80);
		\path[thick,draw,->](MD1) edge [bend right=10] node[below]{\scriptsize$\emptyset$} (LD1);
		\path[thick,draw,->](MD1) edge [bend left=10] node[left=-0.2cm]{\scriptsize$\{ p \}$} (MD1D0);
		\path[dashed,draw,->](MD1D0) edge [bend right=50] node[above left=-0.2 and -.2cm]{$\frac{1}{100}$} (MD0.120);
		\path[dashed,draw,->](MD1D0) edge [bend right=30] node[above=0cm]{$\frac{99}{100}$} (LD0.80);

		\path[thick,draw,->](LD1) edge[loop left] node[left=-0.2]{\scriptsize$\emptyset$} (LD1);
		\path[thick,draw,->](LD1) edge[ bend right = 20] node[left=-0.2]{\scriptsize$\{ p \}$} (LD0);

		\path[thick,draw,->](RD0) edge [loop below=10] node[right]{\scriptsize $^{*}$} (RD0);
		\path[thick,draw,->](MD0) edge [bend right=10] node[below]{\scriptsize $^{*}$} (LD0);
		\path[thick,draw,->](LD0) edge [loop below=10] node[right]{\scriptsize $^{*}$} (LD0);

	\end{tikzpicture}
}
}
\subfigure[Partial Sensor Model]{
\label{fig:sensormodel}
\scalebox{0.75}{
	\begin{tikzpicture}[minimum width=.4cm]
		\node[thick,draw,circle](sense)at(0,0){\small $\mathit{sense}$};
		
		\node[inner sep =0pt, minimum width=1mm,minimum height=1mm, above right = .9cm and .9cm of sense](RD2){};
		\node[above right = -0.2cm and -0.3 of RD2]{\scriptsize$\langle R,D_2\rangle/[M_o\colon\frac{1}{2}, R_o\colon\frac{1}{2} ]$};
		\path[thick,draw](sense) edge [bend right=10] node[below]{} (RD2);
		\path[thick,draw,->](RD2) edge [bend right=10] node[below]{} (sense);
		
		\node[inner sep =0pt, minimum width=1mm,minimum height=1mm, above = 1.3cm  of sense](MD2){};
		\node[above right= -0.1cm and -0.1 of MD2]{\scriptsize$\langle M,D_0\rangle/[L_o\colon\frac{1}{100} ,M_o\colon\frac{98}{100}, R_o\colon\frac{1}{100} ]$};
		\path[thick,draw](sense) edge [bend right=10] node[below]{} (MD2);
		\path[thick,draw,->](MD2) edge [bend right=10] node[below]{} (sense);
		
		
		\node[inner sep =0pt, minimum width=1mm,minimum height=1mm, right = 1.3cm  of sense](RD1){};
		\node[right = -0.2 of RD1]{\scriptsize$\langle R,D_1\rangle/[M_o\colon\frac{1}{3}, R_o\colon\frac{2}{3} ]$};
		\path[thick,draw](sense) edge [bend right=10] node[below]{} (RD1);
		\path[thick,draw,->](RD1) edge [bend right=10] node[below]{} (sense);
		
		\node[inner sep =0pt, minimum width=1mm,minimum height=1mm, below right = 0.8cm and 1.0cm  of sense](RD0){};
		\node[right = -0.2 of RD0]{\scriptsize$\langle R,D_0\rangle/[M_o\colon\frac{1}{20}, R_o\colon\frac{19}{20} ]$};
		\path[thick,draw](sense) edge [bend right=10] node[below]{} (RD0);
		\path[thick,draw,->](RD0) edge [bend right=10] node[below]{} (sense);

		\node[inner sep =0pt, minimum width=1mm,minimum height=1mm, below right= 1.3 and  0.2cm  of sense](MD1){};
		\node[below right = -0.5 and -0.1 of MD1]{\scriptsize$\langle M,D_1\rangle/[L_o\colon\frac{1}{8} ,M_o\colon\frac{3}{4}, R_o\colon\frac{1}{8} ]$};
		\path[thick,draw](sense) edge [bend right=10] node[below]{} (MD1);
		\path[thick,draw,->](MD1) edge [bend right=10] node[below]{} (sense);

		\node[inner sep =0pt, minimum width=1mm,minimum height=1mm, below left = 1.3cm and 0.3cm  of sense](MD0){};
		\node[below left = -.0 and -2.9 of MD0]{\scriptsize$\langle L,D_0\rangle/[L_o\colon\frac{19}{20}, M_o\colon\frac{1}{20} ]$};
		\path[thick,draw](sense) edge [bend right=10] node[below]{} (MD0);
		\path[thick,draw,->](MD0) edge [bend right=10] node[below]{} (sense);
		
		
	
	\draw[dotted, thick] (-.6,-.35) arc (220:150:.55);
	
	\node[below=1.3cm of sen]{};
				
	\end{tikzpicture}}
}
\caption{A probabilistic world and sensor model represented by two MDPs for the scenario of an airplane in landing approach with on-ground vehicle movements.}
\label{fig:runningExample}
\end{figure}

\Cref{fig:runningExample} illustrates the dynamics of the scenario. The world model describing the movements of V and P is given in \Cref{fig:worldmodel}, where $D_2,D_1$, and $D_0$ define how close P is to the runway, and $R$, $M$, and $L$ define the position of V.  Depending on what information V perceives about P, given by the atomic proposition $\{(p)\textit{rogress}\}$, and what commands it receives $\{(w)\textit{ait}\}$, it may or may not cross the runway.  The perception module receives the information about the state of the world and reports with a certain accuracy (given as a probability) the position of V. The (simple) model of the perception module is given in \Cref{fig:sensormodel}. 
For example, if P is in zone $D_2$ and V is in $R$ then there is high chance that the perception module returns that V is on the runway. The probability of incorrectly detecting V's position reduces significantly when  P is in $D_0$.

A monitor responsible for making the decision to land or to perform a go-around based on the information computed by the perception module, must take into consideration the accuracy of this returned information. For example, if the sequence of sensor readings passed to the monitor is the sequence $\trace = R_o\cdot R_o\cdot M_o$, and each state is mapped to a certain risk, then how risky is it to land after seeing $\trace$? For example, if with high probability the world is in state  $\langle M,D_0\rangle$, a very risky state, then the plane should go around. 
In the paper, we address the question of computing the risk based on this observation sequence.
We will use this example as our running example.

\section{Monitoring under Imprecise Sensors}
\label{sec:formalization}

In this section, we formalize the problem of monitoring with imprecise sensors when both the world and sensor models are given by MDPs. 
We start with a recap of MDPs, define the monitoring problem for MDPs, and finally show how the dynamics of the system under inspection can be modeled by an MDP defined by the composition of two MDPs of the sensors and world model of the system.

\subsubsection{Markov decision processes.}
For a countable set $X$, let $\Distr(X)\subset (X \rightarrow [0,1])$ define all distributions over $X$, i.e., for $d \in \Distr(X)$ it holds that $\Sigma_{x\in X} d(x) = 1$. For $\mu \in \Distr(X)$, let the \emph{support} of $\mu$ be defined by $\supp(\mu) \colonequals \{ x \mid \mu(x) > 0 \}$.
We call a distribution $\mu$ \emph{Dirac}, if $|\supp(\mu)| = 1$.

\begin{definition}[Markov decision process]
A \emph{Markov decision process} is a tuple $\mdp= \langle S,  \iota, \Act, P, 
 \Obs, \obsfun \rangle$,
where $S$ is a finite set of \emph{states}, $\iota\in \Distr(S)$ is an \emph{initial distribution}, $\Act$ is a finite set of \emph{actions}, $P\colon  S \times \Act \rightarrow \Distr(S)$ is a \emph{partial transition function},
$\Obs$ is a finite set of \emph{observations}, and $\obsfun\colon S \rightarrow \Distr({\Obs})$ is a \emph{observation function}. 
\end{definition}
\remark{The observation function can also be defined as a state-action observation function $\obsfun\colon S\times \Act \rightarrow \Distr(\Obs)$. MDPs with state-action observation function can be easily transformed into equivalent MDPs with a state observation function using auxiliary states~\cite{DBLP:journals/ai/ChatterjeeCGK16}. Throughout the paper we use state-action observations to keep (sensor) models concise.\\}

We denote $\avact(s) = \{ \act \mid P(s,\act) \neq \bot \}$. W.l.o.g., $|\avact(s)| \geq 1$.
 If all distributions in $\mdp$ are Dirac, we refer to $\mdp$ as a \emph{Kripke structure} ($\trsys$). If $|\avact(s)| = 1$ for all $s \in S$, we refer to $\mdp$ as a \emph{Markov chain} ($\mc$).
  When $\Obs = S$, we refer to $\mdp$ as \emph{fully observable} and omit $\Obs$ and $\obsfun$ from its definition. 
A \emph{finite path} in an MDP $\mdp$ is a sequence $\pi= s_0 a_0 s_1 \dots s_n\in S \times \big(\Act \times S \big)^{*}$ such that for every $0\leq i< n$ it holds that $P(s_i,a_i)(s_{i+1})>0$ and $\init(s_0) > 0$. We denote the set of finite paths of $\mdp$ by $\finpaths{\mdp}$. The \emph{length} of the path is given by the number of actions along the path. The set $\boundedpaths{\mdp}{n}$ for some $n \in \NN$ denotes the set of finite paths of length $n$.  We use  $\last{\pi}$ to denote the last state in $\pi$. We omit $\mdp$ whenever it is clear from the context.
A \emph{trace} is a sequence of observations $\trace = \obs_0 \hdots \obs_n \in \Obs^{+}$.
Every path induces a distribution over traces.
As standard, we need resolve any nondeterminism by means of a scheduler. 
\begin{definition}[Scheduler]
	 A \emph{scheduler} for MDP $\mdp$ is a function $\sched\colon \finpaths{\mdp} \rightarrow \Distr(\Act)$  with 
	 $\supp(\sched(\pi)) \subseteq \avact(\last{\path})$ for every $\pi \in \finpaths{\mdp}$.
	 \end{definition}
We use $\scheds{\mdp}$ to denote the set of schedulers. For a fixed scheduler  $\sched \in \scheds{\mdp}$, the probability $\prpath[\sched]{\path}$ of a path $\path$ (under the scheduler $\sched$) is the product of the transition probabilities in the induced Markov chain. For more details we refer the reader to~\cite{BK08}.

\subsubsection{Formal Problem Statement.}
Our goal is to determine the risk that the system is exposed to having observed a trace $\trace  \in \Obs^{+}$. Let $\staterisk \colon S\rightarrow \RRnn$ map states in $\system$ to some risk in $\RRnn$.  We call $\staterisk$ a \emph{state-risk function} for $\system$. This function maps to the risk that is associated with being in every state. 
For example, in our experiments, we flexibly define the state risk using the (expected reward extension of the) temporal logic PCTL~\cite{BK08}, to define the probability of reaching a fail state. E.g., we can define risk as the probability to crash within $H$ steps. The use of expected rewards allows for even more flexible definitions. 

Intuitively, to compute this risk of the system we need to determine the current system state having observed $\trace$ considering the probabilistic and nondeterministic context. 
Towards this, we formalize the (conditional) probabilities and risks of paths and traces.  Let $\prpath[\sched]{\path \mid \trace}$ define the probability of a path $\path$, under a scheduler $\sched$, having observed $\trace$. 
Since a scheduler may define many paths that induce the observation trace $\trace$, we are interested in the weighted risk over all paths, i.e.,  $\sum_{\path \in \boundedpaths{\system}{|\tau|}} \prpath[\sched]{\path \mid \trace} \cdot \staterisk(\last{\path})$.
The monitoring problem for MDPs then 
conservatively over-approximate the risk of a trace by assuming an adversarial scheduler, that is, by taking the supremum risk estimate over all schedulers\footnote{We later see in Lemma~\ref{lem:dcschedssuffice} that this is indeed a maximum.}.

\begin{problembox}
\label{prob:quant}
Given an MDP  $\mdp$, a state-risk $\staterisk\colon S\rightarrow \RRnn$, 
 an observation trace $\observationtrace \in \Obs^+$, 
and a threshold $\lambda \in [0,\infty)$, 
decide $\tracerisk{\staterisk}(\observationtrace) > \lambda$, 

\smallskip
\noindent where the \emph{weighted risk function} $\tracerisk{\staterisk}\colon \Obs^+ \rightarrow \RRnn$ is defined as 	\[ \tracerisk{\staterisk}(\observationtrace) \quad\colonequals\quad \sup_{\sched \in \scheds{\system}} \sum_{\path \in \boundedpaths{\system}{|\tau|}} \prpath[\sched]{\path \mid \trace} \cdot \staterisk(\last{\path}).  \]
\end{problembox}

\noindent The conditional probability $\prpath[\sched]{\path \mid \trace}$ can be characterized using Bayes' rule\footnote{for conciseness we assume throughout the paper that $\frac{0}{0}=0$}: 
\[ \prpath[\sched]{\path \mid \trace} = \frac{\prpath{\trace \mid \path} \cdot \prpath[\sched]{\path}}{\prpath[\sched]{\trace}}.\]
The probability $\prpath{\trace \mid \pi}$ of a trace $\trace$ for a fixed path $\pi$ is $\obsfuntrace(\pi)(\trace)$, where  
$$\obsfuntrace(  s ) \colonequals \obsfun(s),   \quad
 \obsfuntrace(\pi \alpha s') \colonequals \{ \trace \cdot \obs \mapsto \obsfuntrace(\pi)(\trace) \cdot \obsfun(s')(\obs) \},
$$
when $|\path| = |\trace|$, and $\obsfuntrace(\path)(\trace) = 0$ otherwise. 
The probability $\prpath[\sched]{\trace}$ of a trace $\trace$ is $ \sum_\pi \prpath[\sched]{\pi} \cdot \prpath{\trace \mid \pi}$.



 We call the special variant with $\lambda=0$ the \emph{qualitative monitoring problem}.
The problems are (almost) equivalent on Kripke structures, where considering a single path to an adequate state suffices. Details are given in the appendix. 
\begin{lemma}
\label{lem:qualonkripke}
	For Kripke structures the monitoring and qualitative monitoring problems are logspace interreducible. 
\end{lemma}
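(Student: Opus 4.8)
The plan is to first derive a closed form for the weighted risk on a Kripke structure and then move between the two thresholds by reshaping the state-risk function. On a $\trsys$ every distribution is Dirac, so both the transition function and the observation function are deterministic. Consequently, for a path $\pi$ the factor $\prpath{\trace \mid \pi} = \obsfuntrace(\pi)(\trace)$ equals $1$ when the (deterministic) observation sequence of $\pi$ coincides with $\trace$ --- in which case I call $\pi$ \emph{consistent} with $\trace$ --- and $0$ otherwise, while $\prpath[\sched]{\pi}$ reduces to the product of the scheduler's action probabilities along $\pi$. I would establish that for every trace $\trace$,
\[ \tracerisk{\staterisk}(\trace) \;=\; \max\bigl\{\, \staterisk(\last{\pi}) \;\mid\; \pi \in \boundedpaths{\system}{|\trace|} \text{ consistent with } \trace \,\bigr\}, \]
with the maximum over the empty set read as $0$ (matching the $\tfrac{0}{0}=0$ convention). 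This is exactly the ``a single path to an adequate state suffices'' intuition stated in the text.

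Both inequalities are short. For ``$\le$'', fix any $\sched$ with $\prpath[\sched]{\trace}>0$. Only consistent paths carry nonzero weight, and the numbers $\prpath[\sched]{\pi \mid \trace}$ sum to $1$ over them, so $\sum_{\pi} \prpath[\sched]{\pi \mid \trace}\,\staterisk(\last{\pi})$ is a convex combination of the terminal risks of consistent paths and is bounded by their maximum; schedulers with $\prpath[\sched]{\trace}=0$ contribute $0$. For ``$\ge$'', pick a consistent path $\pi^{\star}$ of maximal terminal risk (if any exists). Since transitions are Dirac, the deterministic, history-dependent scheduler that, along each prefix $s_0 \cdots s_i$ of $\pi^{\star}$, selects the action leading to $s_{i+1}$ puts probability $1$ on $\pi^{\star}$; hence $\prpath[\sched]{\trace}=1$ and the conditional expectation equals $\staterisk(\last{\pi^{\star}})$. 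This simultaneously proves the formula and shows the supremum is attained.

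The characterization turns both decision problems into a reachability question: $\tracerisk{\staterisk}(\trace) > \lambda$ holds iff some path consistent with $\trace$ ends in a state of risk strictly above $\lambda$. Reducing the qualitative problem to the general one is the identity together with the output threshold $0$, clearly logspace. For the converse, given $(\trsys, \staterisk, \trace, \lambda)$ I keep $\trsys$ and $\trace$, set the threshold to $0$, and replace $\staterisk$ by $\staterisk'$ with $\staterisk'(s)=1$ if $\staterisk(s) > \lambda$ and $\staterisk'(s)=0$ otherwise. Then a consistent path with $\staterisk(\last{\pi}) > \lambda$ exists iff one with $\staterisk'(\last{\pi}) > 0$ does, i.e.\ $\tracerisk{\staterisk}(\trace) > \lambda$ iff $\tracerisk{\staterisk'}(\trace) > 0$. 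Reading $\staterisk$ as an explicit table of rationals, $\staterisk'$ is produced by a single pass that compares each value against $\lambda$ and emits one bit, and $\trsys,\trace$ are copied verbatim; since comparison of rationals and copying are logspace, so is the reduction. The two directions give the claimed logspace interreducibility. The only genuinely delicate step is the ``$\ge$'' direction of the characterization: it hinges on the fact that, because transitions are deterministic, a fully informed scheduler can concentrate all mass on one best consistent path and still keep $\prpath[\sched]{\trace} > 0$, so that conditioning is well defined; everything else is bookkeeping around the $\tfrac{0}{0}=0$ convention and the chosen encoding of $\staterisk$.
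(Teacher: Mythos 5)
Your proposal is correct and takes essentially the same route as the paper: the qualitative problem embeds into the quantitative one as the special case $\lambda = 0$, and the quantitative problem reduces to the qualitative one by replacing $\staterisk$ with the indicator $\staterisk'(s) = 1$ iff $\staterisk(s) > \lambda$, which is exactly the paper's reduction. Your explicit closed-form characterization $\tracerisk{\staterisk}(\trace) = \max\{\staterisk(\last{\pi}) \mid \pi \text{ consistent with } \trace\}$ is the same fact the paper invokes via its observation that conditional path probabilities on a Kripke structure are concentrated on single paths (and proves in detail for Lemma~\ref{lem:kripkestructest}), so you have merely made the supporting argument more self-contained.
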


In the next sections we present two types of algorithms for the monitoring problem. 
The first algorithm is based on the widespread (forward) filtering approach \cite{forwardfiltering}. The second is new algorithm based on model checking conditional probabilities. While filtering approaches are efficacious in a purely nondeterministic or a purely probabilistic setting, it does not scale on models such as MDPs that are both probabilistic and nondeterministic. In those models, model checking provides a tractable alternative.  However, we first connect the problem statement more formally to our motivating example.

\subsubsection{An MDP defining the system dynamics.}
Before we solve the monitoring problem for MDPs, we show how the weighted risk for a system given by a world and sensor model can be formalized as a monitoring problem for MDPs. To this end, we define the dynamics of the world and sensors that we use as basis for our monitor as the following joint MDP.

For a fully observable world MDP $\env=\langle \Envstates, \envinit, \Envact, \envtrans 
\rangle $ and a sensor MDP $\sensor=\langle \Senstates, \seninit, \Envstates, \sentrans, \Obs, \obsfun \rangle$, where $\obsfun$ is state-action based, 
the \emph{inspected system} is defined by an MDP $ \joint = \langle \Sysstates, \jointinit, \Envact, \jointtrans,
 \Obs, \obsfun_{\mathcal J} \rangle$ being the synchronous composition of $\env$ and $\sensor$:
\begin{compactitem}
\item $\Sysstates \colonequals \Envstates \times \Senstates$,
\item $\jointinit$ is defined as $\jointinit(\langle u,s \rangle) \colonequals \envinit(u)\cdot \seninit(s)$ for each $u\in \Envstates$ and $s\in \Senstates$,
\item  $\jointtrans \colon \Sysstates \times \Envact \rightarrow \Distr(\Sysstates)$ such that 
for all $\langle u,s \rangle \in \Sysstates$ and $\act \in \Envact$;
\[ \jointtrans(\langle u,s\rangle, \act) = 	 d_{u,s} \in \Distr(\Sysstates),
\]
where for all $u'\in \Envstates$ and $s'\in \Senstates$:
$d_{u,s}(\langle u',s'\rangle)=  \envtrans(u,\act)(u') \cdot \sentrans(s,u)(s')$,
\item $\obsfun_{\mathcal J}\colon \Sysstates \rightarrow \Distr(\Obs)$ with $\obsfun_\mathcal{J}\colon \langle u,s\rangle \mapsto \obsfun(s,u)$.
\end{compactitem}

\begin{figure}[t]

\scalebox{0.85}[0.85]{
\begin{tikzpicture}[rounded corners, thick, minimum width=1cm, minimum height=0.5cm]
	\node[draw,initial, initial text= $\env:$] (u0) {$\langle R,D_2\rangle$};
	\node[draw,right=of u0] (u1) {$\langle R,D_1\rangle$};
	\node[draw,right=of u1] (u2) {$\langle M, D_1 \rangle$};
	\node[draw,right=of u2] (u3) {$\langle M,D_0 \rangle$};
	\node[draw,right=of u3] (u4) {$\langle L, D_0\rangle$};
	\node[initial, initial text=$\sensor:$,state,below=0.6cm of u0, xshift=-0.9cm] (s0) {\small $\mathit{sense}$};
	\node[state,below=0.6cm of u1, xshift=-0.9cm] (s1) {\small $\mathit{sense}$};
	\node[state,below=0.6cm of u2, xshift=-0.9cm] (s2) {\small $\mathit{sense}$};
	\node[state,below=0.6cm of u3, xshift=-0.9cm] (s3) {\small $\mathit{sense}$};
	\node[state,below=0.6cm of u4, xshift=-0.9cm] (s4) {\small $\mathit{sense}$};
	\node[right=2cm of s4, xshift=-0.9cm] (s5) {};
	
	\node[below=1.8cm of u0, xshift= 0.2cm, align= center] (z0) { $M_o:\frac{1}{2}$\\$R_o:\frac{1}{2}$};
	\node[below=1.8cm of u1, xshift= 0.2cm, align= center] (z1) {$M_o:\frac{1}{3}$\\$R_o:\frac{2}{3}$};
	\node[below=1.8cm of u2, xshift= 0.2cm, align= center] (z2) {$L_o: \frac{1}{8}$\\$M_o:\frac{3}{4}$\\$R_o:\frac{1}{8}$};
	\node[below=1.8cm of u3, xshift= 0.2cm, align= center] (z3) {$L_o: \frac{1}{100}$\\$M_o:\frac{98}{100}$\\$R_o:\frac{1}{100}$};
	\node[below=1.8cm of u4, xshift= 0.2cm, align= center] (z4) {$L_o: \frac{19}{20}$\\$M_o:\frac{1}{20}$};

	\draw[->] (u0) -- node[above] {$\{p\}$} (u1);
	\draw[->] (u1) -- node[above] {$\emptyset$} (u2);
	\draw[->] (u2) -- node[above] {$\{p\}$} (u3);
	\draw[->] (u3) -- node[above] {$\{p\}$} (u4);
	\draw[->] (s0) -- node (a0)[above=.05] {\tiny $\langle R,D_2\rangle$} (s1);
	\draw[dotted] (u0) -- (a0);
	\draw[->] (s1) -- node (a1)[above=.05] {\tiny $\langle R,D_1\rangle$} (s2);
	\draw[dotted] (u1) -- (a1);
	\draw[->] (s2) -- node (a2) [above=.05] {\tiny $\langle M, D_1 \rangle$} (s3);
	\draw[dotted] (u2) -- (a2);
	\draw[->] (s3) -- node (a3) [above=.05] {\tiny $\langle M,D_0 \rangle$} (s4);
	\draw[dotted] (u3) -- (a3);
	\draw[->] (s4) -- node (a4) [above=.05] {\tiny $\langle L, D_0\rangle$} (s5);
	\draw[dotted] (u4) -- (a4);

	\node[anchor=east, left=-0.3 of z0 ,xshift= -1cm] {$\obsfun_{\mathcal J}:$};
\end{tikzpicture}
}
\caption{A run with its observations of the inspected system $\joint$ where $\env$ and $\sensor$ are the models given in \Cref{fig:runningExample}. }
\label{fig:paths}
\end{figure}

In \Cref{fig:paths} we illustrate a run of $\joint$  for the world and sensor MDPs presented in \Cref{fig:runningExample}. We particularly show the observations of the joint MDP given by the distributions over the observations  for each transition in the run (we omitted the probabilistic transitions for simplicity).  
The observations  of the MDP $\system$ present the output of the sensor upon a path through $\mdp$. These observations in turn are the inputs to a monitor on top of the system. The role of the monitor is then to compute the risk of being in a critical state based on the previously received observations. 

%

\section{Forward Filtering for State Estimation}
\label{sec:algorithms}

We start by showing why standard forward filtering does not scale well on MDPs. 
We briefly show how filtering can be used to solve the monitoring problem for  purely nondeterministic systems  (Kripke structures) or purely probabilistic systems  (Markov Chains). Then, we show why for MDPs, the forward filtering needs to manage, although finite but an exponential set of distributions. In \Cref{sec:pruning} we present a new improved variant of forward filtering for MDPs based on filtering with vertices of the convex hull. In \Cref{sec:unrolling} we present a new polynomial-time model checking-based algorithm for solving the problem. 



\subsection{State estimators for Kripke structures.}
\label{sec:operationalmodel}

For Kripke structures, we maintain a set of possible states that agree with the observed trace. This set of states is  inductively characterized by the function $\sensbelief\colon \Obs^+ \rightarrow 2^{S}$ which we define formally below. 
For an observation trace $\trace$, $\sensbelief(\trace)$ defines the set of states that can be reached with positive probability. This set can be computed by a forward state traversal \cite{forwardrechability}. 
To illustrate how  $\sensbelief(\trace)$ is computed for $\trace$, 
consider the underlying Kripke structure of the inspected system $\joint$ for our running example in \Cref{fig:runningExample} (to make this a Kripke structure, we remove the probabilities). Consider further the observation trace $\trace = R_o \cdot M_o \cdot L_o$. 
 Since $\joint$ has only one initial state $\langle \langle R, D_2\rangle, \mathit{sense}\rangle$ and $R_o$ is observable with a positive probability in this state, $\sensbelief(R_o)= \{\langle \langle R, D_2\rangle, \mathit{sense}\rangle\}$. 
As $M_o$ is observed next, $\sensbelief(R_o \cdot M_o)$ computes the states reached from  $\langle \langle R, D_2\rangle, \mathit{sense}\rangle$ and where $M_o$ can be observed with a positive probability, i.e., $\sensbelief(R_o M_o)= \{ \langle \langle R, D_1\rangle, \mathit{sense}\rangle, \langle \langle R, M_1\rangle, \mathit{sense}\rangle\}$. 
Finally the current state having observed $R_o \cdot M_o \cdot L_o$ may be one of the the states $\sensbelief(\trace)= \{ \langle \langle M, D_1\rangle, \mathit{sense}\rangle$, $\langle \langle L, D_1\rangle, \mathit{sense}\rangle$, $\langle \langle L, D_0\rangle, \mathit{sense}\rangle$, $\langle \langle M, D_0\rangle, \mathit{sense}\rangle\}$, which especially shows that we might be in the high-risk world state $\langle M, D_0\rangle$. 

\begin{definition}[$\trsys$ state estimator]\label{def:ksstateest}
For $\trsys= \langle S,  \iota, \Act, P,
 \Obs, \obsfun \rangle$, the state estimation function $\sensbelief \colon \Obs^+ \rightarrow 2^{S}$ is defined as   \vspace{-.1cm}
\begin{align*}
 	&\sensbelief(z) \colonequals \{s \in S \mid \init(s)>0 \wedge \obsfun(s)(z) > 0 \}\\
	&\sensbelief(\trace\cdot \obs)  \colonequals \Big\{ s' \in S \mid \exists s \in \sensbelief(\trace), \exists \act \in \Act, P(s,\act)(s') > 0 \wedge \obsfun(s')(z)>0 \Big\}.
\end{align*}
\end{definition}

For a Kripke structure $\trsys$ and a given trace $\trace$, the monitoring problem can be solved by computing $\sensbelief(\trace)$, using \cite{forwardrechability} and \Cref{lem:qualonkripke}.

\begin{lemma}
\label{lem:kripkestructest}
 For a Kripke stucture  $\trsys=\langle S,  \iota, \Act, P,
 \Obs, \obsfun \rangle$, a trace $\observationtrace\in \Obs^+$, and a state-risk function $r\colon S \rightarrow \RRnn $, it holds that $\tracerisk{\staterisk}(\observationtrace) = \max \limits_{s\in \sensbelief(\trace)} \staterisk(s)$. Computing $\tracerisk{\staterisk}(\observationtrace)$  requires time $\mathcal{O}(|\tau| \cdot |P|)$  and space  $\mathcal{O}(|S|)$. 
\end{lemma}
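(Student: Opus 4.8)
=== PROOF PROPOSAL ===

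The plan is to prove the claim in two parts: first the characterization $\tracerisk{\staterisk}(\observationtrace) = \max_{s \in \sensbelief(\trace)} \staterisk(s)$, and then the complexity bounds.

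For the characterization, I would start from the definition of $\tracerisk{\staterisk}(\observationtrace)$ as a supremum over schedulers of a weighted sum of state risks. The key observation is that on a Kripke structure all transition distributions are Dirac, so probabilities degenerate to $0$ or $1$ along any fixed scheduler, and the conditional probability $\prpath[\sched]{\path \mid \trace}$ is simply a (possibly randomized) weighting that sums to $1$ over those paths of length $|\trace|$ that are compatible with $\trace$ and have positive probability under $\sched$. First I would argue the easy inequality $\tracerisk{\staterisk}(\observationtrace) \leq \max_{s \in \sensbelief(\trace)} \staterisk(s)$: every path $\path$ with $\prpath[\sched]{\path \mid \trace} > 0$ must satisfy $\obsfuntrace(\path)(\trace) > 0$ and $\prpath[\sched]{\path} > 0$, which by induction on the definition of $\sensbelief$ forces $\last{\path} \in \sensbelief(\trace)$; hence the weighted average is bounded by the maximum risk over $\sensbelief(\trace)$. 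For the reverse inequality I would invoke Lemma~\ref{lem:qualonkripke} (or argue directly): by a forward reachability argument \cite{forwardrechability}, for any $s^{*} \in \sensbelief(\trace)$ there exists a path $\path^{*}$ of length $|\trace|$ with $\last{\path^{*}} = s^{*}$, with $\obsfuntrace(\path^{*})(\trace) > 0$ and realizable with positive probability under some deterministic scheduler $\sched^{*}$ that commits to the actions along $\path^{*}$. Choosing $s^{*}$ to be the arg-max of $\staterisk$ over $\sensbelief(\trace)$ and designing $\sched^{*}$ so that $\path^{*}$ is the only $\trace$-compatible path realized (or that all realized $\trace$-compatible paths end in maximally risky states) makes the conditional expectation equal $\staterisk(s^{*})$, establishing $\tracerisk{\staterisk}(\observationtrace) \geq \max_{s \in \sensbelief(\trace)} \staterisk(s)$ and also showing the supremum is attained.

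For the complexity, I would observe that $\sensbelief(\trace)$ is computed iteratively over the $|\trace|$ observation symbols. Each iteration, computing $\sensbelief(\trace \cdot \obs)$ from $\sensbelief(\trace)$, amounts to one forward step: for each state in the current belief set, follow its outgoing transitions and keep the successors that emit $\obs$ with positive probability. A careful accounting bounds the total work by $\mathcal{O}(|\tau| \cdot |P|)$, since in the worst case every iteration touches every transition in the structure once, and we have $|\tau|$ iterations; checking the observation condition $\obsfun(s')(z) > 0$ is a constant- or $P$-bounded lookup that is absorbed into the $|P|$ factor. The space bound $\mathcal{O}(|S|)$ follows because at any point we only store the current belief set $\sensbelief(\trace) \subseteq S$ (and the next one under construction), both subsets of $S$, and we need not retain the whole trace or intermediate sets once consumed.

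The main obstacle will be the reverse inequality in the characterization, specifically making precise that a scheduler can be chosen whose induced conditional distribution $\prpath[\sched]{\cdot \mid \trace}$ concentrates the weight on (or among) maximum-risk endpoints, thereby realizing the supremum as a maximum. This requires care because the conditional probability is a ratio (Bayes' rule) whose denominator $\prpath[\sched]{\trace}$ itself depends on $\sched$, so one must verify that the chosen deterministic scheduler yields a well-defined, positive denominator and that no lower-risk $\trace$-compatible path is forced to carry positive conditional weight. The cleanest route is to lean on Lemma~\ref{lem:qualonkripke}: since on Kripke structures a single witnessing path to an adequate state suffices, we reduce the weighted-risk supremum to the existence of one reachable maximal-risk state in $\sensbelief(\trace)$, sidestepping the need to reason about arbitrary mixtures.
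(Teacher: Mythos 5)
Your proposal is correct and takes essentially the same route as the paper's proof: both rest on (i) the induction identifying $\sensbelief(\trace)$ with the set of last states of $\trace$-compatible positive-probability paths, and (ii) the fact that on a Kripke structure a deterministic scheduler realizes a unique path of each length, so conditional path probabilities are $0$ or $1$ and weighted sums are bounded by maxima, with the complexity argument (forward traversal, $\mathcal{O}(|P|)$ per observation, storing only the current belief set) matching as well. The obstacle you anticipate in the reverse inequality dissolves for exactly this reason---under the deterministic scheduler $\sched^{*}$ committing to $\path^{*}$, no other length-$|\trace|$ path has positive probability, so $\prpath[\sched^{*}]{\trace} = \obsfuntrace(\path^{*})(\trace) > 0$ and the conditional weight concentrates on $\path^{*}$ automatically; the paper packages these same facts as a chain of equalities (supremum over all schedulers equals supremum over deterministic ones, which equals the maximum endpoint risk over deterministic schedulers, which equals $\max_{s \in \sensbelief(\trace)} \staterisk(s)$) rather than as your two inequalities.
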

A more detailed proof can be found in the appendix. 

	The time and space requirements follow directly from the inductive definition of $\sensbelief$ which resembles solving a forward state traversal problem in automata~\cite{forwardrechability}. In particular, the algorithm allows updating the result after extending $\tau$ in $\mathcal{O}(|P|)$. 

\subsection{State estimators for Markov chains.}
For Markov chains, in addition to tracking the potential reachable system states, we also need to take the transition probabilities into account.
When a system is (observation-)deterministic, we can adapt the notion of beliefs, similar to RVSE~\cite{DBLP:conf/rv/StollerBSGHSZ11}, and similar to the construction of belief MDPs for \emph{partially observable MDPs}, cf.~\cite{DBLP:books/sp/12/Spaan12}:

\begin{definition}[Belief]
	For an MDP $\system$ with a set of states $S$, a belief~$\belief$ is a distribution in $ \Distr(S)$. 
\end{definition}
In the remainder of the paper, we will denote the function $S\rightarrow \{0\}$ by $\mathbf{0}$ and the set $\Distr(S)\cup \{\mathbf{0}\}$ by $\beliefs$. A state estimator based on $\beliefs$ is then defined as follows~\cite{DBLP:conf/vmcai/SistlaS08,RTVStochasticFaultySystems,DBLP:conf/rv/StollerBSGHSZ11}\footnote{For the deterministic case, we omit the unique action for brevity}:

\begin{definition}[MC state estimator]
\label{def:mcstateest}
For $\mc =\langle S,  \iota, \Act, P,
 \Obs, \obsfun \rangle $, a trace $\trace \in \Obs^+$ the state estimation function $\estimator \colon \Obs^+ \rightarrow \beliefs$ is defined as 
 \begin{align*}
 	\estimator(\obs) &\colonequals \begin{cases}
 									\big \{ s \mapsto \frac{\init(s)\cdot \obsfun(s)(z)}{\sum \limits_{\hat{s} \in S} \init(\hat{s})\cdot \obsfun(\hat{s})(z)} \big \}& \exists s \in S.~\init(s)\cdot \obsfun(z)>0,\\
 									\mathbf{0} & \text{otherwise}.
 								   \end{cases}
 	\\
 	\estimator(\trace\cdot \obs) &\colonequals \left\{ s' \mapsto \frac{ \sum \limits_{s\in S} \estimator(\trace)(s) \cdot P (s,s')\cdot \obsfun(s')(\obs)}{\sum \limits_{s\in S }\estimator(\trace)(s) \cdot \big(\sum \limits_{\hat{s}\in S} P(s,\hat{s}) \cdot \obsfun(\hat{s})(\obs) \big)  } \right\}
 \end{align*}

\end{definition}

To illustrate how $\estimator$ is computed, consider again our system in \Cref{fig:runningExample} and assume that the MDP has only the actions labeled with $\{p\}$ (reducing it to the Markov chain induced by the a scheduler that only performs the $\{p\}$ actions). Again we consider the observation trace $\trace= R_o \cdot M_o \cdot L_o$ and compute $\estimator{\trace}$.  
For the first observation $R_o$, and since there is only one initial state, it follows that $\estimator(R_o) =   \{\langle R,D_2\rangle \mapsto 1\}$\footnote{We omit the (single) sensor state for conciseness.}. 
From $\langle R,D_2\rangle$ and having observed $M_o$ we can reach the states $\langle R,D_1\rangle$ and $\langle M,D_1\rangle$ with probabilities $\estimator(R_o\cdot M_o) =  \{\langle R,D_1\rangle \mapsto \frac{\frac{1}{2}\cdot \frac{1}{3}}{\frac{1}{2}\cdot \frac{1}{3} + \frac{1}{2}\cdot \frac{3}{4}} = \frac{4}{13} , \langle M,D_1\rangle \mapsto \frac{\frac{1}{2}\cdot \frac{3}{4}}{\frac{1}{2}\cdot \frac{1}{3} + \frac{1}{2}\cdot \frac{3}{4}} = \frac{9}{13}\}$.
 Finally, from the later two states, when observing $L_o$, the states $\langle M,D_0\rangle$ and $\langle L,D_0\rangle$ can be reached with probabilities $ \estimator(R_o\cdot M_o \cdot L_o) =  \{ \langle M,D_0\rangle \mapsto 0.0001,  \langle L,D_0\rangle\mapsto 0.999\}$. 
 Notice that although the state $\langle R,D_0\rangle$ can be reached from $\langle R,D_1\rangle$, the probability of being in this state is 0 since observing $L_o$  in this state is $\obsfun(\langle R,D_0\rangle)(L_o)=0$. 


\begin{lemma}
\label{lem:mcfilteringcorrect}
 For a Markov chain  $\mc =\langle S,  \iota, \Act, P,
 \Obs, \obsfun \rangle$, a trace $\observationtrace\in \Obs^+$, and a state-risk function $r\colon S \rightarrow \RRnn $, it holds that
$\tracerisk{\staterisk}(\observationtrace) = \sum_{s \in S} \estimator(\trace)(s) \cdot \staterisk(s). $
Computing $\tracerisk{\staterisk}(\observationtrace)$  can be done in time $\mathcal{O}(|\trace|\cdot|S|\cdot|P|)$ , and  using $|S|$ many rational numbers. The size of the rationals\footnote{To avoid growth, one may use fixed-precision numbers that over-approximate the probability of being in any state---inducing a growing (but conservative) error.} may grow linearly in $\trace$. 
\end{lemma}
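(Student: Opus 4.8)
The plan is to establish the correctness identity $\tracerisk{\staterisk}(\observationtrace) = \sum_{s \in S} \estimator(\trace)(s) \cdot \staterisk(s)$ by showing that $\estimator(\trace)$ is exactly the posterior distribution over the current state conditioned on having observed $\trace$. Since a Markov chain has a single scheduler (as $|\avact(s)|=1$ everywhere), the supremum in the definition of $\tracerisk{\staterisk}$ collapses, and it suffices to prove $\estimator(\trace)(s') = \sum_{\path \in \boundedpaths{\mc}{|\trace|},\, \last{\path}=s'} \prpath{\path \mid \trace}$. I would prove this by induction on $|\trace|$. The base case $|\trace|=1$ is immediate from Bayes' rule applied to the initial distribution and the observation function, matching the first branch of the definition of $\estimator(\obs)$ verbatim. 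For the inductive step, I would expand $\prpath{\path \mid \trace\cdot\obs}$ using the characterizations of $\prpath{\trace\mid\path}=\obsfuntrace(\path)(\trace)$ and $\prpath{\trace}=\sum_\path \prpath{\path}\cdot\prpath{\trace\mid\path}$ given just before the problem statement, then regroup the sum over paths of length $|\trace|+1$ according to their length-$|\trace|$ prefix. The key manipulation is that the product structure of $\obsfuntrace(\pi\alpha s')(\trace\cdot\obs) = \obsfuntrace(\pi)(\trace)\cdot\obsfun(s')(\obs)$ lets the normalizing constant factor, turning the marginal over prefixes into $\estimator(\trace)$, which is precisely the recursive formula in Definition~\ref{def:mcstateest}.

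Once the posterior identity holds, the weighted-risk equation follows by substituting into the definition of $\tracerisk{\staterisk}$ and interchanging the two finite sums:
\[
\sum_{\path} \prpath{\path \mid \trace}\cdot\staterisk(\last{\path}) = \sum_{s'\in S}\Big(\sum_{\last{\path}=s'}\prpath{\path\mid\trace}\Big)\cdot\staterisk(s') = \sum_{s'\in S}\estimator(\trace)(s')\cdot\staterisk(s').
\]
The complexity claims are then read off the recursive definition: each update step $\estimator(\trace)\mapsto\estimator(\trace\cdot\obs)$ computes, for every target state $s'$, a sum over source states weighted by $P(s,s')$ and $\obsfun(s')(\obs)$, which is bounded by $\mathcal{O}(|S|\cdot|P|)$ work, repeated $|\trace|$ times; the stored belief is a vector of $|S|$ rationals. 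For the bit-size of the rationals, I would argue that each numerator and denominator is a sum of products accumulated over at most one step, so the number of bits grows additively by a constant (the encoding size of the transition and observation probabilities) per observation, hence linearly in $|\trace|$.

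The main obstacle I anticipate is bookkeeping the normalization and the $\tfrac{0}{0}=0$ convention consistently across the induction, particularly handling the $\mathbf{0}$ case: I must verify that $\estimator(\trace)=\mathbf{0}$ exactly when $\prpath{\trace}=0$ (the trace is unobservable), so that both sides of the claimed identity evaluate to $0$ without dividing by zero, and that the recursion propagates $\mathbf{0}$ correctly. The probabilistic content is otherwise a routine unrolling of Bayes' rule; the care lies in matching the symbolic bookkeeping of the definition to the path-based definition of conditional probability, and in making the informal linear-growth argument for the rationals precise enough to support the stated space bound.
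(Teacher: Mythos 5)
Your plan follows essentially the same route as the paper's proof sketch: collapse the supremum to the unique scheduler of the Markov chain, relate $\estimator(\trace)$ to conditional path probabilities by induction on $|\trace|$, regroup the path sum by last state, and read the complexity off the recursive definition of $\estimator$. Your inductive invariant $\estimator(\trace)(s') = \sum_{\last{\path}=s'} \prpath{\path \mid \trace}$ is in fact the more precise form of the paper's per-path identity (which is only literally correct when summed over paths sharing a last state), and your explicit treatment of the $\mathbf{0}$ case and the $\tfrac{0}{0}$ convention is a welcome refinement rather than a deviation.
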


\smallskip
\noindent\emph{Proof sketch.}
 Since the system is deterministic, there is a unique scheduler $\sigma$, thus $\tracerisk{\staterisk}(\observationtrace) = \sum_{\path \in \boundedpaths{\mc}{|\tau|}} \prpath[\sched]{\path \mid \trace} \cdot \staterisk(\last{\path})$ by definition.  
 We can show by induction over the length of $\trace$ that $\prpath[\sched]{\path \mid \trace} = \estimator(\trace)(\last{\path}) $ and conclude that $\tracerisk{\staterisk}(\observationtrace) =\sum_{\path \in \boundedpaths{\system}{|\tau|}}  \estimator(\trace)(\last{\path})  \cdot \staterisk(\last{\path}) = \sum_{s \in S} \estimator(\trace)(s) \cdot \staterisk(s)$ because $\estimator(\trace)(s) = 0 $ for all $s \in S$ for which there is no path $\path \in \boundedpaths{\system}{|\tau|}$ with $\last{\pi} = s$. 
	The complexity follows from the inductive definition of $\estimator$ that requires in each inductive step  to iterate over all transitions of the system and maintain a belief over the states of the system.  \qed

\subsection{State estimators for Markov decision processes.}
In an MDP, we have to account for every possible resolution of nondeterminism, which means that a belief can evolve into a set of beliefs:

\begin{definition}[MDP state estimator]
\label{def:ndprobupdate}
For an MDP $\mdp =\langle S,  \iota, \Act, P,
 \Obs, \obsfun \rangle$, a trace $\observationtrace\in \Obs^+$, and a state-risk function $r\colon S \rightarrow \RRnn $, the state estimation function  $\estimatorND \colon \Obs^+ \rightarrow 2^\beliefs$ is defined as 
\begin{alignat*}{3}
	&\estimatorND(\obs) &=& \{ \estimator(z) \}, \\
	&\estimatorND(\trace\cdot \obs) &=&  
	\Big\{ \belief' \in \beliefs \Bigm| \exists  \belief \in \estimatorND(\trace).~ \belief' \in \estimatorNDupdate(\belief, \obs) \Big\},
	\end{alignat*}
	and where $\belief' \in \estimatorNDupdate(\belief, \obs)$ if there exists $\varsigma_\belief \colon S \rightarrow \Distr(\Act)$ such that:
	\[\forall s'. \belief'(s') = \frac{\sum\limits_{s\in S} \belief(s) \cdot \sum \limits_{\act \in \Act} \varsigma_\belief(s)(\act) \cdot   P(s,\act,s') \cdot \obsfun(s')(\obs)}{\sum \limits_{s\in S} \belief(s) \cdot \sum \limits_{\act \in \Act} \varsigma_\belief(s)(\act) \cdot \ \sum \limits_{\hat{s}\in S} P(s,\act,\hat{s}) \cdot \obsfun(\hat{s})(\obs)    }.\]
\end{definition}

\noindent
The definition conservatively extends both Def.~\ref{def:ksstateest} and Def.~\ref{def:mcstateest}.
Furthermore, we remark that we do not restrict how the nondeterminism is resolved: any distribution over actions can be chosen, and the distributions may be different for different traces.

Consider our system in \Cref{fig:runningExample}. For the trace $\trace = R_o\cdot M_o\cdot L_o$, $\estimatorND(\trace)$ is computed as follows. First, when observing $R_o$, the state estimator computes the initial belief set $\estimatorND(R_o)= \{ \{\langle R,D_2\rangle \mapsto 1\}\}$. From this set of beliefs, when observing $M_o$, a set $\estimatorND(R_o\cdot M_o)$ can be computed since all transitions $\emptyset, \{p\},\{w\},\{p,w\}$ (as well as their convex combinations) are possible from $\langle R,D_2\rangle$. 
One of these beliefs is for example $\{\langle R,D_1\rangle \mapsto   \frac{4}{13} , \langle M,D_1\rangle \mapsto  \frac{9}{13}\}$ when a scheduler takes the transition $\{p\}$ (as was computed in our example for the Markov chain case). 
Having additionally observed $L_o$ a new set $ \estimatorND(R_oM_oL_o)$ of beliefs can be computed based on the beliefs in $\estimatorND(R_oM_o)$. 
For example from the belief $\{\langle R,D_1\rangle \mapsto   \frac{4}{13} , \langle M,D_1\rangle \mapsto  \frac{9}{13}\}$, two of the new beliefs are  $\{\langle L,D_0\rangle \mapsto  0.999 , \langle M,D_0\rangle \mapsto 0.0001\}$ and $\{\langle M,D_1\rangle \mapsto 0.0287, \langle M, D_0 \rangle \mapsto 0.0001, \langle L, D_0 \rangle \mapsto 0.9712\}$. The first belief is reached by a scheduler that takes a transition $\{p\}$ at both $\langle R, D_1 \rangle$ and $\langle M, D_1 \rangle$. Notice that the belief does not give a positive probability to the state $\langle R, D_0\rangle$ because  $L_o$ cannot be observed in this state. The second belief is reached by considering a scheduler that takes transition $\{p\}$ at $\langle M, D_1\rangle$ and transition $\emptyset$ at $\langle R, D_1\rangle$. 

\begin{theorem}
\label{thm:mdpthm} For an MDP  $\mdp =\langle S,  \iota, \Act, P,
 \Obs, \obsfun \rangle$, a trace $\observationtrace\in \Obs^+$, and a state-risk function $r\colon S \rightarrow \RRnn $, it holds that $\tracerisk{\staterisk}(\observationtrace) = \sup_{\belief \in \estimatorND(\observationtrace)} \sum_{s \in S} \belief  (s) \cdot \staterisk(s) $. 
\end{theorem}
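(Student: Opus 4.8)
The plan is to reduce the scheduler-based supremum defining $\tracerisk{\staterisk}(\observationtrace)$ to the supremum over $\estimatorND(\observationtrace)$ by showing that the two index sets generate exactly the same collection of conditional state distributions. First I would fix a scheduler $\sched \in \scheds{\system}$ and define its \emph{conditional belief} $\belief_\sched \in \beliefs$ by $\belief_\sched(s) \colonequals \sum_{\path \in \boundedpaths{\system}{|\tau|},\, \last{\path}=s} \prpath[\sched]{\path \mid \trace}$ (with $\belief_\sched = \mathbf{0}$ when $\prpath[\sched]{\trace}=0$). Grouping the paths in the inner sum of $\tracerisk{\staterisk}(\observationtrace)$ by their last state immediately gives $\sum_{\path} \prpath[\sched]{\path \mid \trace}\cdot\staterisk(\last{\path}) = \sum_{s\in S}\belief_\sched(s)\cdot\staterisk(s)$. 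Hence it suffices to prove the set identity $\{\belief_\sched \mid \sched \in \scheds{\system}\} = \estimatorND(\observationtrace)$, since equal sets yield equal suprema of the fixed linear functional $\belief \mapsto \sum_s \belief(s)\,\staterisk(s)$.

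For the inclusion $\estimatorND(\observationtrace) \subseteq \{\belief_\sched\}$ I would induct on the number of observations in $\observationtrace$, mirroring the forward-filtering argument of Lemma~\ref{lem:mcfilteringcorrect}. Any element of $\estimatorND(\observationtrace)$ arises from a finite sequence of applications of $\estimatorNDupdate$, each using some state-randomized resolution $\varsigma_\belief\colon S\to\Distr(\Act)$ indexed by the belief (equivalently, the trace prefix) reached so far. I can assemble these into a single scheduler that, on a path $\path$, plays $\varsigma_{\belief}(\last{\path})$ for the belief $\belief$ associated with the observation trace of $\path$. The base case is $\estimatorND(z)=\{\estimator(z)\}$, on which all schedulers agree since no action has yet been taken. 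The inductive step uses that $\estimatorNDupdate$ is precisely the Bayesian one-step filter, so the stepwise normalizations telescope to the global normalizer $\prpath[\sched]{\trace}$ exactly as in the Markov-chain case, carrying the equality one observation further.

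The reverse inclusion $\{\belief_\sched\}\subseteq\estimatorND(\observationtrace)$ is the crux and the step I expect to be the main obstacle, because a general scheduler may assign \emph{different} action distributions to distinct paths that end in the same state and emit the same trace, whereas $\estimatorNDupdate$ permits only one distribution $\varsigma_\belief(s)$ per state. The key observation is that the one-step belief update is \emph{linear} in the chosen action distribution and depends on the history only through the current state. Concretely, the total contribution of the paths ending at $s$ factors through the belief-weighted average $\varsigma_\belief(s)(\act) \colonequals \frac{1}{\belief_\sched(s)}\sum_{\path:\,\last{\path}=s}\prpath[\sched]{\path\mid\trace}\cdot\sched(\path)(\act)$, which sums to one over $\act$ and is therefore a genuine element of $\Distr(\Act)$. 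I would verify, again by induction on the length of $\observationtrace$, that replacing $\sched$'s path-dependent choices by these averaged per-state choices leaves every conditional belief unchanged, so that the averaged resolution witnesses $\belief_\sched \in \estimatorND(\observationtrace)$. This ``collapsing'' of randomization is exactly where I must use that $P(s,\act,\cdot)$ and $\obsfun(\cdot)$ do not depend on the past.

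Finally, from the set identity and continuity of the fixed linear functional $\belief\mapsto\sum_s\belief(s)\,\staterisk(s)$, the two suprema coincide, yielding $\tracerisk{\staterisk}(\observationtrace)=\sup_{\belief\in\estimatorND(\observationtrace)}\sum_{s\in S}\belief(s)\cdot\staterisk(s)$. The supremum need not a priori be attained, but the convexity structure exploited later (and Lemma~\ref{lem:dcschedssuffice}) shows that the relevant beliefs form a finitely generated convex set, on which the linear functional attains its maximum; for the present theorem, however, only the stated equality of the supremum is required.
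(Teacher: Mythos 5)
Your overall route is the same as the paper's own (sketched) proof: establish a two-way correspondence between history-dependent randomizing schedulers and the beliefs produced by the recursive estimator, and transfer the suprema of the linear functional $\belief \mapsto \sum_{s}\belief(s)\cdot\staterisk(s)$ across it by induction on the length of $\trace$. You go beyond the paper's sketch in correctly isolating the crux, namely the inclusion $\{\belief_\sched\} \subseteq \estimatorND(\observationtrace)$, where a scheduler may use different action distributions on distinct paths ending in the same state, while $\estimatorNDupdate$ allows only one $\varsigma_\belief(s)$ per state; collapsing by conditional averaging is indeed how this is resolved. (A minor wording issue in the other inclusion: ``the observation trace of $\path$'' is not well defined, since observations are emitted stochastically along a path; what you need is simply the time-indexed scheduler $\sched(\path) \colonequals \varsigma_{|\path|}(\last{\path})$ assembled from the $\varsigma$'s used in the recursion for the \emph{fixed} trace $\trace$ --- exactly the paper's ``randomizing memoryless scheduler for every time step''.)

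There is, however, one concrete error in the crux step: your averaging weights condition on the \emph{full} trace, $\prpath[\sched]{\path \mid \trace}$, i.e., they are smoothing weights. For your inductive claim ``replacing $\sched$'s path-dependent choices by the averaged per-state choices leaves every conditional belief unchanged'' to hold, the collapse at step $i$ must use \emph{filtering} weights, conditioned only on the prefix of $\trace$ observed up to the point where the action is chosen: $\varsigma_i(s)(a) \colonequals \frac{1}{\belief_i(s)}\sum_{\path}\prpath[\sched]{\path \mid \obs_0\cdots\obs_i}\cdot\sched(\path)(a)$, where $\belief_i$ is the conditional belief after $\obs_0\cdots\obs_i$ and $\path$ ranges over paths with $i$ actions ending in $s$. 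With smoothing weights the claim is false. Counterexample: initial states $u_0,u_1$, each with probability $\nicefrac{1}{2}$, move under a unique action $\gamma$ to $v$; from $v$, action $\alpha$ leads to $x$ and action $\beta$ to $y$; states $u_0,u_1,v$ emit $\obs$ surely, $x$ emits $\obs_1$ with probability $\nicefrac{9}{10}$, and $y$ emits $\obs_1$ with probability $\nicefrac{1}{10}$. Let $\sched$ play $\alpha$ after $u_0\gamma v$ and $\beta$ after $u_1\gamma v$, and take $\trace = \obs\,\obs\,\obs_1$. The true conditional belief is $\{x \mapsto \nicefrac{9}{10},\, y \mapsto \nicefrac{1}{10}\}$, and the prefix-conditional average $\varsigma(v)=\{\alpha \mapsto \nicefrac{1}{2}, \beta \mapsto \nicefrac{1}{2}\}$ reproduces it; but the full-trace average is $\{\alpha \mapsto \nicefrac{9}{10}, \beta \mapsto \nicefrac{1}{10}\}$, and running $\estimatorNDupdate$ with it yields $\{x \mapsto \nicefrac{81}{82},\, y \mapsto \nicefrac{1}{82}\}$. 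With the filtering weights substituted, your induction goes through --- the Markov-property justification you give is exactly the right one --- and the argument then coincides with what the paper intends.
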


\smallskip
\noindent\emph{Proof sketch.}
For a given trace $\trace$, each (history-dependent, randomizing) scheduler  induces a belief over the states of the Markov chain induced by the scheduler. Also, each belief in $\estimatorND(\trace)$ corresponds to a fixed scheduler, namely that one used to compute the belief recursively (i.e., an arbitrary randomizing memoryless scheduler for every time step). 
Once a scheduler $\sched$ and its corresponding belief $\belief$ is fixed, or vice versa, we can show using induction over the length of $\trace$ that $\sum_{\path \in \boundedpaths{\system}{|\tau|}} \prpath[\sched]{\path \mid \trace} \cdot \staterisk(\last{\path})  = \sum_{s \in S} \belief(s)\cdot \staterisk(s)$. 
 \qed

\section{Convex Hull-based Forward Filtering}
\label{sec:pruning}
In this section, 
we show that we can use a finite representation for  $\estimatorND(\observationtrace)$, but that this representation is exponentially large for some MDPs.

\subsection{Properties of $\estimatorND(\trace)$.} 
First, observe that $\mathbf{0}$ never maximises the risk. Furthermore, $\estimatorNDupdate(\mathbf{0}, \obs) =\{\mathbf{0}\}$. We can thus w.l.o.g. assume that $\mathbf{0} \not\in \estimatorND(\trace)$.
Second, observe that $\estimatorND(\trace) \neq \emptyset$ if $\prpath[\sched]{\trace} > 0$.

We can interpret a belief $\belief \in \beliefs$ as point in (a bounded subset of) $\RR^{(|S|-1)}$. We are in particular interested in convex sets of beliefs. 
A set $B \subseteq \beliefs$ is convex if the convex hull $\CH(B)$ of $B$, i.e. all convex combination of beliefs in $B$
\footnote{That is, $\CH(B) =  \{ \sum_{\belief \in B}\sum_{s \in S}\; w_s \cdot \belief(s) \mid \text{for all } w_s \in \RRnn \text{ with }\sum w_s = 1 \}$.}, 
coincides with $B$, $\CH(B) = B$. 
For a set $B \subseteq \beliefs$, a belief $\belief \in B$ is an interior belief if it can be expressed as convex combination of the beliefs in $B \setminus \{ \belief \}$. 
All other beliefs are (extremal) points or \emph{vertices}. 
Let the set $\vertices{B} \subseteq B$ denote the set of \emph{vertices of the convex hull} of  $B$. 

\begin{figure}[t]
\subfigure[]{
\begin{tikzpicture}[every node/.style={font=\scriptsize},simpstate/.style={circle,draw,fill=white,inner sep=1pt,minimum width=14pt},act/.style={circle,fill=black,inner sep=1pt}]

	\node[simpstate,fill=blue!30] (s1) {$s_1$};
	
	\node[simpstate,left=0.6cm of s1,fill=blue!30] (s0) {$s_0$};	
	\node[simpstate,below=of s0,fill=blue!30] (s3) {$s_3$};
	\node[simpstate,below=of s1,fill=orange!30] (s2) {$s_2$};
	\node[simpstate,below = 0.5 cm of s2,fill=orange!30] (s4) {$s_4$};
	\node[left=0.2cm of s0] (start) {};
	\draw[->,black] (start) -- (s0);
	\node[act,right=0.3cm of s0, fill=black] (a01) {};
	\node[act,below=0.3cm of s0, fill=black] (a03) {};
	\draw[black] (s0) -- (a01);
	\draw[black] (s0) -- (a03);
	
	\draw[->,black] (a01) edge node[auto] {$1$} (s1);
	\draw[->,black] (a03) -- node[right, pos=0.4] {$\nicefrac{1}{4}$} (s3);
	\draw[->,black] (a03) -- node[below, pos=0.6] {$\nicefrac{3}{4}$} (s1);

	\node[act,above=0.2cm of s1,fill=black] (a11) {};
	\node[act,below=0.3cm of s1,fill=black] (a12) {};
	\node[act,right=0.3cm of s1,fill=black] (a13) {};
	\node[act,right=0.3cm of s3,fill=black] (a31) {};
	\node[act,left=0.3cm of s3,fill=black] (a32) {};
	
	\draw[black] (s1) -- (a11);
	\draw[black] (s1) -- (a12);
	\draw[black] (s1) -- (a13);
	\draw[black] (s3) -- (a31);
	\draw[black] (s3) -- (a32);
	
	\draw[->,black] (a13) edge[bend right] node[pos=0.35,above] {$1$} (s1);
	\draw[->,black] (a11) edge[bend right] node[pos=0.35,above] {$1$} (s0);
	\draw[->,black] (a12) edge[bend right] node[pos=0.35,right] {$\nicefrac{1}{2}$} (s1);
	\draw[->,black] (a12) edge node[pos=0.35,right] {$\nicefrac{1}{2}$} (s3);

	\draw[->,black] (a31) edge[bend right] node[pos=0.7,below] {$\nicefrac{1}{2}$} (s4);
	\draw[->,black] (a31) edge node[pos=0.5,below] {$\nicefrac{1}{2}$} (s2);
	\draw[->,black] (a32) edge[bend right] node[pos=0.35,below] {$\nicefrac{1}{2}$} (s3);
	\draw[->,black] (a32) edge[bend left=18] node[pos=0.3,left] {$\nicefrac{1}{2}$} (s0);

\end{tikzpicture}
\label{fig:expbeliefs:mdp}
	}
	\subfigure[$B$ over $s_1, s_3$]{
	\begin{tikzpicture}[every node/.style={font=\scriptsize}]
\node at (-0.58,0) {};
\node at (1,0) {};

	\draw[thick,dotted] (0,0) -- (0,2);
	\node[circle,inner sep=2pt, fill=blue!60] at (0,2) {};
	\node[circle,inner sep=2pt, fill=blue!60] at (0,1.5) {};
	
	\node[anchor=south] at (0,2) {$1,0$};
	\node[anchor=north] at (0.5,1.5) {$\nicefrac{3}{4},\nicefrac{1}{4}$};
\end{tikzpicture}
	\label{fig:expbeliefs:1}
	}
	\subfigure[$B$ over $s_0, s_1, s_3$]{
	\begin{tikzpicture}[every node/.style={font=\scriptsize}]
	\draw[thick,dotted]  (2,0)  -- (0,0) -- (0,0.25) -- (0.75,1) -- (1,1) -- (2,0);
	
	\draw[black!70,dotted] (0,0) -- (0,2) -- (2,0) -- (0,0);
	\node[circle,inner sep=2pt, fill=blue!60] at (2,0) {}; 
	\node[circle,inner sep=2pt, fill=blue!60] at (0,0) {}; 
	\node[circle,inner sep=2pt, fill=blue!60] at (1,1) {}; 

	\node[circle,inner sep=2pt, fill=blue!60] at (1.5,0.25) {}; 
	\node[circle,inner sep=2pt, fill=blue!60] at (0,0.25) {}; 
	\node[circle,inner sep=2pt, fill=blue!60] at (0.75,1) {}; 
	
	

	
	

	\node[anchor=north] at (0.0,0.0) {$1,0,0$};
	\node[anchor=north] at (2.0,0.0) {$0,1,0$};
	\node[anchor=west] at (1.0,1.0) {$0,\nicefrac{1}{2},\nicefrac{1}{2}$};
	\node[anchor=east,yshift=2mm] at (1.5,0.25) {$\nicefrac{1}{8}, \nicefrac{3}{4},\nicefrac{1}{8}$};
	\node[anchor=east] at (0.0,0.25) {$\nicefrac{7}{8},0,\nicefrac{1}{8}$};
	\node[anchor=east] at (0.75,1.0) {$\nicefrac{1}{8},\nicefrac{3}{8},\nicefrac{1}{2}$};
	
\end{tikzpicture}
		\label{fig:expbeliefs:2}
	}
	\subfigure[$B$ over $s_2, s_3$]{
	\label{fig:expbeliefs:3}

	\begin{tikzpicture}[every node/.style={font=\scriptsize}]
\node at (-0.58,0) {};
\node at (1,0) {};

	\draw[thick,dotted] (0,0) -- (0,2);
	\node[circle,inner sep=2pt, fill=orange!60] at (0,1) {};
	
	\node[anchor=north] at (0.5,1.5) {$\nicefrac{1}{2},\nicefrac{1}{2}$};
\end{tikzpicture}
	}	
	\caption{Beliefs in $\RR^n$ on $\mdp$ for $\tau = \textcolor{blue}{\obs_0\obs_0}$, $\textcolor{blue}{\obs_0\obs_0\obs_0}$ and $\textcolor{blue}{\obs_0\obs_0}\textcolor{orange}{\obs_1}$, respectively.}
	\label{fig:expbeliefs}
\end{figure}
\begin{example}
	Consider Fig.~\ref{fig:expbeliefs:mdp}. All observation are Dirac, and only states $s_2$ and $s_4$ have observation $\textcolor{orange}{\obs_1}$.  
	The beliefs having observed $\textcolor{blue}{\obs_0\obs_0}$ are distributions over $s_1,s_3$, 
	and can thus be depicted in a one-dimensional simplex. In particular, we have $\vertices{\estimatorND(\textcolor{blue}{\obs_0\obs_0})} = \left\{ \{ s_1 \mapsto 1 \},\{ s_1 \mapsto \nicefrac{3}{4}, s_3 \mapsto \nicefrac{1}{4} \} \right\}$, as depicted in Fig.~\ref{fig:expbeliefs:1}. 
	The six beliefs having observed $\textcolor{blue}{\obs_0\obs_0\obs_0}$ are distributions over $s_0,s_1,s_3$, depicted in Fig.~\ref{fig:expbeliefs:2}. Five out of six beliefs are vertices.   
	The belief having observed $\textcolor{blue}{\obs_0\obs_0}\textcolor{orange}{\obs_1}$ is in Fig.~\ref{fig:expbeliefs:3}.
\end{example}

\begin{remark}
Observe that we illustrate the beliefs over only the states $\sensbelief(\tau)$. We therefore call $|\sensbelief(\observationtrace)|$ the dimension of $\estimatorND(\observationtrace)$. 
\end{remark}
From the fundamental theorem of linear programming~\cite[Ch.~7]{DBLP:books/daglib/0090562} it immediately follows that the trace risk $\tracerisk{\trace}$ is obtained at a vertex of the beliefs of $\estimatorND{\trace}$. We obtain the following refinement over Theorem~\ref{thm:mdpthm}:
\begin{theorem}
For every $\observationtrace$ and $\staterisk$: 
$\tracerisk{\staterisk}(\observationtrace) = \max \limits_{\belief \in \textcolor{black}{\vertices{\textcolor{black}{\estimatorND(\observationtrace)}}}} \sum_{s \in S} \belief  (s) \cdot \staterisk(s). $
\label{theo:chcomp}
\end{theorem}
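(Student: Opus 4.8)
The plan is to read the objective in Theorem~\ref{thm:mdpthm} as a linear functional of the belief and then invoke the fundamental theorem of linear programming. For fixed $\staterisk$, define $f\colon \beliefs \to \RRnn$ by $f(\belief) \colonequals \sum_{s \in S} \belief(s)\cdot \staterisk(s)$. Identifying each belief with a point of the probability simplex in $\RR^{|S|}$, the map $f$ is the restriction of a linear functional, hence affine and continuous. Theorem~\ref{thm:mdpthm} already gives $\tracerisk{\staterisk}(\observationtrace) = \sup_{\belief \in \estimatorND(\observationtrace)} f(\belief)$, so it remains to show that this supremum is attained and equals $\max_{\belief \in \vertices{\estimatorND(\observationtrace)}} f(\belief)$.

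First I would move from $\estimatorND(\observationtrace)$ to its convex hull. Since $\estimatorND(\observationtrace) \subseteq \CH(\estimatorND(\observationtrace))$, we have $\sup_{\belief \in \estimatorND(\observationtrace)} f(\belief) \le \sup_{\belief \in \CH(\estimatorND(\observationtrace))} f(\belief)$. For the reverse inequality, any $\belief \in \CH(\estimatorND(\observationtrace))$ is a convex combination $\belief = \sum_i w_i \belief_i$ of beliefs $\belief_i \in \estimatorND(\observationtrace)$, and linearity of $f$ yields $f(\belief) = \sum_i w_i\, f(\belief_i) \le \max_i f(\belief_i) \le \sup_{\belief \in \estimatorND(\observationtrace)} f(\belief)$. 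Hence the two suprema coincide, and it suffices to optimise $f$ over $\CH(\estimatorND(\observationtrace))$.

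Next I would argue that $\CH(\estimatorND(\observationtrace))$ is a polytope and that $f$ attains its maximum on it at a vertex. Every belief lies in the bounded probability simplex, so $\estimatorND(\observationtrace)$ is bounded. Tracing through Def.~\ref{def:ndprobupdate}, the numerator and denominator of each single-step update $\estimatorNDupdate(\belief, \obs)$ are linear in the randomized choice $\varsigma_\belief \in \prod_{s} \Distr(\Act)$, so the extreme points of $\CH(\estimatorNDupdate(\belief,\obs))$ are images of the finitely many deterministic choices. Propagating this through the finitely long trace $\observationtrace$ shows that $\CH(\estimatorND(\observationtrace))$ has finitely many vertices, i.e.\ is a polytope, and that these vertices lie in $\estimatorND(\observationtrace)$, matching the definition $\vertices{\estimatorND(\observationtrace)} \subseteq \estimatorND(\observationtrace)$. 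By the fundamental theorem of linear programming~\cite[Ch.~7]{DBLP:books/daglib/0090562}, the linear objective $f$ attains its optimum over this polytope at a vertex, so $\sup_{\belief \in \CH(\estimatorND(\observationtrace))} f(\belief) = \max_{\belief \in \vertices{\estimatorND(\observationtrace)}} f(\belief)$. Chaining the equalities establishes $\tracerisk{\staterisk}(\observationtrace) = \max_{\belief \in \vertices{\estimatorND(\observationtrace)}} \sum_{s \in S} \belief(s)\cdot \staterisk(s)$.

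The routine part is the linearity argument that collapses the supremum from the set to its convex hull. The main obstacle is the passage from a supremum to an attained maximum over vertices: one must verify that $\CH(\estimatorND(\observationtrace))$ is a genuine polytope (or at least compact) so that the optimum is realised, since $\estimatorND(\observationtrace)$ itself is in general not convex and the normalisation in Def.~\ref{def:ndprobupdate} is a ratio rather than a linear map. The cleanest way around this is to exploit the projective (ratio-of-linear) form of the update: because numerator and denominator are linear in $\belief$, updating a convex combination $\belief = \sum_i w_i \belief_i$ with a fixed $\varsigma$ yields a mediant $\sum_i \tilde{w}_i\, \belief'_i$ of the corresponding updates $\belief'_i$ of the $\belief_i$, with positive weights $\tilde{w}_i$ given by the observation likelihoods; hence the update of any interior belief already lies in the convex hull of the updates of the vertices. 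This simultaneously guarantees finitely many vertices (arising from deterministic choices at vertex beliefs) and places each vertex inside $\estimatorND(\observationtrace)$, and it is exactly the structural fact that the subsequent pruning strategy will exploit. Alternatively, compactness of $\CH(\estimatorND(\observationtrace))$ together with Bauer's maximum principle already yields attainment of $f$ at an extreme point without counting vertices.
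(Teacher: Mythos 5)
Your proof is correct and follows essentially the same route as the paper: it reads the trace risk as a linear functional over $\estimatorND(\observationtrace)$ via Theorem~\ref{thm:mdpthm}, invokes the fundamental theorem of linear programming to place the optimum at a vertex, and justifies finiteness and membership of the vertices through the linear-fractional (projection) structure of $\estimatorNDupdate$ --- which is exactly the content of Lemma~\ref{lem:convexsuffices} and Lemma~\ref{lem:finitevertices} in the paper. The only difference is one of explicitness: you spell out the mediant argument and the passage from supremum to attained maximum that the paper compresses into a one-sentence appeal to these lemmas.
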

Lemma~\ref{lem:finitevertices} below clarifies that this maximum indeed exists.

We make some observations that allow us to compute the vertices more efficiently: 
Let $\estimatorNDupdate(B,\obs)$ denote $\bigcup_{\belief \in B} \estimatorNDupdate(\belief,\obs)$. 
From the properties of convex sets~\cite[Ch.~2]{DBLP:books/cu/BV2014}, we make the following observations:
If $B$ is convex $\estimatorNDupdate(B,\obs)$ is convex, as all operations in computing a new belief are convex-set preserving\footnote{The scaling is called a \emph{projection}.}. 
Furthermore, if $B$ has a finite set of vertices, then $\estimatorNDupdate(B,\obs)$ has a finite set of vertices.
The following lemma which is based on the observations above clarifies how to compute the vertices:
\begin{lemma}
For a convex set of beliefs $B$  with a finite set of vertices and an observation $\obs$:
\label{lem:convexsuffices}
	\[ \vertices{\estimatorNDupdate(B,\obs)} = \vertices{\estimatorNDupdate(\vertices{B},\obs)}. \]
\end{lemma}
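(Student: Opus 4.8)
The plan is to reduce the claimed identity about vertex sets to a single identity about convex hulls. Since the operator $\vertices{\cdot}$ depends only on the convex hull of its argument and, by the observation stated just before the lemma, $\estimatorNDupdate(B,\obs)$ is itself convex, it suffices to prove
\[ \estimatorNDupdate(B,\obs) \;=\; \CH\big(\estimatorNDupdate(\vertices{B},\obs)\big). \]
Once this holds, both $\vertices{\estimatorNDupdate(B,\obs)}$ and $\vertices{\estimatorNDupdate(\vertices{B},\obs)}$ are the vertex sets of one and the same convex set, so they coincide. The inclusion $\supseteq$ is immediate: $\vertices{B}\subseteq B$ gives $\estimatorNDupdate(\vertices{B},\obs)\subseteq\estimatorNDupdate(B,\obs)$, and taking convex hulls while using that the right-hand set is convex closes this direction. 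The content lies entirely in the inclusion $\subseteq$.

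The structural fact I would exploit is that each update is a \emph{linear} map followed by a normalization. Fix a belief $\belief$ and a resolution $\varsigma\colon S\to\Distr(\Act)$ of the nondeterminism, and write the unnormalized successor $N_{\belief,\varsigma}(s') := \sum_{s}\belief(s)\sum_{\act}\varsigma(s)(\act)\,P(s,\act,s')\,\obsfun(s')(\obs)$ together with its total mass $Z_{\belief,\varsigma} := \sum_{s'}N_{\belief,\varsigma}(s')$, so that the corresponding element of $\estimatorNDupdate(\belief,\obs)$ is $N_{\belief,\varsigma}/Z_{\belief,\varsigma}$; we may assume $Z_{\belief,\varsigma}>0$, since $\mathbf{0}$ never contributes. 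The crucial point is that for a \emph{fixed} $\varsigma$, both $\belief\mapsto N_{\belief,\varsigma}$ and $\belief\mapsto Z_{\belief,\varsigma}$ are linear in $\belief$.

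For the hard inclusion, take $\belief'\in\estimatorNDupdate(B,\obs)$, witnessed by some $\belief\in B$ and some $\varsigma$. As $B$ is convex with finitely many vertices, $B=\CH(\vertices{B})$, so write $\belief=\sum_i\lambda_i v_i$ with $v_i\in\vertices{B}$, $\lambda_i\ge 0$, and $\sum_i\lambda_i=1$. Applying the \emph{same} $\varsigma$ to each $v_i$ yields $\belief_i' := N_{v_i,\varsigma}/Z_{v_i,\varsigma}\in\estimatorNDupdate(v_i,\obs)$; this is legitimate because $\lambda_i>0$ forces $\supp(v_i)\subseteq\supp(\belief)$, so $\varsigma$ is a valid resolution on the support of each $v_i$. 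By linearity, $N_{\belief,\varsigma}=\sum_i\lambda_i N_{v_i,\varsigma}$ and $Z_{\belief,\varsigma}=\sum_i\lambda_i Z_{v_i,\varsigma}$. The main obstacle is that normalization is \emph{not} linear, so $\belief'$ is \emph{not} simply $\sum_i\lambda_i\belief_i'$. The resolution is a projective reweighting: setting $w_i := \lambda_i Z_{v_i,\varsigma}/Z_{\belief,\varsigma}$, a one-line computation gives $\sum_i w_i=1$, $w_i\ge 0$, and $\belief'=\sum_i w_i\belief_i'$, where the terms with $Z_{v_i,\varsigma}=0$ carry weight $w_i=0$ and drop out harmlessly. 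Hence $\belief'\in\CH(\estimatorNDupdate(\vertices{B},\obs))$.

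With both inclusions established, $\estimatorNDupdate(B,\obs)$ and $\CH(\estimatorNDupdate(\vertices{B},\obs))$ coincide, and therefore so do their vertex sets, which is exactly the statement. I expect the reweighting identity of the third paragraph — the one compensating for the state-dependent normalization constants $Z_{v_i,\varsigma}$ — to be the crux of the argument; the remaining ingredients (the support bookkeeping that makes $\varsigma$ reusable across vertices, and the two inclusions) are routine.
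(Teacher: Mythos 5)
Your proof is correct and takes essentially the same route as the paper: the paper justifies this lemma by appealing to the fact that each belief update is a linear map followed by a normalization (a perspective/``projection'' map, citing Boyd--Vandenberghe, Ch.~2, for its convex-set-preserving properties), and your explicit linearity-plus-reweighting computation with weights $w_i = \lambda_i Z_{v_i,\varsigma}/Z_{\belief,\varsigma}$ is precisely the content of that citation written out. The only caveat, which you share with the paper's own statement, is that the step $B=\CH(\vertices{B})$ invokes Minkowski's theorem and therefore needs $B$ compact (a convex set with finitely many extreme points need not be the hull of those points, e.g., a relatively open segment), but this holds in every actual use of the lemma, where $B$ is the convex hull of the finitely many beliefs the monitor tracks.
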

By induction and using the facts above we obtain:
\begin{lemma}
\label{lem:finitevertices}
Any $\vertices{\estimatorND(\trace)}$ is finite.	
\end{lemma}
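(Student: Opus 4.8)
The plan is to prove the statement by induction on the length of the trace $\trace$, strengthening the induction hypothesis so that it asserts not only that $\vertices{\estimatorND(\trace)}$ is finite but also that $\estimatorND(\trace)$ is \emph{itself} convex. Strengthening in this way is what lets me apply the three facts collected just above the lemma---convexity preservation under $\estimatorNDupdate$, preservation of finiteness of the vertex set, and Lemma~\ref{lem:convexsuffices}---directly to the set $\estimatorND(\trace)$, rather than having to pass through its convex hull. For the base case $\trace = \obs$, the set $\estimatorND(\obs) = \{\estimator(\obs)\}$ is a singleton, hence trivially convex with exactly one vertex, so both halves of the hypothesis hold (recalling that we may assume $\mathbf{0}\not\in\estimatorND(\trace)$, so the normalizing denominators stay positive).

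For the inductive step I would assume $\estimatorND(\trace)$ is convex with $\vertices{\estimatorND(\trace)}$ finite and consider $\estimatorND(\trace\cdot\obs) = \estimatorNDupdate(\estimatorND(\trace),\obs)$. First, invoking the convexity-preservation fact, since $\estimatorND(\trace)$ is convex so is $\estimatorND(\trace\cdot\obs)$, which discharges the convexity half of the strengthened hypothesis. For the finiteness half, I would apply Lemma~\ref{lem:convexsuffices} to replace the update of the whole set by the update of its vertex set, obtaining $\vertices{\estimatorND(\trace\cdot\obs)} = \vertices{\estimatorNDupdate(\vertices{\estimatorND(\trace)},\obs)}$. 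Now $\estimatorNDupdate(\vertices{\estimatorND(\trace)},\obs) = \bigcup_{\belief\in\vertices{\estimatorND(\trace)}}\estimatorNDupdate(\belief,\obs)$ is a \emph{finite} union---finite precisely because $\vertices{\estimatorND(\trace)}$ is finite by the induction hypothesis---of sets, each of which is convex with finitely many vertices by the finite-vertex-preservation fact applied to the singleton $\{\belief\}$. Since the vertices of the convex hull of a finite union of convex sets with finitely many vertices are contained in the union of the individual vertex sets, the whole set has a finite vertex set, completing the induction.

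The one point that requires care, and which I expect to be the crux, is justifying that the collected facts genuinely apply to $\estimatorND(\trace)$ as a set (not merely to its convex hull) and that $\estimatorNDupdate(\belief,\obs)$ for a single belief is a bounded convex set with finitely many vertices. The clean way to see both is to reparametrize the update in Def.~\ref{def:ndprobupdate} by the unnormalized weights $w(s,\act) \colonequals \belief(s)\cdot\varsigma_\belief(s)(\act)$: the numerator becomes linear in $w$, the admissible $w$ range over a convex set (a product of scaled simplices, which itself varies convexly as $\belief$ ranges over a convex set), and the final normalization is a perspective map, which preserves convexity and sends such a bounded convex set with finitely many vertices to another one. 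This is exactly the content underlying the three facts, so no new machinery is needed; the real work is in threading the induction so that convexity and finiteness of the vertex set are carried through together.
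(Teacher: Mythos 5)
Your proposal is correct and takes essentially the same route as the paper: the paper's entire proof is ``by induction and using the facts above,'' i.e.\ exactly your induction that carries convexity and finiteness of the vertex set together through the trace, discharged via the convexity-preservation observation, the finite-vertex-preservation observation, and Lemma~\ref{lem:convexsuffices}. Your additional details (the finite-union-of-vertex-sets argument and the perspective-map justification of the preservation facts) merely spell out what the paper delegates to its cited observations on convex sets, so no new approach is involved.
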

A monitor thus only needs to track the vertices. Furthermore, $\estimatorNDupdate(B,\obs)$ can be adapted to compute only vertices by limiting $\varsigma_\belief$ to $S \rightarrow \Act$.
\subsection{Exponential lower bounds on the relevant vertices.}
We show that a monitor in general cannot avoid an exponential blow-up in the beliefs it tracks.
First observe that updating $\belief$ yields up to  $\prod_s |\Act(s)|$ new beliefs (vertex or not), a prohibitively large number. The number of vertices is also exponential:
\begin{lemma}
\label{lem:exponentialbelief}
There exists a family of MDPs $\mdp_{n}$ with $2n+1$ states such that 
	 ${|\vertices{\estimatorND(\observationtrace)}|=2^n}$	 for every $\trace$ with $|\trace| > 2$.
\end{lemma}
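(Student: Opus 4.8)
The plan is to build, for each $n$, an MDP $\mdp_n$ from a single initial \emph{distributor} state $s_0$ together with $n$ independent two-state \emph{gadgets} $\{a_i,b_i\}$, $i\in\{1,\dots,n\}$, so that the tracked beliefs form an affine image of a Cartesian product of $n$ one-dimensional sets; since the convex hull of such a product is a box, it will have exactly $2^n$ vertices. Concretely, $\mdp_n$ has the $2n+1$ states $s_0$ and the pairs $\{a_i,b_i\}$, with $s_0$ the unique initial state, and every observation distribution is Dirac on one fixed symbol $\obs$, so the only feasible trace of length $k$ is $\obs^{k}$. From $s_0$ a single action distributes the mass uniformly over the gadgets, putting mass $\tfrac1n$ on each $a_i$ and $0$ on each $b_i$. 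Inside gadget $i$ all transitions stay within $\{a_i,b_i\}$: state $b_i$ has a single action, while $a_i$ has two actions whose self-loop probabilities $T^0_{aa}\neq T^1_{aa}$ are distinct and positive. As the observation is constant, the belief update of \Cref{def:ndprobupdate} restricted to gadget $i$ acts on the fraction $x_i\in[0,1]$ of the gadget's conserved mass $\tfrac1n$ that sits on $a_i$ by one of two \emph{affine} maps $f^0_i,f^1_i$ (one per action at $a_i$), and these disagree for every $x_i>0$ precisely because $T^0_{aa}\neq T^1_{aa}$.

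\textbf{Independence yields a product.} Since the gadgets never exchange mass, a deterministic resolution of the nondeterminism is an independent choice of action at each of the distinct states $a_1,\dots,a_n$, and by the remark following \Cref{lem:finitevertices} deterministic resolutions suffice to enumerate vertices. Consequently $x_i$ evolves using only the actions chosen in gadget $i$, and the set of joint vectors $(x_1,\dots,x_n)$ reachable after reading $\obs^{k}$ is exactly the Cartesian product $\prod_{i=1}^n R_i^{k}$, where $R_i^{k}\subseteq[0,1]$ is the set of values reachable in gadget $i$ alone. A short induction using $f^0_i\neq f^1_i$ on $(0,1]$ shows that, starting from $x_i=1$ after the distributor step, $R_i^{k}$ contains at least two distinct values for every $k>2$; in fact $R_i^{3}=\{T^0_{aa},T^1_{aa}\}$.

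\textbf{The convex hull is a box.} The map $\belief\mapsto(\belief(a_1),\dots,\belief(a_n))$ is affine and, on the reachable belief set, injective, since for $k\geq2$ the remaining masses are determined by $\belief(s_0)=0$ and $\belief(b_i)=\tfrac1n-\belief(a_i)$; it sends the reachable beliefs to $\tfrac1n\prod_i R_i^{k}$. Using $\CH(A\times B)=\CH(A)\times\CH(B)$ coordinatewise, the convex hull of this product is the box $\prod_i[\tfrac1n\min R_i^{k},\tfrac1n\max R_i^{k}]$, whose vertex set is the product of the endpoints $\prod_i\{\tfrac1n\min R_i^{k},\tfrac1n\max R_i^{k}\}$. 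As each factor is nondegenerate for $k>2$, this product has exactly $2^n$ elements, and each endpoint is itself attained in $R_i^{k}$, so every box corner is a reachable belief. An injective affine map carries the vertices of $\CH(B)$ bijectively onto the vertices of the image, so together with \Cref{lem:finitevertices} this gives $|\vertices{\estimatorND(\observationtrace)}|=2^n$ for every $\observationtrace$ with $|\observationtrace|>2$.

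\textbf{Main obstacle.} The crux is the geometric observation that independence of the gadgets makes the reachable belief set a Cartesian product, whose convex hull is a box with a vertex count that multiplies across the gadgets. The accompanying care is to design a single gadget whose two actions induce affine fraction-updates that stay distinct for all $x_i>0$, so that each coordinate interval remains nondegenerate for \emph{every} trace of length greater than two rather than collapsing on longer traces.
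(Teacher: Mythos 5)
Your proof is correct and takes essentially the same approach as the paper: a hub initial state feeding $n$ independent two-state gadgets (for $2n+1$ states total) under observations that reveal nothing about the scheduler's choices, so that the reachable beliefs factorize into a Cartesian product across gadgets and the convex hull therefore has exactly $2^n$ vertices. The only difference is the gadget mechanism---you use two actions with distinct positive self-loop probabilities under a single constant observation, whereas the paper uses deterministic swap/keep dynamics in each component $C_i=\{h_i,l_i\}$ with action $B$ disguised as observation $A$---but the core factorization-into-a-product argument, which your write-up in fact makes more rigorous via the box/affine-injection reasoning, is identical.
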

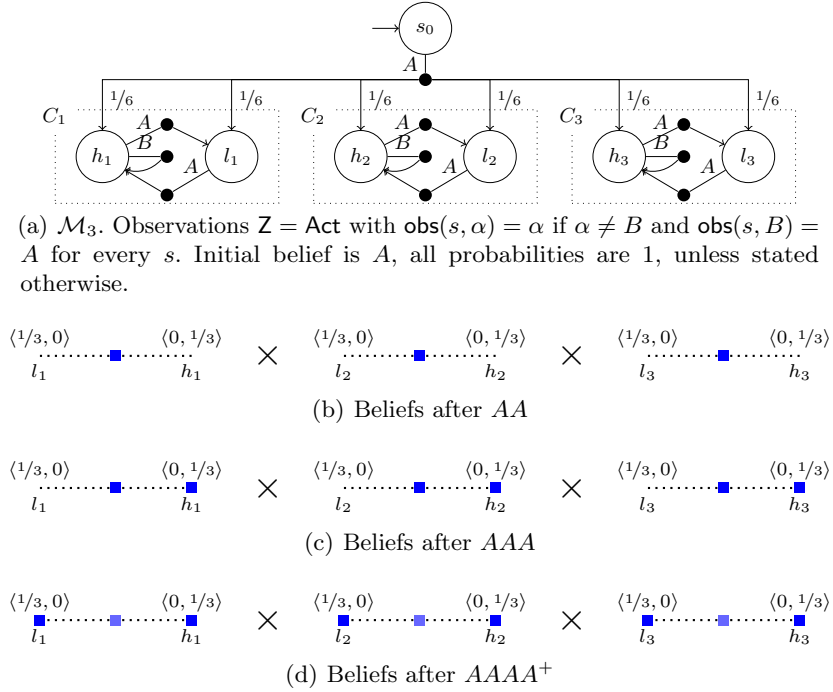
\begin{figure}[t]
\centering
\subfigure[$\mdp_3$. Observations  $\Obs = \Act$ with $\obsfun(s,\act) = \act$ if $\act \neq B$ and $\obsfun(s,B) = A$ for every $s$. Initial belief is $ A$, all probabilities are $1$, unless stated otherwise. ]{
\scalebox{0.85}{
\begin{tikzpicture}
	\node[state, initial, initial text=] at (0,2) (in) {$s_0$};
	\node[state] at (-5,0) (h1) {$h_1$};
	\node[state] at (-3,0) (l1) {$l_1$};
	 \node[state] at (-1,0) (h2) {$h_2$};
\node[state] at (1,0) (l2) {$l_2$};
\node[state] at (3,0) (h3) {$h_3$};
\node[state] at (5,0) (l3) {$l_3$};

\node[fit=(h1)(l1), inner sep=9pt, draw, dotted] (C1) {};
\node[fit=(h2)(l2), inner sep=9pt, draw, dotted] (C2) {};
\node[fit=(h3)(l3), inner sep=9pt, draw, dotted] (C3) {};

\node[fill=white] at (C1.160) (c1l) {$C_1$};
\node[fill=white] at (C2.160) (c2l) {$C_2$};
\node[fill=white] at (C3.160) (c3l) {$C_3$};

\node[circle,fill,inner sep=2pt] at (0,1.2) (astart) {};
\draw[-] (in) -- node[left,align=center] {$A$} (astart);

\draw[->] (astart) -| node[right,pos=0.7] {$\nicefrac{1}{6}$} (h1);
\draw[->] (astart) -| node[right,pos=0.7] {$\nicefrac{1}{6}$} (l1);
\draw[->] (astart) -| node[right,pos=0.7] {$\nicefrac{1}{6}$} (h2);
\draw[->] (astart) -| node[right,pos=0.7] {$\nicefrac{1}{6}$} (l2);
\draw[->] (astart) -| node[right,pos=0.7] {$\nicefrac{1}{6}$} (h3);
\draw[->] (astart) -| node[right,pos=0.7] {$\nicefrac{1}{6}$} (l3);

\node[circle,fill,inner sep=2pt] at (-4, 0.5) (h1A) {};
\node[circle,fill,inner sep=2pt] at (-4, 0) (h1B) {};
\node[circle,fill,inner sep=2pt] at (-4, -0.6) (l1A) {};
\node[circle,fill,inner sep=2pt] at (0, 0.5) (h2A) {};
\node[circle,fill,inner sep=2pt] at (0, 0) (h2B) {};
\node[circle,fill,inner sep=2pt] at (0, -0.6) (l2A) {};
\node[circle,fill,inner sep=2pt] at (4, 0.5) (h3A) {};
\node[circle,fill,inner sep=2pt] at (4, 0) (h3B) {};
\node[circle,fill,inner sep=2pt] at (4, -0.6) (l3A) {};

\draw[-] (h1) edge node[above] {$A$} (h1A);
\draw[->] (h1A) edge (l1);
\draw[-] (h1) edge node[above] {$B$} (h1B);
\draw[->] (h1B) edge[bend left] (h1);
\draw[-] (l1) edge node[above] {$A$} (l1A);
\draw[->] (l1A) edge (h1);
\draw[-] (h2) edge node[above] {$A$} (h2A);
\draw[->] (h2A) edge (l2);
\draw[-] (h2) edge node[above] {$B$} (h2B);
\draw[->] (h2B) edge[bend left] (h2);
\draw[-] (l2) edge node[above] {$A$} (l2A);
\draw[->] (l2A) edge (h2);
\draw[-] (h3) edge node[above] {$A$} (h3A);
\draw[->] (h3A) edge (l3);
\draw[-] (h3) edge node[above] {$B$} (h3B);
\draw[->] (h3B) edge[bend left] (h3);
\draw[-] (l3) edge node[above] {$A$} (l3A);
\draw[->] (l3A) edge (h3);

\end{tikzpicture}	
}
\label{fig:m3}
}
\subfigure[Beliefs after $AA$]{
\begin{tikzpicture}
	\draw[-, dotted, thick] (0,0) -- (2,0);
	\node[anchor=north] at (0,0) (l1) {\scriptsize $l_1$};
	\node[anchor=north] at (2,0) (h1) {\scriptsize $h_1$};	
	\node at (3,0) (t1) {\Large$\times$};
	\draw[-, dotted, thick] (4,0) -- (6,0);
	\node[anchor=north] at (4,0) (l2) {\scriptsize $l_2$};
	\node[anchor=north] at (6,0) (h2) {\scriptsize $h_2$};
	\node at (7,0) (t1) {\Large$\times$};
	\draw[-, dotted, thick] (8,0) -- (10,0);
	\node[anchor=north] at (8,0) (l3) {\scriptsize $l_3$};
	\node[anchor=north] at (10,0) (h3) {\scriptsize $h_3$};
	
	\node[anchor=south] at (0,0) (l1) {\scriptsize $\langle \nicefrac{1}{3}, 0\rangle$};
	\node[anchor=south] at (2,0) (l1) {\scriptsize $\langle 0, \nicefrac{1}{3}\rangle$};
	\node[anchor=south] at (4,0) (l1) {\scriptsize $\langle \nicefrac{1}{3}, 0\rangle$};
	\node[anchor=south] at (6,0) (l1) {\scriptsize $\langle 0, \nicefrac{1}{3}\rangle$};
	\node[anchor=south] at (8,0) (l1) {\scriptsize $\langle \nicefrac{1}{3}, 0\rangle$};
	\node[anchor=south] at (10,0) (l1) {\scriptsize $\langle 0, \nicefrac{1}{3}\rangle$};

	\node[fill=blue,inner sep=2pt] at (1,0) (b1) {};
	\node[fill=blue,inner sep=2pt] at (5,0) (b1) {};
	\node[fill=blue,inner sep=2pt] at (9,0) (b1) {};
	
\end{tikzpicture}
}
\subfigure[Beliefs after $AAA$]{
\begin{tikzpicture}
	\draw[-, dotted, thick] (0,0) -- (2,0);
	\node[anchor=north] at (0,0) (l1) {\scriptsize $l_1$};
	\node[anchor=north] at (2,0) (h1) {\scriptsize $h_1$};	
	\node at (3,0) (t1) {\Large$\times$};
	\draw[-, dotted, thick] (4,0) -- (6,0);
	\node[anchor=north] at (4,0) (l2) {\scriptsize $l_2$};
	\node[anchor=north] at (6,0) (h2) {\scriptsize $h_2$};
	\node at (7,0) (t1) {\Large$\times$};
	\draw[-, dotted, thick] (8,0) -- (10,0);
	\node[anchor=north] at (8,0) (l3) {\scriptsize $l_3$};
	\node[anchor=north] at (10,0) (h3) {\scriptsize $h_3$};
	
	\node[anchor=south] at (0,0) (l1) {\scriptsize $\langle \nicefrac{1}{3}, 0\rangle$};
	\node[anchor=south] at (2,0) (l1) {\scriptsize $\langle 0, \nicefrac{1}{3}\rangle$};
	\node[anchor=south] at (4,0) (l1) {\scriptsize $\langle \nicefrac{1}{3}, 0\rangle$};
	\node[anchor=south] at (6,0) (l1) {\scriptsize $\langle 0, \nicefrac{1}{3}\rangle$};
	\node[anchor=south] at (8,0) (l1) {\scriptsize $\langle \nicefrac{1}{3}, 0\rangle$};
	\node[anchor=south] at (10,0) (l1) {\scriptsize $\langle 0, \nicefrac{1}{3}\rangle$};

	\node[fill=blue,inner sep=2pt] at (1,0) (b1) {};
	\node[fill=blue,inner sep=2pt] at (5,0) (b1) {};
	\node[fill=blue,inner sep=2pt] at (9,0) (b1) {};
	\node[fill=blue,inner sep=2pt] at (2,0) (b1) {};
	\node[fill=blue,inner sep=2pt] at (6,0) (b1) {};
	\node[fill=blue,inner sep=2pt] at (10,0) (b1) {};
\end{tikzpicture}
}
\subfigure[Beliefs after $AAAA^{+}$]{
\begin{tikzpicture}
	\draw[-, dotted, thick] (0,0) -- (2,0);
	\node[anchor=north] at (0,0) (l1) {\scriptsize $l_1$};
	\node[anchor=north] at (2,0) (h1) {\scriptsize $h_1$};	
	\node at (3,0) (t1) {\Large$\times$};
	\draw[-, dotted, thick] (4,0) -- (6,0);
	\node[anchor=north] at (4,0) (l2) {\scriptsize $l_2$};
	\node[anchor=north] at (6,0) (h2) {\scriptsize $h_2$};
	\node at (7,0) (t1) {\Large$\times$};
	\draw[-, dotted, thick] (8,0) -- (10,0);
	\node[anchor=north] at (8,0) (l3) {\scriptsize $l_3$};
	\node[anchor=north] at (10,0) (h3) {\scriptsize $h_3$};
	
	\node[anchor=south] at (0,0) (l1) {\scriptsize $\langle \nicefrac{1}{3}, 0\rangle$};
	\node[anchor=south] at (2,0) (l1) {\scriptsize $\langle 0, \nicefrac{1}{3}\rangle$};
	\node[anchor=south] at (4,0) (l1) {\scriptsize $\langle \nicefrac{1}{3}, 0\rangle$};
	\node[anchor=south] at (6,0) (l1) {\scriptsize $\langle 0, \nicefrac{1}{3}\rangle$};
	\node[anchor=south] at (8,0) (l1) {\scriptsize $\langle \nicefrac{1}{3}, 0\rangle$};
	\node[anchor=south] at (10,0) (l1) {\scriptsize $\langle 0, \nicefrac{1}{3}\rangle$};

	\node[fill=blue!60,inner sep=2pt] at (1,0) (b1) {};
	\node[fill=blue!60,inner sep=2pt] at (5,0) (b1) {};
	\node[fill=blue!60,inner sep=2pt] at (9,0) (b1) {};
	\node[fill=blue,inner sep=2pt] at (2,0) (b1) {};
	\node[fill=blue,inner sep=2pt] at (6,0) (b1) {};
	\node[fill=blue,inner sep=2pt] at (10,0) (b1) {};
	
	\node[fill=blue,inner sep=2pt] at (0,0) (b1) {};
	\node[fill=blue,inner sep=2pt] at (4,0) (b1) {};
	\node[fill=blue,inner sep=2pt] at (8,0) (b1) {};
\end{tikzpicture}
}
\caption{Construction for the correctness of Lemma~\ref{lem:exponentialbelief}.}
\label{fig:m3beliefs}
\end{figure}
\noindent\emph{Proof sketch.}
We construct $\mdp_n$ with $n=3$, that is, $\mdp_3$ in Fig.~\ref{fig:m3}. 
For this MDP and $\observationtrace = AAA$, $|\vertices{\estimatorND(\trace)}| = 2^3$. 
In particular, observe how the belief factorizes into a belief within each component $C_i =  \{ h_i, l_i \}$ and notice that $\mdp_n$ has components $C_1$ to $C_n$. 
In particular, for each component, the belief being that we are with probability mass $\nicefrac{1}{n}$ (for $n=3, \nicefrac{1}{3}$) in the 'low' state $l_i$ or the 'high' state $h_i$. We depict the beliefs in Fig.~\ref{fig:m3beliefs}(b,c,d). 
Thus, for any $\tau$ with $|\observationtrace| > 2$ we can compactly represent $\vertices{\estimatorND(\observationtrace)}$ as bit-strings of length $n$.
Concretely, the belief 
\begin{align*}
	& \{ h_1, l_2, l_3  \mapsto \nicefrac{1}{3}, l_1, h_2, h_3 \mapsto 0 \} \text{ maps to } 100, \text{ and } \\
	& \{ h_1, l_2, h_3  \mapsto \nicefrac{1}{3}, l_1, h_2, l_3 \mapsto 0 \} \text{ maps to } 101. 
\end{align*}
These are exponentially many beliefs for bit strings of length $n$.\qed

One might ask whether a symbolic encoding of an exponentially large set may result in a more tractable approach to filtering.
While \Cref{theo:chcomp} allows to compute the associated risk from a set of linear constraints with standard techniques, it is not clear whether the concise set of constraints can be efficiently constructed and updated in every step. We leave this concern for future work.

In the remainder we investigate whether we need to track all these beliefs. First, when the monitor is unaware of the state-risk, this is trivially unavoidable. 
More precisely, all vertices may induce the maximal weighted trace risk by choosing an appropriate state-risk: 

\begin{lemma}
For every $\observationtrace$ and every $\belief \in \vertices{\estimatorND(\observationtrace)}$  there exists an $\staterisk$ s.t.\ 
\[ \sum_{s \in S} \belief  (s) \cdot \staterisk(s) \geq \max \limits_{\belief' \in \vertices{\estimatorND(\observationtrace)} \setminus \{ \belief \}} \sum_{s \in S} \belief'  (s) \cdot \staterisk(s) \text{ with $\max_{\belief \in \emptyset} = -\infty.$ }  \]
\end{lemma}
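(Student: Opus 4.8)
The plan is to recognise the quantity $\sum_{s\in S}\belief(s)\cdot\staterisk(s)$ as a linear functional of the belief $\belief$, and then to exploit the classical fact that every vertex of a polytope is \emph{exposed}: there is a linear functional that is (strictly) maximised there. First I would invoke Lemma~\ref{lem:finitevertices} to note that $V\colonequals\vertices{\estimatorND(\observationtrace)}$ is finite, so $\CH(\estimatorND(\observationtrace))$ is a polytope with vertex set $V$, and that (since $\mathbf{0}\notin\estimatorND(\observationtrace)$) every element of $V$ is a genuine distribution, i.e.\ $\sum_{s\in S}\belief'(s)=1$. I would dispose of the degenerate case $V=\{\belief\}$ immediately: then the right-hand side is $\max_{\emptyset}=-\infty$ and any $\staterisk$, e.g.\ the constant $0$, satisfies the claim.

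For the main case $|V|\ge 2$, the key step is separation. Because $\belief$ is a vertex of the polytope, it is not a convex combination of the other vertices, i.e.\ $\belief\notin\CH(V\setminus\{\belief\})$, and the latter is a compact convex set. I would then apply the strict separating hyperplane theorem~\cite[Ch.~2]{DBLP:books/cu/BV2014} to obtain coefficients $c\colon S\to\RR$ with $\sum_{s\in S}\belief(s)\,c(s) > \sum_{s\in S}\belief'(s)\,c(s)$ for every $\belief'\in V\setminus\{\belief\}$. Separating from the hull of the \emph{other} vertices is enough, since it simultaneously dominates each competing vertex.

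The step I expect to be the real obstacle is the sign constraint: a state-risk function must map into $\RRnn$, whereas the separating coefficients $c$ may be negative, so $c$ itself need not be a legal $\staterisk$. The resolution uses that all beliefs are normalised: setting $K\colonequals\max_{s\in S}\bigl(-c(s)\bigr)$ and $\staterisk(s)\colonequals c(s)+K\ge 0$ yields a valid state-risk, and since $\sum_{s}\hat\belief(s)=1$ we have $\sum_{s}\hat\belief(s)\,\staterisk(s)=\bigl(\sum_{s}\hat\belief(s)\,c(s)\bigr)+K$ for every $\hat\belief\in V$; the common additive constant $K$ leaves the strict ordering intact. This would give the claimed inequality (in fact with strict inequality). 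The one point to double-check is precisely that this normalisation argument is available, which it is because we have already argued $\mathbf{0}\notin\estimatorND(\observationtrace)$, so no belief is the all-zero vector and the shift-by-constant trick applies uniformly.
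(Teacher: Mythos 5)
Your proof is correct, but it takes a genuinely different route from the paper's. The paper's entire argument is the one-line construction ``take $\staterisk$ such that $\staterisk(s) > \staterisk(s')$ whenever $\belief(s) > \belief(s')$'', i.e.\ a state-risk that is monotone in the probability masses of the distinguished vertex $\belief$ alone. You instead argue geometrically: by the definition of a vertex, $\belief \notin \CH\bigl(\vertices{\estimatorND(\observationtrace)} \setminus \{\belief\}\bigr)$; by Lemma~\ref{lem:finitevertices} that hull is a compact polytope; strict separation yields coefficients $c$ whose linear functional is maximised at $\belief$; and the shift $\staterisk(s) \colonequals c(s) - \min_{s'} c(s')$ is a legal nonnegative state-risk that preserves the strict ordering because all competing beliefs are normalised (here you correctly lean on the w.l.o.g.\ exclusion of $\mathbf{0}$). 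What your route buys is not just rigor but correctness: the paper's recipe, read literally, does not suffice in general. Take vertices $\belief = (0.4,0.3,0.3)$ and $\belief' = (1,0,0)$ over states $s_1,s_2,s_3$ (realisable as $\vertices{\estimatorND(\observationtrace)}$, e.g.\ via an initial state with two actions leading to these two distributions over identically-observed successors): monotonicity forces $\staterisk(s_1) > \staterisk(s_2)$ and $\staterisk(s_1) > \staterisk(s_3)$, while the desired inequality $\sum_s \belief(s)\staterisk(s) \geq \staterisk(s_1)$ forces $\staterisk(s_2) + \staterisk(s_3) \geq 2\staterisk(s_1)$, a contradiction; the lemma is instead witnessed by $\staterisk = (0,1,1)$, which \emph{reverses} the order of $\belief$'s masses. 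What the paper's approach would buy, had it worked, is simplicity---a combinatorial definition of $\staterisk$ from $\belief$ alone---but in general the dominating functional must depend on the other vertices as well, which is precisely what your separation argument provides; the normalisation trick repairing nonnegativity is the second ingredient the sketch silently needs and omits.
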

\noindent\emph{Proof sketch.}
We construct $r$ such that $r(s) > r(s')$ if $\belief(s) > \belief(s')$.
\qed

Second, even if the monitor is aware of the state risk $\staterisk$, it may not be able to prune enough vertices to avoid exponential growth. The crux here is that while some of the current beliefs may induce a smaller risk, an extension of the trace may cause the belief to evolve into a belief that induces the maximal risk. 
\begin{theorem}
\label{thm:extensionthm}
There exist MDPs $\mdp_n$ a $\trace$ with $B \colonequals \vertices{\estimatorND(\observationtrace)}$ and a state-risk $r$ such that
 $|B|=2^n$  and for all $\belief \in B$ exists $\observationtrace' \in \Obs^{+}$ with 	$ \tracerisk{\staterisk}(\observationtrace \cdot \observationtrace') > \sup_{\belief \in B'} \sum_s \belief(s) \cdot \staterisk(s)$, 
	where $B' = \estimatorNDupdate(B \setminus \{ \belief \}, \observationtrace' )$.  
\end{theorem}
It is helpful to understand this theorem as describing the outcome of a game between monitor and environment:
The statement says if the monitor decides to drop some vertices from $\estimatorND{\trace}$, the environment may produce an observation trace $\observationtrace'$ that will lead the monitor to underestimate the weighted risk at $\tracerisk{\staterisk}(\observationtrace \cdot \observationtrace')$.

\smallskip\noindent\emph{Proof sketch.}
We extend the construction of Fig.~\ref{fig:m3} with choices to go to a final state. The full proof sketch can be found in \Cref{sec:exponentialnecessaryproof}. 

\subsection{Approximation by pruning}
\noindent Finally, we illustrate that we cannot simply prune small probabilities from beliefs. This indicates that an approximative version of filtering for the monitoring problem is nontrivial.   
Reconsider observing $\textcolor{blue}{\obs_0\obs_0}$ in the MDP of Fig.~\ref{fig:expbeliefs}, and, for the sake of argument, let us prune the (small) entry $s_3 \mapsto \nicefrac{1}{4}$ to $0$. 
Now, continuing with the trace $\textcolor{blue}{\obs_0\obs_0}\textcolor{orange}{\obs_1}$, we would update the beliefs from before and then conclude that this trace cannot be observed with positive probability. 
With pruning, there is no upper bound on the difference between the \emph{computed} $\tracerisk{\observationtrace}$  and the \emph{actual} $\tracerisk{\observationtrace}$.
Thus, forward filtering is, in general, not tractable on MDPs. 

\section{Unrolling with Model Checking}
\label{sec:unrolling}
We present a tractable algorithm for the monitoring problem. 
Contrary to filtering, this method incorporates the state risk. 	We briefly consider the qualitative case.
An algorithm that solves that problem iteratively guesses a successor such that the given trace has positive probability, and reaches a state with sufficient risk. The algorithm only stores the current and next state and a counter. 
\begin{theorem}
\label{thm:realizabilitycomplexity}
	The Monitoring Problem with $\lambda=0$ is in NLOGSPACE.
	\end{theorem}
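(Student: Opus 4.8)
The plan is to reduce the decision $\tracerisk{\staterisk}(\observationtrace) > 0$ to a pure path‑existence (reachability) question and then solve the latter with an on‑the‑fly nondeterministic log‑space search, exactly matching the stated intuition that we only ever store a current state, a guessed successor, and a counter. Throughout I may assume state‑based observations without loss of generality by the remark following the MDP definition.

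First I would establish the characterization. Since $\staterisk$ is nonnegative and, by the Bayes decomposition given above, $\prpath[\sched]{\path \mid \trace} = \frac{\prpath{\trace \mid \path}\cdot \prpath[\sched]{\path}}{\prpath[\sched]{\trace}}$, the sum $\sum_{\path} \prpath[\sched]{\path \mid \trace}\cdot \staterisk(\last{\path})$ is strictly positive for some scheduler iff there is a single finite path $\path \in \boundedpaths{\system}{|\trace|}$ that is (i) \emph{trace‑compatible}, i.e.\ $\obsfuntrace(\path)(\trace) > 0$, which by the inductive definition of $\obsfuntrace$ amounts to $\obsfun(s_i)(z_i) > 0$ at every position along $\path = s_0 \act_0 s_1 \cdots$, and (ii) ends in a risky state, $\staterisk(\last{\path}) > 0$. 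The nontrivial direction uses that any such $\path$ is realised with positive probability by the deterministic scheduler following its actions; this scheduler makes $\prpath[\sched]{\trace} > 0$ (so the conditional is genuinely defined, not $\tfrac00$) and gives $\prpath[\sched]{\path\mid\trace} > 0$, hence a positive weighted sum. Conversely, a positive weighted sum forces some summand to be positive, which by nonnegativity of $\staterisk$ yields precisely such a $\path$.

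Next I would give the algorithm. Nondeterministically guess the path one step at a time, synchronized with $\trace = z_0 z_1 \cdots z_m$: begin in a state $s_0$ with $\init(s_0) > 0$ and $\obsfun(s_0)(z_0) > 0$; at step $i$, holding the current state $s_i$, guess an action $\act \in \avact(s_i)$ and a successor $s_{i+1}$ with $P(s_i,\act)(s_{i+1}) > 0$ and $\obsfun(s_{i+1})(z_{i+1}) > 0$, overwrite $s_i$ by $s_{i+1}$, and advance a position counter read against the input trace. After the last observation is consumed, accept iff $\staterisk(\last{\path}) > 0$. Correctness is immediate from the characterization: an accepting computation exists iff a trace‑compatible risky path exists.

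Finally, the space analysis. At any moment the worktape holds only a constant number of state identifiers and one action identifier, of sizes $\mathcal{O}(\log|S|)$ and $\mathcal{O}(\log|\Act|)$ respectively, together with a counter bounded by $|\trace|$ of size $\mathcal{O}(\log|\trace|)$; the trace itself lives on the read‑only input tape. All of these are logarithmic in the input, so the procedure is in NLOGSPACE. The main obstacle I anticipate is the first step—pinning down the equivalence between the supremum‑over‑schedulers objective and single‑path existence while respecting the $\tfrac00 = 0$ convention on the conditional probability; once that is clean, the search and its bound are routine. A secondary point to handle is the representation of $\staterisk$: if $\staterisk$ is itself a qualitative reachability risk (e.g.\ ``can reach a bad state within $H$ steps''), then the test $\staterisk(\last{\path}) > 0$ is again an NLOGSPACE reachability check that can be appended to the same guess‑and‑check search, using one additional bounded counter, without leaving logarithmic space.
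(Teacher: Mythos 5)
Your proposal is correct and follows essentially the same route as the paper: the paper's (very terse) argument is exactly the guess-and-check procedure you describe---iteratively guess a successor consistent with the next observation, keep only the current state, the guessed successor, and a position counter, and accept iff the final state has positive risk. Your added characterization lemma (positivity of the supremum $\Leftrightarrow$ existence of a single trace-compatible path ending in a risky state, with the deterministic scheduler realizing that path and the $\tfrac{0}{0}=0$ convention handled explicitly) is precisely the justification the paper leaves implicit, so there is nothing methodologically different to compare.
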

This result implies the existence of a polynomial time algorithm, e.g., using a graph-search on a graph growing in $|\trace|$.
There also is a deterministic algorithm with space complexity $\mathcal{O}(log^2( |\system| + |\tau| ))$, which follows from applying Savitch's Theorem~\cite{DBLP:journals/jcss/Savitch70}
, but that algorithm has exponential time complexity.

We now present a tractable algorithm for the quantitative case, where we need to store all paths.
 We do this efficiently by storing an unrolled MDP with these paths using ideas from~\cite{DBLP:conf/tacas/BaierKKM14,DBLP:journals/ai/ChatterjeeCGK16}. 
 In particular, on this MDP, we can efficiently obtain the scheduler that optimizes the risk by model checking rather than enumerating over all schedulers explicitly. 
  We give the result before going into details.
\begin{theorem}
	The Monitoring Problem (with $\lambda > 0$) is P-complete.
\end{theorem}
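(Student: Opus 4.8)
The plan is to establish membership in $\mathsf{P}$ and $\mathsf{P}$-hardness separately.

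\textbf{Membership in $\mathsf{P}$.} The plan is to reduce the quantitative monitoring problem to a conditional reachability-reward computation on a polynomially-sized MDP, which can be model checked in polynomial time. First I would build an \emph{unrolled} MDP $\mdp'$ from $\mdp$ and the trace $\observationtrace = \obs_0\cdots\obs_n$: the states are pairs $\langle s, i\rangle$ for $s\in S$ and $0\le i \le |\observationtrace|$, tracking both the current state and the position in the trace. Transitions from layer $i$ to layer $i+1$ follow $P$, but I would fold the observation likelihood $\obsfun(s')(\obs_i)$ into the transition by routing mass that does not match the expected observation into a dedicated sink state $\bot$ (reached when the emitted observation disagrees with the trace). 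This is the key construction from~\cite{DBLP:conf/tacas/BaierKKM14,DBLP:journals/ai/ChatterjeeCGK16}. Concretely, the probability of reaching a state $\langle s, n\rangle$ in the \emph{bottom} layer, conditioned on \emph{not} reaching $\bot$, equals exactly the conditional path probability $\prpath[\sched]{\path \mid \trace}$ summed over paths ending in $s$. Next I would attach the state risk: make each bottom-layer state $\langle s, n\rangle$ absorbing and assign it a reward $\staterisk(s)$ (collected upon entering a target set), so that the conditional expected reward equals $\sum_{s} \prpath[\sched]{\cdot\mid\trace}\cdot\staterisk(s)$, which is the inner sum in the definition of $\tracerisk{\staterisk}$. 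The supremum over schedulers then becomes a standard \emph{maximal conditional expected reward} objective on $\mdp'$. By Lemma~\ref{lem:dcschedssuffice} (cited in the excerpt) the supremum is attained, and by the results of~\cite{DBLP:conf/tacas/BaierKKM14} maximal conditional expected rewards are computable in polynomial time (e.g.\ via linear programming after the appropriate normalization). Since $|\mdp'| = \mathcal{O}(|\observationtrace|\cdot|\mdp|)$ is polynomial, the whole procedure runs in polynomial time.

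\textbf{$\mathsf{P}$-hardness.} The plan is to reduce from a canonical $\mathsf{P}$-complete problem; the natural choice is the \textsc{Monotone Circuit Value Problem} (\textsc{MCVP}), or alternatively a direct reduction from MDP reachability, which is itself $\mathsf{P}$-hard. I would take the reduction showing $\mathsf{P}$-hardness of maximal reachability in MDPs and show that it already lives inside the monitoring problem: given a fully observable MDP instance asking whether $\prmax{\mdp}{\eventually \mathsf{bad}} > \lambda$, I would make it trivially observable (a single observation, so every trace is consistent with every path and conditioning is vacuous), let the trace be of the appropriate length (or encode the horizon), and set the state risk $\staterisk$ to be the indicator of $\mathsf{bad}$. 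Then $\tracerisk{\staterisk}(\observationtrace)$ collapses to the maximal reachability probability, so deciding $\tracerisk{\staterisk}(\observationtrace) > \lambda$ decides the original MDP problem. Since this reduction is computable in logspace (hence polynomial time), $\mathsf{P}$-hardness follows.

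\textbf{Main obstacle.} I expect the delicate step to be the membership direction, specifically getting the conditioning exactly right. The subtlety is that conditioning on the observed trace is conditioning on the event ``$\bot$ is never reached,'' and this event's probability depends on the scheduler; thus the objective is a genuine \emph{conditional} expected reward, not an ordinary one, and the two do not commute with the $\sup$ over schedulers in the naive way. I would need to invoke the machinery of~\cite{DBLP:conf/tacas/BaierKKM14} that handles exactly this non-commutativity (their reset/normalization transformation), and carefully verify that the scheduler optimizing the conditional reward on $\mdp'$ corresponds to a scheduler realizing the supremum in Problem-box definition of $\tracerisk{\staterisk}$ on $\mdp$—this is where Lemma~\ref{lem:dcschedssuffice} (deterministic, or sufficiently restricted, schedulers suffice) does the essential work of ensuring the supremum is a maximum attained by a scheduler that the polynomial-time algorithm can actually find.
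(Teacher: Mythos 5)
Your overall skeleton matches the paper's proof --- unroll $\mdp$ along $\observationtrace$, realize the conditioning by a reset (``rejection sampling'') transformation \`a la~\cite{DBLP:conf/tacas/BaierKKM14}, solve by linear programming, and obtain hardness from a P-hard reachability problem --- but both halves have a concrete hole. For membership, you keep the risk as a terminal \emph{reward} and invoke~\cite{DBLP:conf/tacas/BaierKKM14} for polynomial-time maximal \emph{conditional expected rewards}. That paper only treats conditional reachability \emph{probabilities}; the threshold problem for maximal conditional expected (accumulated) rewards is in fact known to be PSPACE-hard even for acyclic MDPs (Baier, Klein, Kl\"uppelholz, Wunderlich, TACAS 2017), so the blanket statement your argument rests on is false. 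Your instance escapes this hardness only because of a feature you never exploit: the rewards are bounded and sit solely on terminal states. The paper's proof inserts exactly the missing step --- rescale $\staterisk$ and $\lambda$ so that $\staterisk(s) \in [0,1]$, then replace the reward at each last-layer state $\langle s, |\observationtrace|{-}1\rangle$ by a single action leading to $\top$ with probability $\staterisk(s)$ and to $\bot$ with probability $1-\staterisk(s)$ --- after which the objective is a maximal conditional reachability \emph{probability}, the reset construction applies as stated in~\cite{DBLP:conf/tacas/BaierKKM14}, and LP finishes in polynomial time. (Your folding of the observation likelihoods directly into the unrolled transitions, rather than first making observations deterministic as the paper does via~\cite{DBLP:journals/ai/ChatterjeeCGK16}, is a fine simplification and not the issue.)

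For hardness, you reduce from \emph{unbounded} maximal reachability, $\prmax{\mdp}{\eventually \bad} > \lambda$, and assert that with trivial observations and indicator risk the trace risk ``collapses'' to it for a trace of ``appropriate length.'' It does not: with $\bad$ made absorbing, $\tracerisk{\staterisk}(\observationtrace)$ equals the maximal probability of having reached $\bad$ within exactly $|\observationtrace|$ steps, and no polynomially long trace makes this coincide with the unbounded value in general --- e.g., a state reaching $\bad$ with probability $2^{-n}$ per step (and otherwise looping) has unbounded value $1$ but step-bounded value below any fixed $\lambda$ for every polynomial horizon, since convergence can require exponentially many steps. The paper sidesteps this by choosing the right source problem: \emph{unary-encoded step-bounded} reachability, which is P-hard~\cite{DBLP:journals/mor/PapadimitriouT87}. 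With the horizon given in unary, setting the trace length equal to the horizon makes your construction (trivial observations, absorbing $\bad$, indicator risk) correct as written; without that switch, the reduction you describe decides the wrong quantity.
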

	The problem is P-hard, as unary-encoded step-bounded reachability is P-hard~\cite{DBLP:journals/mor/PapadimitriouT87}. 
It remains to show a P-time algorithm\footnote{On first sight, this might be surprising as step-bounded reachability in MDPs is PSPACE-hard and only quasi-polynomial. However, our problem gets a trace and therefore (assuming that the trace is not compressed) can be handled in time polynomial in the length of the trace.}, which is outlined below. Roughly, the algorithm constructs an MDP $\mdp'''$ from $\system$ in three conceptual steps, such that the maximal probability of reaching a state in $\mdp'''$ coincides with the $\tracerisk{\staterisk}(\observationtrace)$. The former can be solved by linear programming in polynomial time. The downside is that even in the best case, the memory consumption grows linearly in $|\tau|$.

We outline the main steps of the algorithm and exemplify them below:
First, we transform $\mdp$ into an MDP $\mdp'$ with \emph{deterministic state} observations, i.e., with $\obsfun'\colon S \rightarrow \Obs$. This construction is detailed in \cite[Remark 1]{DBLP:journals/ai/ChatterjeeCGK16}, and runs in polynomial time. The new initial distribution takes into account the initial observation and the initial distribution. Importantly, for each path $\path$ and each trace $\tau$, $\obsfuntrace(\path)(\trace)$ is preserved.
From here, the idea for the algorithm is a tailored adaption of the construction for conditional reachability probabilities in~\cite{DBLP:conf/tacas/BaierKKM14}. 
We ensure that $\staterisk(s) \in [0,1]$ by scaling $\staterisk$ and $\lambda$ accordingly.
Now, we construct a new MDP $\mdp'' = \langle S'', \init'', \Act'', P'' \rangle$
with state space $S'' \colonequals (S' \times \{0,\hdots ,|\trace|{-}1\}) \cup \{ \bot, \top \}$ and an $n$-times unrolled transition relation.
Furthermore, from the states $\langle s, |\trace|{-}1 \rangle$, there is a single outgoing action that with probability $\staterisk(s)$ leads to $\top$ and with probability $1-\staterisk(s)$ leads to $\bot$. Observe that the risk is now the supremum of conditioned reachability probabilities over paths that reach $\top$, conditioned by the trace $\observationtrace$. The MDP $\mdp''$ is only polynomially larger.
Then, we construct MDP $\mdp'''$ by copying $\mdp''$ and replacing (part of) the transition relation $P''$ by $P'''$ such that paths $\path$ with $\trace \not\in \obsfuntrace(\path)$ are looped back to the initial state (resembling rejection sampling). Formally, 
\begin{align*}
 P'''(\langle s, i \rangle, \act)=  \begin{cases}    P''(\langle s, i \rangle, \act) & \text{if }\obsfun'(s) = \trace_{i},\\
   \init & \text{otherwise}. 
 \end{cases}
\end{align*}
The maximal conditional reachability probability in $\mdp''$ is the maximal reachability probability in $M'''$~\cite{DBLP:conf/tacas/BaierKKM14}. Maximal reachability probabilities can be computed by solving a linear program~\cite{Put94}, and can thus be computed in polynomial time.

\begin{figure}[t]
\centering
    \subfigure[$\system$]{
    \begin{tikzpicture}[every node/.style={font=\scriptsize},simpstate/.style={circle,draw,inner sep=1pt,minimum width=14pt},act/.style={circle,fill=black,inner sep=1pt},obs/.style={circle,minimum width=10pt},baseline=(s0)]
		\node[simpstate,fill=blue!30] (s0) {$s_0$};
		\node[simpstate,below=0.8cm of s0] (s1) {$s_1$};
		\coordinate (c2) at (s1.center);
		\filldraw[draw=white,fill=blue!30] (s1.center) -- ($(c2) + (0:7pt)$) arc[start angle=0,end angle=90,radius=7pt] -- cycle;
		\filldraw[draw=white,fill=orange!30] (s1.center) -- ($(c2) + (90:7pt)$) arc[start angle=90,end angle=360,radius=7pt] -- cycle;
		\node[simpstate,below=0.8cm of s0] (s1) {$s_1$};

		\node[act,above=0.3cm of s0] (a00) {}; 
		\node[act,right=0.3cm of s0] (a01) {}; 
		\node[obs, right=0.1cm of a01] (o2) {};		
		\coordinate (c3) at (o2.center);
		\node[act,above=0.3cm of s1] (a10) {}; 
		\node[obs, left=0.1cm of a10,yshift=0.15cm] (o3) {};	
		\coordinate (c3) at (o3.center);
%
		
		\draw[-] (s0) edge node[right] {$\alpha$} (a00);
		\draw[-] (s0) edge node[above] {$\beta$} (a01);
		\draw[-] (s1) edge node[right] {$\alpha$} (a10);
		
		\draw[->] (a00) edge[bend right] node[left] {$1$} (s0);
		\draw[->] (a01) edge[bend left]  node[right] {$1$} (s1);
		\draw[->] (a10) edge node[right] {$\nicefrac{1}{2}$} (s0);
		\draw[->] (a10) edge[bend right] node[left] {$\nicefrac{1}{2}$} (s1);

		\node[simpstate,draw=white,below=1.1cm of s1] (sph) {\phantom{$s_2$}};
	\end{tikzpicture}
	\label{fig:poltimeconstruction:orig}
    }
	\subfigure[$\mdp'$]{
	\begin{tikzpicture}[every node/.style={font=\scriptsize},simpstate/.style={circle,draw,fill=white,inner sep=1pt,minimum width=14pt},act/.style={circle,fill=black,inner sep=1pt},baseline=(s0)]
		\node[simpstate,fill=blue!30] (s0) {$s_0$};
		\node[simpstate,below=0.8cm of s0,fill=blue!30] (s1) {$s_1$};
		
		\node[simpstate,draw=white,below=1.1cm of s1] (sph) {\phantom{$s_2$}};
		\node[simpstate,below=0.8cm of s1,fill=orange!30] (s2) {$s_2$};

			\node[act,above=0.3cm of s0] (a00) {}; 
		\node[act,right=0.31cm of s0] (a01) {}; 
		\node[act,above=0.3cm of s1] (a10) {}; 
		\node[act,left=0.33cm of s2] (a20) {}; 
	
		\draw[-] (s0) edge node {} (a00);
		\draw[-] (s0) edge node {} (a01);
		\draw[-] (s1) edge node {} (a10);
		\draw[-] (s2) edge node {} (a20);
		
		\draw[->] (a00) edge[bend right] node[left] {$1$} (s0);
		\draw[->] (a01) |- node[below,pos=0.75] {$\nicefrac{1}{4}$} (s1);
		\draw[->] (a01) |- node[below,pos=0.7] {$\nicefrac{3}{4}$} (s2);
		\draw[->] (a10) edge node[left] {$\nicefrac{1}{2}$} (s0);
		\draw[->] (a10) edge[bend left] node[right] {$\nicefrac{1}{8}$} (s1);
		\draw[->] (a10) edge[bend right=40] node[pos=0.1, left] {$\nicefrac{3}{8}$} (s2.100);
		\draw[->] (a20) |- node[left,pos=0.1] {$\nicefrac{1}{2}$} (s0);
		\draw[->] (a20) edge[bend right=15] node[left,pos=0.55] {$\nicefrac{1}{8}$} (s1.270);
		\draw[->] (a20) edge[bend right] node[below] {$\nicefrac{3}{8}$} (s2);

	\end{tikzpicture}
	\label{fig:poltimeconstruction:detobs}
	}
	\subfigure[$\mdp''$]{
	\begin{tikzpicture}[every node/.style={font=\scriptsize},simpstate/.style={circle,draw,fill=white,inner sep=1pt,minimum width=14pt},baseline=(s00),act/.style={circle,fill=black,inner sep=1pt}]
		\node[simpstate,fill=blue!30] (s00) {$s_0^0$};
		\node[simpstate,below=0.8cm of s00,fill=blue!30] (s10) {$s_1^0$};
		
		\node[simpstate,draw=white,below=1.1cm of s10] (sph) {\phantom{$s_2$}};
		\node[simpstate,below=0.8cm of s10,fill=orange!30] (s20) {$s_2^0$};
		\node[simpstate,right=of s00,fill=blue!30] (s01) {$s_0^1$};
		\node[simpstate,below=0.8cm of s01,fill=blue!30] (s11) {$s_1^1$};
		\node[simpstate,below=0.8cm of s11,fill=orange!30] (s21) {$s_2^1$};
		\node[simpstate,right=0.4cm of s01] (bot) {$\bot$};
		\node[simpstate,below=0.8cm of bot] (top) {$\top$};
		
			\node[act,above=0.3cm of s00] (a000) {}; 
		\node[act,right=0.2cm of s00] (a010) {}; 
		\node[act,right=0.2cm of s10] (a100) {}; 
		\node[act,right=0.2cm of s20] (a200) {}; 
		
		\draw[-] (s00) edge node {} (a000);
		\draw[-] (s00) edge node {} (a010);
		\draw[-] (s10) edge node {} (a100);
		\draw[-] (s20) edge node {} (a200);
		\draw[->] (a000) edge[bend left] node[below] {$1$} (s01);
		\draw[->] (a010) edge node[pos=0.1,right] {$\nicefrac{1}{4}$} (s11);
		\draw[->] (a010) edge[bend left=10] node[pos=0.1,left] {$\nicefrac{3}{4}$} (s21);
		\draw[->] (a100) edge[bend left=13] node[pos=0.3,left] {$\nicefrac{1}{2}$} (s01);
		\draw[->] (a100) edge node[pos=0.4,above] {$\nicefrac{1}{8}$} (s11);
		\draw[->] (a100) edge[bend right=13] node[pos=0.2,left] {$\nicefrac{3}{8}$} (s21);
		\draw[->] (a200) edge[bend right=5] node[pos=0.14,left] {$\nicefrac{1}{2}$} (s01);
		\draw[->] (a200) edge[bend right=5] node[pos=0.1,right] {$\nicefrac{1}{8}$} (s11);
		\draw[->] (a200) edge node[pos=0.1,below] {$\nicefrac{3}{8}$} (s21);
		
		\node[act,right=0.1cm of s01] (a011) {}; 
		\node[act,right=0.1cm of s11] (a101) {}; 
		\node[act,right=0.1cm of s21] (a201) {}; 
		\draw[-] (s01) edge node {} (a011);
		\draw[-] (s11) edge node {} (a101);
		\draw[-] (s21) edge node {} (a201);
		
		\draw[->] (a011) edge node[right] {$\nicefrac{1}{2}$} (top);
		\draw[->] (a101) edge node[above] {$1$} (top);
		\draw[->] (a201) edge node[left] {$1$} (top);
		\draw[->] (a011) edge node[above] {$\nicefrac{1}{2}$} (bot);

	\end{tikzpicture}
		\label{fig:poltimeconstruction:unf}
	}
	\subfigure[$\mdp'''$]{
	\begin{tikzpicture}[every node/.style={font=\scriptsize},simpstate/.style={circle,draw,fill=white,inner sep=1pt,minimum width=14pt},baseline=(s00),act/.style={circle,fill=black,inner sep=1pt}]
		\node[simpstate,fill=blue!30] (s00) {$s_0^0$};
		\node[simpstate,below=0.8cm of s00,fill=blue!30] (s10) {$s_1^0$};
		
		\node[simpstate,draw=white,below=1.1cm of s10] (sph) {\phantom{$s_2$}};
		\node[simpstate,below=0.6cm of s10,fill=orange!30] (s20) {$s_2^0$};
		\node[simpstate,right=1cm of s00,fill=blue!30] (s01) {$s_0^1$};
		\node[simpstate,below=0.8cm of s01,fill=blue!30] (s11) {$s_1^1$};
		\node[simpstate,below=0.6cm of s11,fill=orange!30] (s21) {$s_2^1$};
		\node[simpstate,right=0.4cm of s01] (bot) {$\bot$};
		\node[simpstate,below=0.8cm of bot] (top) {$\top$};
		
			\node[act,above=0.3cm of s00] (a000) {}; 
		\node[act,right=0.3cm of s00] (a010) {}; 
		\node[act,right=0.3cm of s10] (a100) {}; 
		\node[act,below=0.1cm of s20] (a200) {}; 
		\node[act,right=0.1cm of s01] (a011) {}; 
		\node[act,right=0.1cm of s11] (a101) {}; 
		\node[act,below=0.14cm of s21] (a201) {}; 
		\draw[-] (s01) edge node {} (a011);
		\draw[-] (s11) edge node {} (a101);
		\draw[-] (s21) edge node {} (a201);
		
		\draw[-] (s00) edge node {} (a000);
		\draw[-] (s00) edge node {} (a010);
		\draw[-] (s10) edge node {} (a100);
		\draw[-] (s20) edge node {} (a200);
		\draw[->] (a000) edge[bend left] node[below] {$1$} (s01);
		\draw[->] (a010) edge node[pos=0.1,right] {$\nicefrac{1}{4}$} (s11);
		\draw[->] (a010) edge[bend left=10] node[pos=0.1,left] {$\nicefrac{3}{4}$} (s21);
		\draw[->] (a100) edge[bend left=13] node[pos=0.3,left] {$\nicefrac{1}{2}$} (s01);
		\draw[->] (a100) edge node[pos=0.4,above] {$\nicefrac{1}{8}$} (s11);
		\draw[->] (a100) edge[bend right=13] node[pos=0.2,left] {$\nicefrac{3}{8}$} (s21);
				\draw[->] (a200) -- +(-0.5,0) |- node[left] {$\nicefrac{1}{2}$} (s00);
		\draw[->] (a200) -- +(-0.5,0) |- node[left] {$\nicefrac{1}{2}$} (s10);
		\draw[->] (a201) -- +(-2.05,0) |- node[left] {} (s00);
		\draw[->] (a201) -- +(-2.05,0) |- node[left] {} (s10);

		\draw[->] (a011) edge node[right] {$\nicefrac{1}{2}$} (top);
		\draw[->] (a101) edge node[above] {$1$} (top);
		\draw[->] (a011) edge node[above] {$\nicefrac{1}{2}$} (bot);
		
	\end{tikzpicture}
			\label{fig:poltimeconstruction:uncond}

	}
	\caption{Polynomial-time algorithm for solving Problem~1 illustrated.}
	\label{fig:polconstruction}
\end{figure}
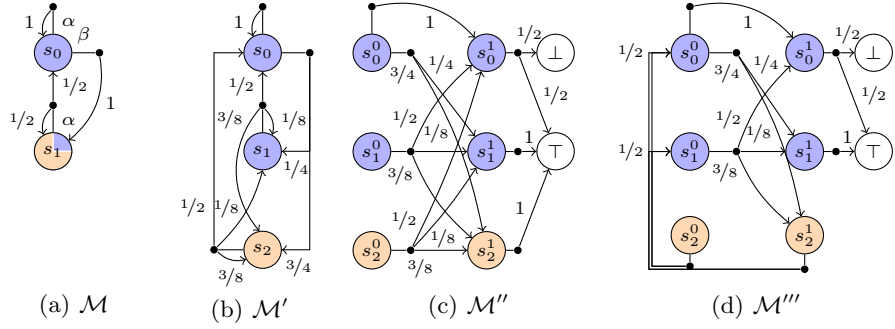
\begin{example}
We illustrate the construction in Fig.~\ref{fig:polconstruction}.
In Fig.~\ref{fig:poltimeconstruction:orig}, we depict an MDP $\system$, with $\iota = \{ s_0, s_1 \mapsto \nicefrac{1}{2} \}$.
Furthermore, let $\observationtrace = \textcolor{blue!100}{\obs_0\obs_0}$ and let $\staterisk(s_0) = 1$ and $\staterisk(s_1) = 2$.
 Let $\obsfun(s_0) = \{ \textcolor{blue}{\obs_0} \mapsto 1 \}$ and $\obsfun(s_1) = \{ \textcolor{blue!100}{\obs_0} \mapsto \nicefrac{1}{4}, \textcolor{orange!100}{\obs_1} \mapsto \nicefrac{3}{4} \}$.
State $s_1$ has two possible observations, so we split $s_1$ into $s_1$ and $s_2$ in MDP $\mdp'$, each with their own observations. Any transition into $s_1$ is now split.
As $|\observationtrace| = 2$, we unroll the MDP $\mdp'$ into MDP $\mdp''$ to represent two steps, and add goal and sink states. After rescaling, we obtain that $r(s_0) = \nicefrac{1}{2}$, whereas $r(s_1) = r(s_2) = \nicefrac{2}{2} =  1$, and we add the appropriate outgoing transitions to the states $s^1_\ast$.
In a final step, we create MDP $\mdp'''$ from $\mdp''$: we reroute all probability mass that does not agree with the observations to the initial states. Now, $\tracerisk{\staterisk}(\textcolor{blue!100}{\obs_0\obs_0})$ is  given by the probability to reach, in $\mdp'''$, in an unbounded number of steps, $\top$.
\end{example}
The construction also implies that maximizing over a finite set of schedulers, namely the deterministic schedulers with a counter from $0$ to $|\tau|$, suffices.
We denote this class $\dcscheds(|\tau|)$. 
Formally, a scheduler is in $\dcscheds(k)$ if $\text{for all } \pi,\pi'$: \[  \Big(\last{\pi} = \last{\pi'} \land \big(|\pi| = |\pi'| \lor (|\pi| > k \land |\pi'| > k)\big)\Big) \text{ implies } \sched(\pi) = \sched(\pi').\]
\begin{lemma}For every $\observationtrace$, it holds that
\label{lem:dcschedssuffice}
\[
	\tracerisk{\staterisk}(\observationtrace) \quad=\quad \max_{\sched \in \color{red}\dcscheds(|\trace|)\color{black}} \sum_{\path \in \boundedpaths{M}{|\trace|}} \prpath[\sched]{\path \mid \trace} \cdot \staterisk(\last{\path}).
	\]
\end{lemma}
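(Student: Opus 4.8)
The plan is to prove the two inequalities separately. The inclusion $\dcscheds(|\trace|) \subseteq \scheds{\system}$ is immediate from the definitions, so restricting the optimisation to this smaller class can only lower the value; this already yields $\max_{\sched \in \dcscheds(|\trace|)} \sum_{\path} \prpath[\sched]{\path \mid \trace}\cdot\staterisk(\last{\path}) \le \tracerisk{\staterisk}(\observationtrace)$, the right-hand side being the supremum over all schedulers. The real work is the converse: to exhibit a single scheduler in $\dcscheds(|\trace|)$ whose value already attains $\tracerisk{\staterisk}(\observationtrace)$, which simultaneously shows that the defining supremum is in fact a maximum.

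For the converse I would not reason about arbitrary schedulers directly but ride on the geometric characterisation already established. By \Cref{theo:chcomp}, $\tracerisk{\staterisk}(\observationtrace) = \max_{\belief \in \vertices{\estimatorND(\observationtrace)}} \sum_{s} \belief(s)\cdot\staterisk(s)$, so the optimum is attained at some vertex $\belief^\star$, and by \Cref{lem:finitevertices} this maximum indeed exists. The key structural fact is the remark accompanying \Cref{lem:convexsuffices} and \Cref{lem:finitevertices}: every vertex of $\estimatorND(\observationtrace)$ can be produced by iterating $\estimatorNDupdate$ with \emph{deterministic} action selectors $\varsigma_i \colon S \to \Act$, one for each step $i \in \{0,\dots,|\trace|-1\}$. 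Reading such a finite sequence $(\varsigma_0,\dots,\varsigma_{|\trace|-1})$ as a scheduler, the action chosen after any path depends only on its current (last) state and on how many steps have elapsed; extending the selectors beyond step $|\trace|-1$ by any fixed state-indexed rule, the resulting scheduler $\sched^\star$ satisfies exactly the agreement condition defining $\dcscheds(|\trace|)$. Hence $\sched^\star \in \dcscheds(|\trace|)$.

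It remains to match values. The correspondence underlying \Cref{thm:mdpthm} associates to the per-step selectors $(\varsigma_i)$ precisely the belief $\belief^\star$ they compute, and establishes by induction on $|\trace|$ that $\sum_{\path \in \boundedpaths{\system}{|\trace|}} \prpath[\sched^\star]{\path \mid \trace}\cdot\staterisk(\last{\path}) = \sum_{s} \belief^\star(s)\cdot\staterisk(s)$. Chaining this with $\sum_{s} \belief^\star(s)\cdot\staterisk(s) = \tracerisk{\staterisk}(\observationtrace)$ gives the converse inequality, and combined with the first paragraph we conclude equality, with the maximum attained by $\sched^\star$. I expect the main obstacle to be the bookkeeping in this last step: one must verify that the scheduler-to-belief correspondence of \Cref{thm:mdpthm} sends a memoryless-per-step \emph{deterministic} selector sequence to a vertex (rather than an interior belief) and, conversely, that the chosen vertex $\belief^\star$ is genuinely realised by such a selector sequence and not merely approximated---this is exactly what the restriction of $\varsigma_\belief$ to $S \to \Act$ in the remark after \Cref{lem:finitevertices} secures. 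An alternative, equivalent route is to pass through the unrolled MDP $\mdp'''$: a memoryless deterministic optimal scheduler for maximal reachability of $\top$ (which exists by standard finite-MDP theory) pulls back along the unrolling to a decision depending only on state and counter $i$, i.e.\ to a member of $\dcscheds(|\trace|)$, with value preservation guaranteed by the correctness of the conditional-reachability construction.
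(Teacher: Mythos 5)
Your proof is correct, but it takes a genuinely different route from the paper. The paper derives \Cref{lem:dcschedssuffice} as a corollary of the unrolling construction of \Cref{sec:unrolling}: the trace risk equals a maximal (unconditional) reachability probability in the transformed MDP $\mdp'''$, memoryless deterministic schedulers suffice for such reachability objectives in finite MDPs, and states of the unrolling map bijectively to pairs of a state of $\mdp$ and a step counter, so pulling the optimal memoryless scheduler back yields exactly a member of $\dcscheds(|\trace|)$ --- this is your ``alternative route,'' which you sketch correctly in your last sentences. Your primary argument instead stays inside the filtering machinery of \Cref{sec:pruning}: the optimum is attained at a vertex $\belief^\star$ of $\estimatorND(\observationtrace)$ by \Cref{theo:chcomp} and \Cref{lem:finitevertices}, every vertex is generated by a per-step \emph{deterministic} selector sequence (via \Cref{lem:convexsuffices} and the remark that restricting $\varsigma_\belief$ to $S \rightarrow \Act$ suffices for vertices), and the scheduler-to-belief correspondence in the proof of \Cref{thm:mdpthm} transfers the value of $\belief^\star$ to the induced counter-based scheduler. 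What each approach buys: the paper's route leans only on classical MDP theory (memoryless optimality for reachability) and simultaneously certifies the correctness of the model-checking algorithm, whereas yours constructs the optimal scheduler in $\dcscheds(|\trace|)$ explicitly from the optimal vertex and needs no conditional-probability transformation; the price is that you inherit the paper's lightly justified convexity assertions (that the normalized update is convexity-preserving and that deterministic selectors generate all vertices), so your proof is only as rigorous as those remarks. Two trivial bookkeeping points: the number of selectors is $|\trace|-1$ rather than $|\trace|$ (no selector is needed for the first observation), and under the paper's \emph{formal} definition of $\dcscheds(k)$ (which does not literally require determinism, only counter-dependence) even randomized per-step selectors would do, so your care about determinism is needed only for the prose reading of that class.
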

The crucial idea underpinning this lemma is that memoryless schedulers suffice for the unrolling, and that the states of the unrolling can be uniquely mapped to a state and the length of the history for every $\pi$ through $\mdp$.
By reducing step-bounded reachability we can also show that this set of schedulers is necessary~\cite{DBLP:conf/formats/AndovaHK03}.

%
%

\section{Empirical Evaluation}
\label{sec:empirical}
\paragraph{Implementation.}

We provide prototype implementations for both filtering- and model-checking-based approaches from Sec.~\ref{sec:algorithms}, built on top of the probabilistic model checker \storm~\cite{DBLP:journals/corr/abs-2002-07080}. We provide a schematic setup of our implementation in Fig.~\ref{fig:prototype}.
\begin{figure}[t]
\centering
	\begin{tikzpicture}
		\node[rectangle, draw, fill=black!4, align=center,inner sep=2pt] (sim) {System\\Simulator};
		\node[rectangle, draw, fill=black!4, inner sep=2pt, align=center,above=8mm of sim, xshift=-1.7cm] (mc) {Model\\checker};
		\node[rectangle,left=4mm of mc,align=center] (tp) {State risk\\\scriptsize{temporal spec}};
		\node[rectangle,below=0.2cm of tp,align=center] (pm) {MDP\\\scriptsize{(Prism lang.)}};
		\node[below=2mm of pm] (seed) {Seed};

		\node[right=12mm of sim,draw=blue,fill=blue!10,inner sep=2pt,align=center, minimum width=21mm] (unf) {\textbf{Unrolling}\\Sec.~\ref{sec:unrolling}};
		\node[above=2mm of unf,draw=blue,fill=blue!10,align=center, inner sep=2pt, minimum width=21mm] (ff) {\textbf{Forw.~Filter}\\Sec.~\ref{sec:pruning}};

		\node[right=8mm of ff,align=center] (ch) {\scriptsize{Convex}\\\scriptsize{Hull}};
		\node[right=8mm of unf,align=center] (mc2) {\scriptsize{Model}\\\scriptsize{checking}};
		
		\node[fit=(ff)(unf)(mc2)(ch),draw=green!50!black,dashed,label={[label distance=1mm]south:\color{green!50!black}Monitor},fill=green!4] (mon) {};

		\node[right=12mm of sim,draw=blue,fill=blue!10,inner sep=2pt,align=center, minimum width=21mm] (unf) {\textbf{Unrolling}\\Sec.~\ref{sec:unrolling}};
		\node[above=2mm of unf,draw=blue,fill=blue!10,align=center, inner sep=2pt, minimum width=21mm] (ff) {\textbf{Forw.~Filter}\\Sec.~\ref{sec:pruning}};

		\node[right=8mm of ff,align=center] (ch) {\scriptsize{Convex}\\\scriptsize{Hull}};
		\node[right=8mm of unf,align=center] (mc2) {\scriptsize{Model}\\\scriptsize{checking}};

		\draw[->] (mc.0) -- node[above] {$r\colon S\rightarrow \RRnn$} +(2.4,0) -- (mon);
		\draw[->] (sim.0) -- node[above] {obs $\obs_i$} (mon.192);
		
		\node[rectangle, draw, above=2mm of mon] (mem) {\scriptsize{$<$internal memory$>$ after $\obs_0 \hdots \obs_{i-1}$}};
		
		\draw[->] (mon.70) -- (mon.70|-mem.south);
		\draw[<-] (mon.110) -- (mon.110|-mem.south);

		\node[fit=(sim)(ff)(unf)(mc2)(ch)(mon)(mem),draw,dotted,label={north:iterative process},inner sep=4pt] (proc) {};
		\node[fit=(seed)(pm)(tp),draw,dotted,label={north:input},inner sep=4pt] (inp) {};
		
		\draw[->,dotted] (ff) -- node[above] {\scriptsize{\emph{uses}}} (ch);
		\draw[->,dotted] (unf) -- node[above] {\scriptsize{\emph{uses}}} (mc2);
		
		\draw[->] (seed) -- (sim);
		\draw[->] (tp) -- (mc);
		
		\draw[->] (pm.0) -- (mc);
		
		\draw[->] (pm.0) -- (sim);
		\draw[->] (pm.0) -- (mon);
		\node[right=5mm of mon,align=center] (risk) {risk\\$R_i$};
		\draw[->] (mon.0) -- (risk);

	\end{tikzpicture}
	\caption{Schematic setup for prototype mapping stream $\obs_0\hdots \obs_k$ to stream $R_0 \hdots R_k$.}
	\label{fig:prototype}
\end{figure}
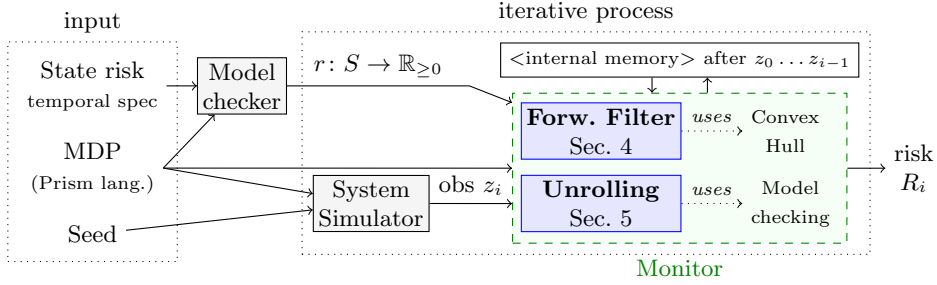
As input, we consider a symbolic description of MDPs with state-based observation labels, based on an extended dialect of the Prism language.
We define the state risk in this MDP via a temporal property (given as a \textsc{PCTL} formula), and obtain the concrete state-risk by model checking.
We take a seed that yields a trace using the simulator. For the experiments, actions are resolved uniformly in this simulator\footnote{This is not an assumption but rather our evaluation strategy.}.
The simulator iteratively feeds observations into the monitor, running either of our two algorithms (implemented in C++). After each observation $z_i$, the monitor computes the risk $R_i$ having observed $\obs_0\hdots\obs_i$. 
We flexibly combine these components via a Python API~\footnote{Available at \url{https://github.com/monitoring-MDPs/premise}}.

For filtering as in Sec.~\ref{sec:pruning}, we provide a sparse data structure for beliefs that is updated using only deterministic schedulers. This is sufficient, see \Cref{lem:convexsuffices}.
To further prune the set of beliefs, we implement an SMT-driven elimination~\cite{DBLP:reference/cg/Seidel04} of interior beliefs, inside of the convex hull\footnote{Advanced algorithms like Quickhull~\cite{DBLP:journals/toms/BarberDH96} are not without significant adaptions applicable as the set of beliefs can be degenerate  (roughly, a set without full rank).}.
We construct the unrolling as described in Sec.~\ref{sec:unrolling} and apply model checking via any sparse engines in \storm. 

\paragraph*{Reproducibility.}
We archived a container with sources, benchmarks, and scripts to reproduce our experiments: {\url{https://doi.org/10.5281/zenodo.4724622}}.

\paragraph*{Set-up.}
For each benchmark described below, we sampled 50 random traces using seeds 0--49 of lengths up to $|\trace|=500$. 
We are interested in the \emph{promptness}, that is, the delay of 
 time between getting an observation $\obs_i$ and returning corresponding risk $r_i$, as well as the \emph{cumulative performance} obtained by summing over the promptness along the trace. We use a timeout of $1$ second for this query. 
We compare the forward filtering (FF) approach with and without convex hull (CH) reduction, and the model unrolling approach (UNR) with two model checking engines of \storm: exact policy iteration (EPI, \cite{Put94}) and optimistic value iteration~(OVI, \cite{DBLP:conf/cav/HartmannsK20}).
All experiments are run on a MacBook Pro MV962LL/A, using a single core. The memory limit of 6GB was not violated. We use Z3~\cite{DBLP:conf/tacas/MouraB08} as SMT-solver~\cite{DBLP:series/faia/BarrettSST09} for the convex hull reduction.

\paragraph*{Benchmarks.} We present three benchmark families, all MDPs with a combination of probabilities, nondeterminism and partial observability. 

\smallskip\noindent\benchmark{Airport-A} is as in Sec.~\ref{sec:examples}, but with a higher resolution for both ground vehicle in the middle lane and the plane. \benchmark{Airport-B} has a two-state sensor model with stochastic transitions between them.

\smallskip\noindent\benchmark{Refuel-A}  models robots with a depleting battery and recharging stations. The world model consists of a robot moving around in a  $D{\times}D$ grid with some dedicated charging cells, where each action costs energy. The risk is to deplete the battery within a fixed horizon. \benchmark{Refuel-B} is a two-state sensor variant.

\smallskip\noindent\benchmark{Evade-I} is inspired by a navigation task in a multi-agent setting in a   $D{\times}D$ grid. The monitored robot moves randomly, and the risk is defined as the probability of crashing with the other robot. 
The other robot has an internal incentive in the form of a cardinal direction, and nondeterministically decides to move or to uniformly randomly change its incentive.
The monitor observes everything except the incentive of the other robot. 
\benchmark{Evade-V} is an alternative navigation task: Contrary to above, the other robot does not have an internal state and indeed navigates nondeterministically in one of the cardinal directions. We only observe the other robot  location is within the view range.

	\begin{table}[t]
	\centering
	\caption{Performance for promptness of online monitoring on various benchmarks.}
	{\footnotesize
	\scalebox{0.86}{

     \begin{tabular}{lll|rrr|rrrrrrr|rrrrr|}
    &      &      &          &               &         & \multicolumn{7}{c|}{Forward Filtering} & \multicolumn{5}{c|}{Unrolling}  \\
    Id & Name & Inst & $|S|$ & $|P|$ & $|\trace|$ & $N$ &  $\underset{\text{avg}}{T}$ & $\underset{\text{max}}{T}$ &  $\underset{\text{avg}}{B}$ & $\underset{\text{max}}{B}$     & $\underset{\text{avg}}{D}$ & $\underset{\text{max}}{D}$ & $N$ &  $\underset{\text{avg}}{T}$ & $\underset{\text{max}}{T}$ & $\underset{\text{avg}}{|S_u|}$ & $\underset{\text{max}}{|S_u|}$ \\\hline
    
\multirow{2}{*}{1}	& \multirow{2}{*}{\benchmark{airport-A}}	& \multirow{2}{*}{7,50,30}	& \multirow{2}{*}{20910}	& \multirow{2}{*}{114143} & 100	&\highlightPASS{50}	& 0.01	& 0.01	& 4.5	& 7	& 4.6	& 7	& \highlightPASS{50}	& 0.04	& 0.11	& 524	& 599\\
			&	&	&	&	&500	&\highlightPASS{50}	& 0.01	& 0.01	& 1.0	& 1	& 1.0	& 1	& \highlightPASS{50}	& 0.01	& 0.01	& 1075	& 1258\\
\hline
\multirow{2}{*}{2}	& \multirow{2}{*}{\benchmark{airport-B}}	& \multirow{2}{*}{3,50,30}	& \multirow{2}{*}{20232}	& \multirow{2}{*}{106012} & 100	&\highlightFAIL{0}	& 	& 	& 	& 	& 	& 	& \highlightPASS{50} 	& 0.09	& 0.16	& 556	& 629\\
			&	&	&	&	&500	&\highlightFAIL{0}	& 	& 	& 	& 	& 	& 	& \highlightPASS{50}	& 0.01	& 0.01	& 1460	& 1647\\
\hline
\multirow{2}{*}{3}	& \multirow{2}{*}{\benchmark{airport-B}}	& \multirow{2}{*}{7,50,30}	& \multirow{2}{*}{41820}	& \multirow{2}{*}{308474} & 100	&\highlightFAIL{0}	& 	& 	& 	& 	& 	& 	& \highlightPASS{50}	& 0.14	& 0.33	& 1000	& 1183\\
			&	&	&	&	&500	&\highlightFAIL{0}	& 	& 	& 	& 	& 	& 	& \highlightFAIL{11} 	& 0.02	& 0.02	& 2097	& 2297\\
\hline
\multirow{2}{*}{4}	& \multirow{2}{*}{\benchmark{refuel-A}}	& \multirow{2}{*}{12,50}	& \multirow{2}{*}{45073}	& \multirow{2}{*}{2431691} & 100	&\highlightPASS{50}	& 0.01	& 0.01	& 2.2	& 4	& 2.8	& 5	& \highlightPASS{50}	& 0.01	& 0.05	& 325	& 409\\
			&	&	&	&	&500	&\highlightPASS{50}	& 0.01	& 0.01	& 1.5	& 4	& 1.7	& 5	& \highlightPASS{50} 	& 0.01	& 0.19	& 1071	& 2409\\
\hline
\multirow{2}{*}{5}	& \multirow{2}{*}{\benchmark{refuel-B}}	& \multirow{2}{*}{12,50}	& \multirow{2}{*}{90145}	& \multirow{2}{*}{9725277} & 100	&\highlightPASS{50}	& 0.06	& 0.23	& 4.2	& 8	& 5.6	& 10	& \highlightPASS{50}	& 0.04	& 0.17	& 608	& 732\\
			&	&	&	&	&500	&\highlightPASS{50}	& 0.01	& 0.01	& 2.9	& 8	& 3.3	& 10	& \highlightFAIL{46} 	& 0.04	& 0.09	& 2171	& 4688\\
\hline
\multirow{2}{*}{6}	& \multirow{2}{*}{\benchmark{evade-I}}	& \multirow{2}{*}{15}	& \multirow{2}{*}{377101}	& \multirow{2}{*}{2022295} & 100	&\highlightPASS{50}	& 0.01	& 0.02	& 2.6	& 10	& 3.3	& 4	& \highlightFAIL{49}	& 0.01	& 0.06	& 332	& 363\\
			&	&	&	&	&500	&\highlightPASS{50}	& 0.01	& 0.01	& 2.4	& 5	& 3.4	& 4	& \highlightFAIL{45}	& 0.08	& 0.90	& 1655	& 1891\\
\hline
\multirow{2}{*}{7}	& \multirow{2}{*}{\benchmark{evade-V}}	& \multirow{2}{*}{5,3}	& \multirow{2}{*}{1001}	& \multirow{2}{*}{5318} & 100	&\highlightFAIL{26}	& 0.01	& 0.01	& 1.0	& 1	& 1.0	& 1	& \highlightPASS{50}	& 0.00	& 0.02	& 134	& 241\\
			&	&	&	&	&500	&\highlightFAIL{25}	& 0.01	& 0.01	& 1.0	& 1	& 1.0	& 1	& \highlightPASS{50}	& 0.00	& 0.01	& 538	& 671\\
\hline
\multirow{2}{*}{8}	& \multirow{2}{*}{\benchmark{evade-V}}	& \multirow{2}{*}{6,3}	& \multirow{2}{*}{2161}	& \multirow{2}{*}{11817} & 100	&\highlightFAIL{1}	& 0.01	& 0.01	& 1.0	& 1	& 1.0	& 1	& \highlightPASS{50}	& 0.02	& 0.32	& 319	& 861\\
			&	&	&	&	&500	&\highlightFAIL{1}	& 0.01	& 0.01	& 1.0	& 1	& 1.0	& 1	& \highlightFAIL{49}	& 0.01	& 0.02	& 777	& 1484\\
\hline

    \end{tabular}
    
}}
\label{tab:promptness}
\vspace{-0.2cm}
\end{table}

\paragraph*{Results.}
We split our results in two tables.
In Table~\ref{tab:promptness}, we give an ID for every benchmark name and instance, along with the size of the MDP (nr.\ of states $|S|$ and transitions $|P|$) our algorithms operate on. 
We consider the promptness after prefixes of length $|\trace|$.
In particular, for forward filtering with the convex hull optimization, we give the number $N$ of traces that did not time out before, and consider the average $T_\text{avg}$ and maximal time $T_\text{max}$ needed (over all sampled traces that did not time-out before). 
Furthermore, we give the average, $B_\text{avg}$, and maximal, $B_\text{max}$, number of beliefs stored (after reduction), and the average, $D_\text{avg}$, and maximal, $D_\text{max}$,  dimension of the belief support. 
Likewise, for unrolling with exact model checking, we give the number $N$ of traces that did not time out before, and we consider average $T_\text{avg}$ and maximal time $T_\text{max}$, as well as the average size and maximal number of states of the unfolded MDP.
\begin{table}
\caption{Summarized performance for online monitoring}
\centering
{\footnotesize
\scalebox{0.86}{

          \begin{tabular}{lr|rrrrr|rrrr|rrrrr|rrr|}
             &      &  \multicolumn{5}{c|}{FF w/ CH}  &  \multicolumn{4}{c|}{FF w/o CH}      & \multicolumn{5}{c|}{UNR (EPI)} & \multicolumn{3}{c|}{UNR (OVI)}   \\
        Id & $|\trace|$ & $N$ & $\underset{\text{avg}}{T}$ & $\underset{\text{max}}{T}$ & $\underset{\text{avg}}{B}$ & $\underset{\text{avg}}{E}$ & $N$ & $\underset{\text{avg}}{T}$ & $\underset{\text{max}}{T}$ & $\underset{\text{avg}}{B}$  & $N$ & $\underset{\text{avg}}{T}$ & $\underset{\text{max}}{T}$ &    $\underset{\text{avg}}{Bld^\%}$ & $\underset{\text{max}}{Bld^\%}$ & N  & $\underset{\text{avg}}{T}$ & $\underset{\text{max}}{T}$  \\\hline
        
1	& 100	& \highlightPASS{50}	& 0.9	& 1.1	& 493	& 241	& \highlightFAIL{0}	& 	& 	& 	& \highlightPASS{50}	& 2.9	& 3.6	& 6	& 56	& \highlightPASS{50}	& 0.0	& 0.1\\
			&	500	& \highlightPASS{50}	& 3.7	& 4.3	& 1040	& 316	& \highlightFAIL{0}	& 	& 	& 	& \highlightPASS{50}	& 7.5	& 10.7	& 21	& 24	& \highlightPASS{50}	& 0.4	& 0.8\\
\hline
2	& 100	& \highlightFAIL{0}	& 	& 	& 	& 	& \highlightFAIL{0}	& 	& 	& 	& \highlightPASS{50}	& 3.7	& 4.7	& 6	& 54	& \highlightPASS{50}	& 0.1	& 0.1\\
			&	500	& \highlightFAIL{0}	& 	& 	& 	& 	& \highlightFAIL{0}	& 	& 	& 	& \highlightPASS{50}	& 11.9	& 17.1	& 18	& 23	& \highlightPASS{50}	& 0.6	& 0.8\\
\hline
3	& 100	& \highlightFAIL{0}	& 	& 	& 	& 	& \highlightFAIL{0}	& 	& 	& 	& \highlightPASS{50}	& 7.6	& 10.6	& 5	& 55	& \highlightPASS{50}	& 0.1	& 0.2\\
			&	500	& \highlightFAIL{0}	& 	& 	& 	& 	& \highlightFAIL{0}	& 	& 	& 	& \highlightFAIL{11}	& 21.3	& 28.7	& 19	& 23	& \highlightPASS{50}	& 0.9	& 1.7\\
\hline
4	& 100	& \highlightPASS{50}	& 0.7	& 0.8	& 241	& 138	& \highlightFAIL{1}	& 0.9	& 0.9	& 1473	& \highlightPASS{50}	& 0.7	& 1.0	& 35	& 69	& \highlightPASS{50}	& 0.0	& 0.1\\
			&	500	& \highlightPASS{50}	& 3.4	& 3.7	& 868	& 226	& \highlightFAIL{1}	& 0.9	& 0.9	& 1873	& \highlightPASS{50}	& 5.6	& 21.2	& 57	& 67	& \highlightPASS{50}	& 0.5	& 0.9\\
\hline
5	& 100	& \highlightPASS{50}	& 7.4	& 10.7	& 442	& 2267	& \highlightFAIL{0}	& 	& 	& 	& \highlightPASS{50}	& 2.5	& 4.4	& 32	& 57	& \highlightPASS{50}	& 0.1	& 0.2\\
			&	500	& \highlightPASS{50}	& 16.5	& 42.2	& 1781	& 4249	& \highlightFAIL{0}	& 	& 	& 	& \highlightFAIL{46}	& 19.5	& 64.2	& 55	& 70	& \highlightPASS{50}	& 1.3	& 2.3\\
\hline
6	& 100	& \highlightPASS{50}	& 1.1	& 4.8	& 273	& 160	& \highlightFAIL{13}	& 0.7	& 2.9	& 2055	& \highlightFAIL{49}	& 0.5	& 2.0	& 34	& 65	& \highlightFAIL{47}	& 0.0	& 0.1\\
			&	500	& \highlightPASS{50}	& 5.1	& 11.5	& 1237	& 632	& \highlightFAIL{2}	& 4.4	& 6.8	& 20524	& \highlightFAIL{45}	& 22.4	& 53.6	& 13	& 29	& \highlightFAIL{43}	& 0.5	& 0.7\\
\hline
7	& 100	& \highlightFAIL{26}	& 0.8	& 1.2	& 106	& 11	& \highlightFAIL{13}	& 0.1	& 0.5	& 274	& \highlightPASS{50}	& 0.4	& 1.0	& 19	& 45	& \highlightFAIL{48}	& 0.0	& 0.1\\
			&	500	& \highlightFAIL{25}	& 3.7	& 4.2	& 505	& 7	& \highlightFAIL{13}	& 0.1	& 0.5	& 674	& \highlightPASS{50}	& 1.3	& 4.4	& 46	& 58	& \highlightFAIL{47}	& 0.2	& 0.3\\
\hline
8	& 100	& \highlightFAIL{1}	& 1.3	& 1.3	& 124	& 109	& \highlightFAIL{0}	& 	& 	& 	& \highlightPASS{50}	& 1.5	& 7.0	& 15	& 39	& \highlightFAIL{36}	& 0.4	& 5.6\\
			&	500	& \highlightFAIL{1}	& 4.3	& 4.3	& 524	& 109	& \highlightFAIL{0}	& 	& 	& 	& \highlightFAIL{49}	& 4.9	& 28.1	& 37	& 56	& \highlightFAIL{35}	& 0.7	& 6.4\\
\hline

        \end{tabular}
        	
}}
\label{tab:performance}
\vspace{-0.2cm}
\end{table}
In Table~\ref{tab:performance}, we consider for the benchmarks above the cumulative performance. In particular, this table also considers an alternative implementation for both FF and UNR.
We use the IDs to identify the instance, and sum for each prefix of length $|\trace|$ the time.
For filtering, we recall the number of traces $N$ that did not time out, the average and maximal cumulative time along the trace, the average cumulative number of beliefs that were considered, and the average cumulative number of beliefs eliminated. For the case without convex hull, we do not eliminate any vertices.  
For unrolling, we report average $T_\text{avg}$ and maximal cumulative time using EPI, as well as the time required for model building, $\mathit{Bld^{\%}}$ (relative to the total time, per trace). 
We compare this to the average and maximal cumulative time for using OVI (notice that building times remain approximately the same).

\paragraph*{Discussion.}
The results from our prototype show that conservative (sound) predictive modeling of systems that combine probabilities, nondeterminism and partial observability is within reach with the methods we proposed and state-of-the-art algorithms. 
Both forward filtering and an unrolling-based approaches have their merits. The practical results thus slightly diverge from the complexity results in Sec.~\ref{sec:operationalmodel}, due to structural properties of some benchmarks. In particular, for \benchmark{airport-A} and \benchmark{refuel-A}, the nondeterminism barely influences the belief, and so there is no explosion, and consequentially the dimension of the belief is sufficiently small that the convex hull can be efficiently computed. 
Rather than the number of states, this  belief dimension makes \benchmark{evade-V} a difficult benchmark\footnote{The max dimension ${=}1$ in \benchmark{evade-V} is only over the traces that did not time-out. The dimension when running in time-outs is above  $5$.}. \emph{If many states can be reached with a particular trace, and if along these paths there are some probabilistic states, forward filtering suffers significantly}. We see that if the benchmark allows for efficacious forward filtering, it is not slowed down in the way that unrolling is slower on longer traces. 
For UNR, we observe that OVI is typically the fastest, but EPI does not suffer from the numerical worst-cases as OVI does. \emph{If an observation trace is unlikely, the unrolled MDP constitutes a numerically challenging problem, in particular for  value-iteration  based model checkers}, see~\cite{DBLP:journals/tcs/HaddadM18}.
For FF, the convex hull computation is essential for any dimension, and eliminating some vertices in every step keeps the number of belief states manageable. 
%


\section{Related work}
We are not the first to consider model-based runtime verification in the presence of partial observability and probabilities.
Runtime verification with state estimation on hidden Markov models (HMM)---without nondeterminism has been studied for various types of properties~\cite{DBLP:conf/vmcai/SistlaS08,RTVStochasticFaultySystems,DBLP:conf/rv/StollerBSGHSZ11} and has  been extended to hybrid systems~\cite{RTMStochasticCPS}.
The tool Prevent focusses on black-box systems by learning an HMM from a set of traces. The HMM approximates (with only convergence-in-the-limit guarantees) the actual system~\cite{DBLP:conf/sefm/BabaeeGF18},
and then estimates during runtime the most likely trace rather than estimating a  distribution over current states. Extensions consider symmetry reductions on the models~\cite{DBLP:conf/rv/BabaeeGF18}.
These techniques do not make a conservative (sound) risk estimation.
The recent framework for runtime verification in the presence of partial observability~\cite{DBLP:conf/rv/CimattiTT19} takes a more strict black-box view and cannot provide state estimates.
Finally, \cite{DBLP:conf/concur/GrigoreK18} chooses to have partial observability to make monitoring of software systems more efficient, and \cite{DBLP:conf/rtss/WooM07} monitors a noisy sensor to reduce  energy consumption.

State beliefs are studied when verifying HMMs~\cite{DBLP:conf/forte/ZhangHJ05}, where the question whether a sequence of observations likely occurs, or which HMM is an adequate representation of a system~\cite{DBLP:conf/lics/KieferS16}.
State beliefs are prominent in the verification of partially observable MDPs~\cite{DBLP:journals/rts/Norman0Z17,DBLP:conf/ijcai/HorakBC18,DBLP:conf/atva/BorkJKQ20}, where one can observe the actions taken (but the problem itself is to find the right scheduler). 
Our monitoring problem can be phrased as a special case of verification of partially observable stochastic games~\cite{DBLP:journals/tocl/Chatterjee014}, but automatic techniques for those very general models are  lacking.
Likewise, the idea of \emph{shielding} (pre)computes all action choices that lead to safe behavior~\cite{DBLP:conf/tacas/BloemKKW15,DBLP:conf/aaai/AlshiekhBEKNT18,DBLP:journals/corr/DragerFK0U15,DBLP:conf/tacas/Junges0DTK16,DBLP:conf/cav/AvniBCHKP19,DBLP:conf/concur/0001KJSB20}. 
For partially observable settings, shielding again requires to compute partial-information schedulers~\cite{DBLP:journals/tsmc/NamA10,DBLP:conf/aaai/Chatterjee0PRZ17}, contrary to our approach.
Partial observability has also been  studied in the context of diagnosability, studying if a fault has occurred (in the past)~\cite{DBLP:journals/iandc/BertrandHL19}, or what actions uncover faults~\cite{DBLP:conf/fossacs/BertrandFHHH14}.
 We, instead assume partial observability in which we do detect faults, but want to estimate the risk that these faults occur in the future.

The assurance framework for reinforcement learning~\cite{DBLP:journals/corr/abs-1908-00528} implicitly allows for stochastic behavior, but cannot cope with partial observability or nondeterminism.
Predictive monitoring has been combined with deep learning~\cite{DBLP:conf/rv/BortolussiCPSS19} and Bayesian inference~\cite{SriramIros2020}, where the key problem is that the computation of an imminent failure is too expensive to be done exactly.
 More generally, learning automata models has been motivated with runtime assurance~\cite{DBLP:conf/pts/AichernigB0HPRR19,DBLP:conf/fm/TapplerA0EL19}. 
Testing approaches statistically evaluate whether traces are likely to be produced by a given model~\cite{DBLP:journals/fac/GerholdS18}. 
The approach in \cite{DBLP:journals/fmsd/AichernigT19} studies stochastic black-box systems with controllable nondeterminism and iteratively learns a model for the system.
\section{Conclusion}
We have presented the first framework for monitoring based on a trace of observations on models that combine nondeterminism and probabilities.
Future work includes heuristics for approximate monitoring and  for faster convex hull computations, and to apply this work to grey-box (learned) models. 
\bibliographystyle{splncs04}
\bibliography{literature.bib}
\clearpage
\pagebreak
\appendix

\section{On qualitative variants of the problem}
We may reformulate the qualitative problem as follows: First, define the risk as the maximum over reached states, i.e., 
\[\tracerisk{\staterisk}^{\max}(\trace) = \max \limits_{\pi  \in \finpaths{} \text{ s.t. } \obsfuntrace(\pi)(\trace)>0 } r(\last{\pi}) .\]
 This \emph{maximum risk estimation} conservatively states the highest risk that can be achieved after an observation. The quantitative (qualitative) \emph{maximal risk estimation problem} analogously is to decide $\tracerisk{\staterisk}^{\max}(\trace) > \lambda$ with $\lambda \geq 0$ ($\lambda=0$).
\begin{lemma}
\label{lem:maxrisk}
The qualitative risk estimation problem and the quantitative and qualitative maximum risk estimation problem are all logspace interreducible.
\end{lemma}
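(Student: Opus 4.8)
\emph{Proof proposal.} The plan is to route all three problems through a single combinatorial characterisation of positivity and then add one thresholding gadget for the quantitative case. First I would establish the key observation that, for any $\system$, $\staterisk$ and $\trace$,
\[
 \tracerisk{\staterisk}(\observationtrace) > 0 \quad\Longleftrightarrow\quad \exists\, \path \in \boundedpaths{\system}{|\trace|}.\; \obsfuntrace(\path)(\trace) > 0 \;\wedge\; \staterisk(\last{\path}) > 0,
\]
and that $\tracerisk{\staterisk}^{\max}(\observationtrace) > 0$ is equivalent to the very same right-hand side. For the weighted risk this holds because $\tracerisk{\staterisk}(\trace)$ is a supremum over schedulers of sums of \emph{non-negative} terms $\prpath[\sched]{\path \mid \trace}\cdot \staterisk(\last{\path})$; such a sum is positive iff some consistent path carries positive risk. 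The forward direction is immediate (a positive sum needs a positive summand, and $\prpath[\sched]{\path \mid \trace} > 0$ forces $\obsfuntrace(\path)(\trace) > 0$), while for the converse I would exhibit a single fully-randomising memoryless scheduler, under which every finite path---and hence every producible trace---has positive probability, so every consistent witness path receives positive conditional probability. For the maximum risk the equivalence is immediate from the definition, since the maximum over the finite consistent set of paths exceeds $0$ exactly when one of its members does.

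This characterisation already shows that the qualitative risk estimation problem and the qualitative maximum risk estimation problem are the \emph{identical} decision problem on a shared input $\langle \system, \staterisk, \trace\rangle$, so the identity map is a logspace reduction in both directions between them.

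It remains to connect the quantitative maximum risk estimation problem. One direction is trivial: the qualitative maximum problem is exactly the quantitative one restricted to $\lambda = 0$, so $\langle \system, \staterisk, \trace\rangle \mapsto \langle \system, \staterisk, \trace, 0\rangle$ is the required reduction. For the converse I would use a thresholding gadget that leaves $\system$ and $\trace$ untouched and only rewrites the risk: define $\staterisk'(s) = 1$ if $\staterisk(s) > \lambda$ and $\staterisk'(s) = 0$ otherwise. Since a maximum over a finite set exceeds $\lambda$ iff some element does, $\tracerisk{\staterisk}^{\max}(\trace) > \lambda$ iff some consistent path ends in a state with $\staterisk > \lambda$ iff $\tracerisk{\staterisk'}^{\max}(\trace) > 0$; thus $\langle \system, \staterisk, \trace, \lambda\rangle \mapsto \langle \system, \staterisk', \trace\rangle$ reduces the quantitative maximum problem to the qualitative one. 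Each reduction is computable in logarithmic space (the identity maps trivially, and computing $\staterisk'$ needs only a per-state comparison of the rationals $\staterisk(s)$ and $\lambda$), and logspace reductions compose, so all three problems are pairwise logspace interreducible.

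I expect the only delicate step to be the positivity characterisation for the \emph{weighted} risk $\tracerisk{\staterisk}$: some care is needed with Bayes' rule and the $\frac{0}{0} = 0$ convention to ensure that the fully-randomising scheduler indeed makes $\prpath[\sched]{\trace} > 0$ whenever a consistent path exists, so that the conditional probabilities are genuinely positive rather than vacuously zero. Everything else is a routine combination of finite maxima, monotone thresholding, and the observation that the qualitative weighted and qualitative maximum problems collapse to a single condition.
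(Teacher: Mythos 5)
Your proposal is correct and follows essentially the same route as the paper's proof: both establish that the qualitative weighted and qualitative maximum problems coincide via the positivity characterisation (a positive weighted risk exists iff some trace-consistent path ends in a positive-risk state), use the trivial $\lambda=0$ embedding in one direction, and use the identical thresholding gadget $\staterisk'(s)=1$ iff $\staterisk(s)>\lambda$ in the other. Your explicit use of a fully-randomising scheduler to witness positivity of the conditional probabilities is a slightly more careful rendering of a step the paper leaves implicit, but it is the same argument.
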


\begin{proof}[Lemma~\ref{lem:qualonkripke} ]
The qualitative case is a special case of the quantitative case.
Now consider the quantitative case, and assume some fixed $\lambda > 0$. 
Observe that for any fixed scheduler, $ \sum_{\path} \prpath[\sched]{\path \mid \trace} =1$, and that there is a unique path $\path$ such that $\prpath[\sched]{\path \mid \trace} =1$. 
We can reduce the quantitative maximum risk estimation problem to a qualitative monitoring problem as follows. We modify the state risk function by checking state-wise whether the risk at a state is above $\lambda$. If it is, then we set the state risk to $1$, and to zero otherwise. If $\tracerisk{\staterisk}(\trace)>\lambda$ for some $\lambda \in \RRnn$, then there is a path $\path \in \boundedpaths{\system}{|\trace|}$ with $\staterisk(\last{\path})>\lambda$. This in turn means that  $\prpath[\sched]{\path \mid \trace} \cdot \staterisk(\last{\path})>0$. The other way around, if the sum over all paths is zero, then there is no such path. 
\end{proof}

\begin{proof}[Lemma~\ref{lem:maxrisk} ]
\begin{itemize}
\item \emph{The qualitative weighted risk estimation problem and  the qualitative  maximum risk estimation problem coincide.}

The qualitative weighted risk estimation problem and  the qualitative  maximum risk estimation problem coincide.
	In the qualitative case $\tracerisk{\staterisk}(\trace)>0$ holds when at least one path $\path$ matches the trace $\trace$ with a positive probability and $\staterisk(\last{\path})> 0$. In this case, $\tracerisk{\staterisk}^{\max}(\trace) > 0$ also holds, since $\staterisk(\last{\path})> 0$ and $\obsfuntrace(\pi)(\trace)>0$. 

When $\tracerisk{\staterisk}^{\max}(\trace) > 0$ holds then there is a path $\path$ such that $\staterisk(\last{\path})> 0$ and $\obsfuntrace(\pi)(\trace)>0$. This implies that $\prpath{\sched}(\path\mid\trace)>0$ and in turn that $\tracerisk{\staterisk}(\trace)>0$. \qed

\item \emph{The qualitative weighted risk estimation problem and the quantitative  maximum risk estimation problem are logspace interreducible.}
From the qualitative weighted risk estimation problem to the qualitative maximum risk estimation problem is trivial and follows from the last lemma since the qualitative maximum risk estimation problem is an instance of the quantitative maximum risk estimation problem. 

In the other direction, we can reduce the quantitative maximum risk estimation problem to a qualitative weighted risk estimation problem as follows. We modify the state risk function by checking state-wise whether the risk at a state is above $\lambda$. If it is, then we set the state risk to $1$, and to zero otherwise. If $\tracerisk{\staterisk}^{\max}(\trace)>\lambda$ for some $\lambda \in \RRnn$, then there is a path $\path \in \boundedpaths{\system}{|\trace|}$ with $\staterisk(\last{\path})>\lambda$. This in turn means that  $\prpath[\sched]{\path \mid \trace} \cdot \staterisk(\last{\path})>0$. Furthermore, if $\tracerisk{\staterisk}^{\max}(\trace)\leq\lambda$, then for all $\path \in \boundedpaths{\system}{|\trace|}$, it holds that $\prpath[\sched]{\path \mid \trace} \cdot \staterisk(\last{\path})=0$. 

Both reduction can be performed in logspace. In the first direction we only need to set the lambda. In the second direction we just need to iterate over the states and change the risk function accordingly. \qed
\end{itemize}
\end{proof}

\section{Proof of \Cref{lem:kripkestructest}}
We show $\tracerisk{\staterisk}(\observationtrace) = \max_{s\in \sensbelief(\trace)} \staterisk(s)$.  
	We show this by proving the following three equalities. Let $\boundedpaths{{\trsys,\sched}}{|\tau|}  = \{ \path \in \boundedpaths{\trsys}{|\trace|} \mid \prpath[\sched]{\pi} > 0 \}$ and let $\mathit{DetSched}(\trsys)$ be the set of deterministic schedulers of $\trsys$: 
	\begin{align}
		\tracerisk{\staterisk}(\observationtrace) & = \sup_{\sched \in \scheds{\trsys}} \sum_{\path \in \boundedpaths{\trsys}{|\tau|}} \prpath[\sched]{\path \mid \trace} \cdot \staterisk(\last{\path}) \\
		 & =  \sup_{\sched \in \mathit{DetSched}(\trsys)} \sum_{\path \in \boundedpaths{\trsys}{|\tau|}} \prpath[\sched]{\path \mid \trace} \cdot \staterisk(\last{\path})  \\
		 &= \max_{\sched \in \mathit{DetSched}{\trsys}} \max_{\path \in \boundedpaths{{\trsys,\sched}}{|\tau|}}   \staterisk(\last{\path}) \\
		 &= \max_{s \in \sensbelief(\trace)} \staterisk(s) 
	\end{align}

We start with $(1)=(2)$. The case $(1)\ge (2)$ is trivial since $\mathit{DetSched}(\system) \subseteq \scheds{\system}$. The case $(1)\leq (2)$ follows from as convex sums are maximized on their vertices. The case $(2) = (3)$ holds because deterministic schedulers induce unique paths. More precisely, since all distributions in a Kripke structure are Dirac, a (deterministic) scheduler $\sched$ and a trace $\trace$ defines a unique path  $\path_\sched$ in the Kripke structure. 
	 Due to the absence of probabilities, for every path $\path$, we have $\prpath[\sched]{\path \mid \trace} \in \{ 0, 1\}$, and thus we have $\boundedpaths{{\trsys,\sched}}{|\tau|} = \{ \path_\sigma \}$.
	  For $(3)=(4)$,   the path $\pi_\sigma$ has a unique last state $s = \last{\path_\sigma}$ that is obtained having observed $\observationtrace$. 
	 Let $S_{\trace} = \{ \last{\pi} \mid \exists \sched. \{ \pi \} = \boundedpaths{{\trsys,\sched}}{|\tau|}\}$. It follows that 
	 \[ \max_{\sched \in \scheds{\trsys}} \max_{\path \in \boundedpaths{{\trsys,\sched}}{|\tau|}}   \staterisk(\last{\path}) =  \max_{s \in S_\trace} \staterisk(s) . \]
	 By induction over $|\trace|$, we can show $S_{\trace} = \sensbelief(\trace)$. 

\section{Proof of Thm~\ref{thm:extensionthm}}
\label{sec:exponentialnecessaryproof}

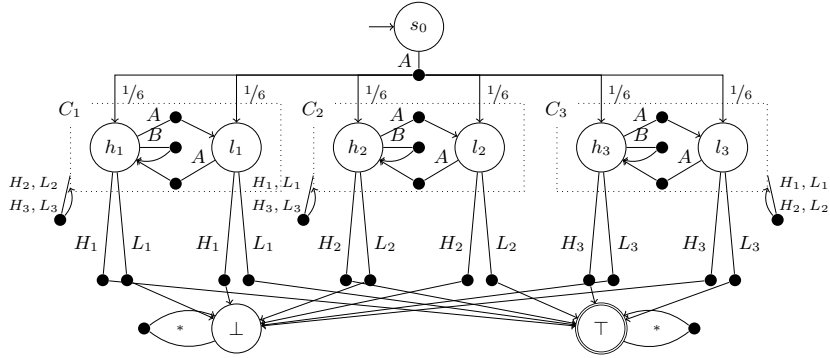
\begin{figure}[t]
\centering
\scalebox{0.8}{
\begin{tikzpicture}
	\node[state, initial, initial text=] at (0,2) (in) {$s_0$};
	\node[state] at (-5,0) (h1) {$h_1$};
	\node[state] at (-3,0) (l1) {$l_1$};
	 \node[state] at (-1,0) (h2) {$h_2$};
\node[state] at (1,0) (l2) {$l_2$};
\node[state] at (3,0) (h3) {$h_3$};
\node[state] at (5,0) (l3) {$l_3$};

	 \node[state] at (-3,-3) (bot) {$\bot$};
\node[state, accepting] at (3,-3) (top) {$\top$};

\node[fit=(h1)(l1), inner sep=9pt, draw, dotted] (C1) {};
\node[fit=(h2)(l2), inner sep=9pt, draw, dotted] (C2) {};
\node[fit=(h3)(l3), inner sep=9pt, draw, dotted] (C3) {};

\node[fill=white] at (C1.160) (c1l) {$C_1$};
\node[fill=white] at (C2.160) (c2l) {$C_2$};
\node[fill=white] at (C3.160) (c3l) {$C_3$};

\node[circle,fill,inner sep=2pt] at (-5.9,-1.2) (slf1) {};
\node[circle,fill,inner sep=2pt] at (-1.9,-1.2) (slf2) {};
\node[circle,fill,inner sep=2pt] at (5.9,-1.2) (slf3) {};

\node[circle,fill,inner sep=2pt] at (0,1.2) (astart) {};
\draw[-] (in) -- node[left,align=center] {$A$} (astart);

\draw[->] (astart) -| node[right,pos=0.7] {$\nicefrac{1}{6}$} (h1);
\draw[->] (astart) -| node[right,pos=0.7] {$\nicefrac{1}{6}$} (l1);
\draw[->] (astart) -| node[right,pos=0.7] {$\nicefrac{1}{6}$} (h2);
\draw[->] (astart) -| node[right,pos=0.7] {$\nicefrac{1}{6}$} (l2);
\draw[->] (astart) -| node[right,pos=0.7] {$\nicefrac{1}{6}$} (h3);
\draw[->] (astart) -| node[right,pos=0.7] {$\nicefrac{1}{6}$} (l3);

\node[circle,fill,inner sep=2pt] at (-4, 0.5) (h1A) {};
\node[circle,fill,inner sep=2pt] at (-4, 0) (h1B) {};
\node[circle,fill,inner sep=2pt] at (-4, -0.6) (l1A) {};
\node[circle,fill,inner sep=2pt] at (0, 0.5) (h2A) {};
\node[circle,fill,inner sep=2pt] at (0, 0) (h2B) {};
\node[circle,fill,inner sep=2pt] at (0, -0.6) (l2A) {};
\node[circle,fill,inner sep=2pt] at (4, 0.5) (h3A) {};
\node[circle,fill,inner sep=2pt] at (4, 0) (h3B) {};
\node[circle,fill,inner sep=2pt] at (4, -0.6) (l3A) {};

\draw[-] (h1) edge node[above] {$A$} (h1A);
\draw[->] (h1A) edge (l1);
\draw[-] (h1) edge node[above] {$B$} (h1B);
\draw[->] (h1B) edge[bend left] (h1);
\draw[-] (l1) edge node[above] {$A$} (l1A);
\draw[->] (l1A) edge (h1);
\draw[-] (h2) edge node[above] {$A$} (h2A);
\draw[->] (h2A) edge (l2);
\draw[-] (h2) edge node[above] {$B$} (h2B);
\draw[->] (h2B) edge[bend left] (h2);
\draw[-] (l2) edge node[above] {$A$} (l2A);
\draw[->] (l2A) edge (h2);
\draw[-] (h3) edge node[above] {$A$} (h3A);
\draw[->] (h3A) edge (l3);
\draw[-] (h3) edge node[above] {$B$} (h3B);
\draw[->] (h3B) edge[bend left] (h3);
\draw[-] (l3) edge node[above] {$A$} (l3A);
\draw[->] (l3A) edge (h3);

\node[circle,fill,inner sep=2pt] at (-5.2, -2.2) (h1H) {};
\node[circle,fill,inner sep=2pt] at (-4.8, -2.2) (h1L) {};
\node[circle,fill,inner sep=2pt] at (-3.2, -2.2) (l1H) {};
\node[circle,fill,inner sep=2pt] at (-2.8, -2.2) (l1L) {};
\node[circle,fill,inner sep=2pt] at (-1.2, -2.2) (h2H) {};
\node[circle,fill,inner sep=2pt] at (-0.8, -2.2) (h2L) {};
\node[circle,fill,inner sep=2pt] at (0.8, -2.2) (l2H) {};
\node[circle,fill,inner sep=2pt] at (1.2, -2.2) (l2L) {};
\node[circle,fill,inner sep=2pt] at (2.8, -2.2) (h3H) {};
\node[circle,fill,inner sep=2pt] at (3.2, -2.2) (h3L) {};
\node[circle,fill,inner sep=2pt] at (4.8, -2.2) (l3H) {};
\node[circle,fill,inner sep=2pt] at (5.2, -2.2) (l3L) {};

\draw[-] (h1) -- node[pos=0.7,left] {$H_1$} (h1H);
\draw[-] (h1) -- node[pos=0.7,right] {$L_1$} (h1L);
\draw[-] (l1) -- node[pos=0.7,left] {$H_1$} (l1H);
\draw[-] (l1) -- node[pos=0.7,right] {$L_1$} (l1L);
\draw[-] (h2) -- node[pos=0.7,left] {$H_2$} (h2H);
\draw[-] (h2) -- node[pos=0.7,right] {$L_2$} (h2L);
\draw[-] (l2) -- node[pos=0.7,left] {$H_2$} (l2H);
\draw[-] (l2) -- node[pos=0.7,right] {$L_2$} (l2L);
\draw[-] (h3) -- node[pos=0.7,left] {$H_3$} (h3H);
\draw[-] (h3) -- node[pos=0.7,right] {$L_3$} (h3L);
\draw[-] (l3) -- node[pos=0.7,left] {$H_3$} (l3H);
\draw[-] (l3) -- node[pos=0.7,right] {$L_3$} (l3L);

\draw[-] (C1.195) -- node[left,align=center] {\scriptsize{$H_2, L_2$}\\\scriptsize{$H_3, L_3$}} (slf1);
\draw[->] (slf1) edge[bend right] node[right] {} (C1.201);

\draw[-] (C2.195) -- node[left,align=center] {\scriptsize{$H_1, L_1$}\\\scriptsize{$H_3, L_3$}} (slf2);
\draw[->] (slf2) edge[bend right] node[right] {} (C2.201);

\draw[-] (C3.345) -- node[right,align=center] {\scriptsize{$H_1, L_1$}\\\scriptsize{$H_2, L_2$}} (slf3);
\draw[->] (slf3) edge[bend left] node[right] {} (C3.339);

\draw[->] (h1H) -- (top);
\draw[->] (h1L) -- (bot);
\draw[->] (l1H) -- (bot);
\draw[->] (l1L) -- (top);
\draw[->] (h2H)-- (top);
\draw[->] (h2L) -- (bot);
\draw[->] (l2H) -- (bot);
\draw[->] (l2L) -- (top);
\draw[->] (h3H) -- (top);
\draw[->] (h3L) -- (bot);
\draw[->] (l3H) -- (bot);
\draw[->] (l3L) -- (top);

\node[circle,fill,inner sep=2pt, right=of top] (topa) {};
\node[circle,fill,inner sep=2pt, left=of bot] (bota) {};

\draw[->] (bota) edge[bend left] (bot);
\draw[->] (topa) edge[bend right] (top);

\draw[-] (bota) edge[bend right] node[above] {$^{*}$} (bot);
\draw[-] (topa) edge[bend left] node[above] {$^{*}$} (top);

\end{tikzpicture}	
}

\caption{Construction for the correctness of Thm.~\ref{thm:extensionthm}, showing $\mdp_3$'. Observations  $\Obs = \Act$ with $\obsfun(s,\act) = \act$ if $\act \neq B$ and $\obsfun(s,B) = A$ for every $s$. Initial belief is $ A$, all probabilities are $1$, unless stated otherwise. Self-loops at the components $C_1,C_2,C_3$ reflect self-loops on the states in the components.}
\label{fig:m3ext}
\end{figure}

We extend the construction of Fig.~\ref{fig:m3}, to obtain the MDP $\mdp'_3$ in Fig.~\ref{fig:m3ext}.
By taking an action $H_i$ or $L_i$, the probability mass leaves both the low and high-state of a component $C_i$. In all other components, the probability mass remains. We can represent states with extended bit-strings,
concretely, we can think of 
\begin{align*}
& \{ h_1, l_3, \top  \mapsto \nicefrac{1}{3}, l_1, l_2, h_2, h_3, \bot \mapsto 0 \} \text{ as }1\cmark 0, \text{ and}\\
& \{  \bot \mapsto \nicefrac{2}{3}, l_3, \top  \mapsto \nicefrac{1}{3}, l_1, l_2, h_1, h_2, h_3, \bot \mapsto 0 \} \text{ as }\xmark\xmark 0.
\end{align*}
We define the risk as $r(\top) = 1$ and $r(s) = 0$ for all $s \neq \top$.
Thus, the belief $\cmark\cmark\cmark$ has maximal risk. 
Now, we show that if we have $B \colonequals \vertices{\estimatorND(AAA)} = \mathbb{B}^n$. 
Let us fix $\belief \colonequals 100$. 
Then $B' =\estimatorNDupdate(B \setminus \{ \belief \}, H_1L_2L_3)$ does not contain $\belief' \colonequals \cmark\cmark \cmark$, 
but $\vertices{\estimatorND(AAAH_1L_2L_3)}$ does. Thus, the estimated risk after $AAAH_1L_2L_3$ will be below the maximum if we remove $\belief'$ after seeing $AAA$.
Symmetrically, we can show that all other beliefs need to remain in $B$.\qed

%
%
\end{document}